\newtheorem{theorem}{Theorem}
\newtheorem*{theorem*}{Theorem}
\newtheorem{lemma}[theorem]{Lemma}
\theoremstyle{remark}
\theoremstyle{Solution}
\theoremstyle{Corollary}
\theoremstyle{Proposition}
\newtheorem{proposition}[theorem]{Proposition}
\theoremstyle{Sketch of the proof}
\newtheorem{rem}[theorem]{Remark}
\theoremstyle{definition}
\theoremstyle{plain}
\newtheorem{definition}[theorem]{Definition}
\newcommand{\idmap}{\operatorname{Id}}
\newcommand{\diff}{\operatorname{Diff}}
\title{On the six-vertex model's free energy}
\author{
Hugo Duminil-Copin\thanks{Institut des Hautes \'Etudes Scientifiques and Universit\'e Paris-Saclay}\ \thanks{Universit\'e de Gen\`eve}, Karol Kajetan Kozlowski\addtocounter{footnote}{1}\thanks{ENS Lyon}, 
Dmitry Krachun\addtocounter{footnote}{-3}\footnotemark,\\ 
Ioan Manolescu\addtocounter{footnote}{0}\thanks{University of Fribourg}, 
Tatiana Tikhonovskaia\addtocounter{footnote}{-2}\footnotemark}
\begin{document}

\maketitle 
\begin{abstract}
In this paper, we provide new proofs of the existence and the condensation of Bethe roots for the Bethe Ansatz equation associated with the six-vertex model with periodic boundary 
conditions and an arbitrary density of up arrows (per line) in the regime $\Delta<1$. As an application, we provide a short, fully rigorous computation of the free energy of 
the six-vertex model on the torus, as well as an asymptotic expansion of the six-vertex partition functions when the density of up arrows approaches $1/2$. This latter result is at the base of a 
number of recent results, in particular the rigorous proof of continuity/discontinuity of the phase transition of the random-cluster model, the localization/delocalization behaviour of 
the six-vertex height function when $a=b=1$ and $c\ge1$, and the rotational invariance of the six-vertex model and the Fortuin-Kasteleyn percolation.
\end{abstract}

\section{Introduction}

\subsection{Definition of the model}

The six-vertex model, first proposed by Pauling \cite{Pau} in 1935 to study the residual entropy of ice, became the archetypical example of a planar integrable model 
with Lieb's solution of the model in 1967 in its anti-ferroelectric and ferroelectric phases~\cite{Lie67,Lie67a,Lie67b,Lie67c} using the Bethe Ansatz. 
We refer to~\cite{Bax89,LieWu72,Resh10} for detailed expositions and reviews and to~\cite{Bax71} for the most general solution. The six-vertex model on the torus is defined as follows. 
For $N,M>0$ with $N$ even, let ${\mathbb T}_{N,M} := (\mathbb Z/N\mathbb Z)\times  (\mathbb Z/M\mathbb Z)$ be the $N$ by $M$ torus. An \emph{arrow configuration} $\omega$  is a choice of
orientation for every edge of $\mathbb T_{N,M}$. We say that $\omega$ satisfies the \emph{ice rule} (or equivalently that it is a {\em six-vertex configuration}) if every vertex of $\mathbb T_{N,M}$ 
has two incoming  and two outgoing  edges in $\omega$. These edges can be arranged in six different types around each vertex as depicted in Figure~\ref{fig:the_six_vertices}, hence the name of the model. 
One may easily check that the ice-rule guarantees that each horizontal line of vertical edges contains the same number of up arrows. From now on, let $\Omega(\mathbb T_{N,M})$ (resp.~$\Omega^{(n)}(\mathbb T_{N,M})$) 
be the set of six-vertex configurations (resp.~containing exactly $n$ up arrows on each line).

\begin{figure}[htb]
	\begin{center}
		\includegraphics[width=0.6\textwidth]{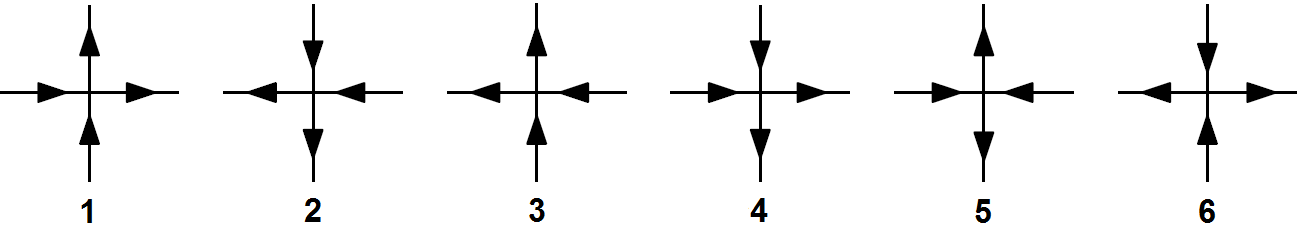}
	\end{center}
	\caption{The $6$ possibilities, or ``types'' of vertices in the six-vertex model.
	Each type comes with a weight $a_1,a_2,b_1, b_2, c_1,c_2$.}
	\label{fig:the_six_vertices}
\end{figure} 

For parameters $ a_1,a_2, b_1,b_2, c_1,c_2\ge0$, define the {\em weight} of a six-vertex  configuration $\omega$ to be 
\[
W_\mathrm{6V}(\omega) :=a_1^{n_1}a_2^{n_2}b_1^{n_3} b_2^{n_4}c_1^{n_5} c_2^{n_6},
\]
where $n_i$ is the number of vertices of $\mathbb T_{N,M}$ having type $i$ in $\omega$. In this paper, we choose to focus on  $a_1=a_2=a$, $b_1=b_2=b$ and $c_1=c_2=c$. Some of the results 
of this paper may extend to the asymmetric case and will be the object of a future paper.

Define the {\em partition functions} of the six-vertex model and of the six-vertex model with $n$ up arrows per line, respectively, by setting
\begin{align*}
Z(\mathbb T_{N,M},a,b,c)&:=\sum_{\omega\in\Omega(\mathbb T_{N,M})} W_{\rm 6V}(\omega),\\
Z^{(n)}(\mathbb T_{N,M},a,b,c)&:=\sum_{\omega\in\Omega^{(n)}(\mathbb T_{N,M})} W_{\rm 6V}(\omega).
\end{align*}
In the analysis of the model, it is customary to introduce the parameter 
\begin{equation}\label{eq:Delta}
\Delta:=\frac{a^2+b^2-c^2}{2ab}.
\end{equation}
Below, we consider the region of parameters $(a,b,c)$ such that $\Delta < 1$; see Figure~\ref{fig:phase_diagram} for the phase diagram of the model. 

\begin{figure}
\begin{center}
\includegraphics[width = 0.4\textwidth]{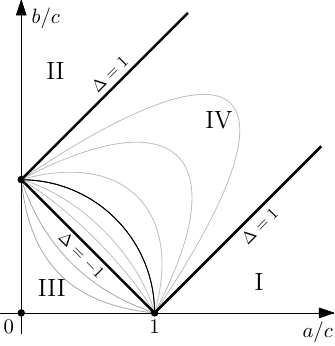}
\caption{The expected phase diagram of the six-vertex model contains four regions: I and II are called ferroelectric, III is antiferroelectric and IV is disordered. The latter two regions correspond to $\Delta < -1$ and $\Delta \in [-1,1]$, respectively. 
The present paper concerns regions III and IV only. The gray lines represent lines of constant $\Delta$; the black quarter-circle corresponds to $\Delta = 0$.}
\label{fig:phase_diagram}
\end{center}
\end{figure}

\subsection{Main results for the symmetric six-vertex model}

It appears convenient to adopt a parameterisation of the weights which makes transparent the connection with the algebraic Bethe Ansatz construction of the model's transfer matrix. We 
thus introduce auxiliary parameters $\theta\in (0,\pi)$, $r\in \mathbb{R}_+$ and $\zeta$ such that\footnote{The existence and uniqueness of $r$, $\zeta$ and $\theta$ is proved 
via basic computations. Notice that the parameter $r$ has no influence on the probabilistic behaviour of the model.} 
\begin{itemize}
\item for $-1 < \Delta < 1 $, $\Delta=-\cos \zeta $ with $\zeta \in (0,\pi)$,
\begin{align}
a\sin{\tfrac{\zeta}{2} } :=r\sin{(1-\tfrac{\theta}{\pi})\zeta}, \quad b\sin{\tfrac{\zeta}{2} }:=r\sin{ \tfrac{\theta\zeta}{\pi} }, \quad c:=2r\cos{ \tfrac{\zeta}{2} }, 
\label{ecriture parametrisation massless}
\end{align}

\item for $\Delta =- 1 $, 
\begin{align}
a := 2 r  \tfrac{\pi-\theta}{\pi}, \quad b := 2 r  \tfrac{\theta}{\pi}, \quad c:=2r, 
\label{ecriture parametrisation XXX}
\end{align}
 
\item for $ \Delta < -1 $, $\Delta=-\cosh \zeta $ with $\zeta \in  \mathbb{R}_+$,
\begin{align}
	a\sinh{\tfrac{\zeta}{2} } :=r\sinh{(1-\tfrac{\theta}{\pi})\zeta}, \quad b\sinh{ \tfrac{\zeta}{2} }:=r\sinh{ \tfrac{\theta\zeta}{\pi} }, \quad c:=2r\cosh{ \tfrac{\zeta}{2} }.
	\label{ecriture parametrisation massive}
\end{align}
\end{itemize}

The first result goes back to Lieb~\cite{Lie67a,Lie67b,Lie67c} and Sutherland~\cite{Sut67} and  deals with the per-site {\em free energy} defined by
\begin{equation}
f(a,b,c):=\lim_{N\rightarrow\infty}\lim_{M\rightarrow\infty}\tfrac1{MN}\log Z(\mathbb T_{M,N},a,b,c),\label{eq:def free energy}
\end{equation}
in which the limits may be taken in any order as established in~\cite{LieWu72}. The mentioned papers characterised the per-site free energy 
relying on the same strategy as the original paper~\cite{YangYang66} which deals with the XXZ quantum spin chain. At the time,    
the closed expressions for $f(a,b,c)$ were derived under the hypothesis of the so-called condensation of Bethe roots.  
As  will be discussed more precisely later on in the introduction, the condensation property has nowadays been rigorously established. Here, we develop 
an alternative technique for proving condensation which, on the one hand, turns out to be particularly effective for our goals and, on the other hand, allows one to go beyond what can be 
rigorously proven within the existing scope of techniques.

\begin{theorem}\label{thm:free energy}
For every $a\ge b> 0$ and $c\ge0$ such that $\Delta <1$ (\textit{c.f.}~\eqref{eq:Delta}), using the parameterisation~\eqref{ecriture parametrisation massless}--\eqref{ecriture parametrisation massive},
\begin{equation}\label{eq:free energy}
f(a,b,c)=\begin{cases}
\log b+\displaystyle \int\limits_{-\infty}^{+\infty}\frac1{2t}\frac{\sinh[\tfrac{2(\pi-\theta)}{\pi}\zeta t]}{\cosh[\zeta t]}\frac{\sinh[(\pi-\zeta)t]}{\sinh[\pi t]}dt&\text{ if $-1<\Delta<1$},\smallskip\\
\log b  + \displaystyle \int\limits_{-\infty}^{+\infty} \frac{\sinh[\tfrac{2(\pi-\theta)}{\pi} t]}{\cosh[ t]}\frac{ {\mathrm e}^{-|t|} }{2t} dt   &\text{ if $\Delta=-1$},\smallskip\\
\log a + \displaystyle \tfrac{\zeta \theta }{ \pi } + \sum\limits_{n=1}^\infty \frac{ {\mathrm e}^{-|n|\zeta } }{ n }  \frac{ \sinh[2 n \zeta \theta / \pi  ] }{ \cosh(n\zeta)  }       &\text{ if $\Delta < -1$.}
\end{cases}
\end{equation}
In particular 
$f(1,1,1)=\tfrac{3}{2}\log(\tfrac43)$ and $f(1,1,2)=2\log[2\Gamma(\tfrac54)/\Gamma(\tfrac{3}{4})]$, with $\Gamma$ the gamma function.
\end{theorem}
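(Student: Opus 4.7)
The plan is to reduce the free energy to the leading eigenvalue $\Lambda_{\max}(N)$ of the row-to-row transfer matrix $T_N(a,b,c)$ acting on $(\mathbb{C}^2)^{\otimes N}$. Since $Z(\mathbb{T}_{N,M},a,b,c) = \mathrm{tr}\,T_N^M$ and the two limits in~\eqref{eq:def free energy} commute by~\cite{LieWu72}, one has $f(a,b,c) = \lim_{N\to\infty} \tfrac{1}{N}\log \Lambda_{\max}(N)$. The transfer matrix preserves the number $n$ of up arrows per row, hence decomposes along the sectors $\Omega^{(n)}(\mathbb{T}_{N,M})$. A first step is to show that in the regime $\Delta < 1$ with $a\geq b$ the dominant eigenvalue sits asymptotically in the half-filled sector $n = N/2$; this is a monotonicity and convexity comparison between the sector-wise partition functions $Z^{(n)}$, in the spirit of Lieb's original treatment.

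Within the sector $n = N/2$, the algebraic Bethe Ansatz parametrises the eigenvalues of $T_N$ by solutions $(\lambda_1,\dots,\lambda_{N/2})$ of a coupled system of transcendental equations (the Bethe equations). A variational analysis identifies the ground-state configuration of roots, which is real or lies on a horizontal contour in the complex plane depending on the regime of $\Delta$. The heart of the argument is then the \emph{condensation} statement: after suitable rescaling, the empirical measure $\tfrac{1}{N}\sum_i \delta_{\lambda_i}$ converges to an absolutely continuous density $\rho$, supported on $\mathbb{R}$ for $-1 \leq \Delta < 1$ and on an interval $[-Q,Q]$ for $\Delta < -1$, and $\rho$ satisfies a linear integral equation of convolution type with an explicit trigonometric or hyperbolic kernel depending on $\zeta$ and $\theta$. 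Establishing this condensation rigorously is the principal obstacle, and as announced in the abstract, it is at this step that the paper's new method enters.

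Granted condensation, the per-site free energy is expressed as $\log b$ (or $\log a$ when $\Delta < -1$) plus an explicit linear functional of $\rho$. For $-1 \leq \Delta < 1$ the integral equation is solved on $\mathbb{R}$ by Fourier transform, giving $\widehat\rho$ as a ratio of hyperbolic sines and cosines in the variables $\zeta$ and $\theta/\pi$; Parseval's identity then converts the formula for $f$ into the integrals displayed in~\eqref{eq:free energy}, and the case $\Delta = -1$ is recovered as the $\zeta \to 0$ limit of the trigonometric formula. For $\Delta < -1$ the equation is of convolution type on a finite interval with a periodic-like kernel, and is solved by Fourier series on $[-Q,Q]$, producing the discrete sum in the statement. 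The closed forms at $(a,b,c)=(1,1,1)$ (where $\zeta = 2\pi/3$, $\theta = \pi/2$) and at $(1,1,2)$ (where $\Delta = -1$, $\theta = \pi/2$) follow by direct evaluation of the resulting integrals: the first reduces to a classical hyperbolic integral equal to $\tfrac{3}{2}\log\tfrac{4}{3}$, and the second to the stated gamma-function value via Malmst\'en's integral representation of $\log\Gamma$.
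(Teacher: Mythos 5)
Your proposal follows essentially the same route as the paper: reduction to the largest transfer-matrix eigenvalue in the half-filled sector, condensation of the ground-state Bethe roots to the solution of the linear integral equation~\eqref{eq:cBE}, and explicit evaluation of the resulting functional of $\rho$ by Fourier transform (for $-1\le\Delta<1$) or Fourier series (for $\Delta<-1$), with the special values at $(1,1,1)$ and $(1,1,2)$ obtained by direct computation. The one technical wrinkle you elide is that for $-1<\Delta<0$ the paper's condensation/analyticity results do not reach $n=N/2$ directly, so the paper first compares $Z^{(N/2)}$ with $Z^{(n)}$ for $n=N/2-C_0$ via an arrow-reversal inequality before identifying the limit.
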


The value of $f(1,1,1)$ was first obtained by Lieb and corresponds to the residual entropy of square ice. 

Our second result deals with the following extension of the per-site free energy to generic values of $n$ and $N$:
\[
f_N^{(n)}(a,b,c):=\lim_{M\rightarrow\infty}\tfrac{1}{NM}\log Z^{(n)}(\mathbb T_{N,M},a,b,c).
\]
It provides a characterisation of the subleading corrections to $f_N^{(n)}(a,b,c)$ as $n, N \rightarrow + \infty$ in such a way that $n/N \rightarrow 1/2$.  
The condition on $n$ and $N$ appearing in the statement below is technical and takes its origin in the statements  of the subsequent theorems in this paper.

\begin{theorem}\label{thm:quantitative}
For $N \geq 2$ even and $a,b,c\ge0$ leading to $\Delta < 1$ (\textit{c.f.}~\eqref{eq:Delta}), 
there exist constants $C,C(\zeta),C'(\zeta,\theta)\in(0,\infty)$ such that for every 
\begin{align}\label{eq:n_cond_quant}
	n\le \tfrac 12N-C\min\{\zeta^{-2},\log(N)^2\},
\end{align}
using the parametrisation~\eqref{ecriture parametrisation massless}--\eqref{ecriture parametrisation massive}, we have 
\begin{align}\label{eq:quantitative free energy}
f_N^{(n)}(a,b,c)=f(a,b,c)+O(\tfrac1N)-(1+o(1))\begin{cases}C(\zeta)\sin{\theta}\big(1-\tfrac {2n}N\big)^2&\text{ if $-1\le\Delta< 1$},\\
C'(\zeta,\theta)(1-\tfrac{2n}N)&\text{ if $\Delta < -1$},\end{cases}\end{align}
where $o(1)$ means a quantity tending to 0 as $n/N$ tends to $1/2$.
\end{theorem}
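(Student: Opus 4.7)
The plan is to reduce the statement to an asymptotic analysis of the dominant transfer-matrix eigenvalue at filling $\alpha_n := n/N$. In the row-to-row transfer-matrix representation one has
\[
	f_N^{(n)}(a,b,c) \;=\; \tfrac{1}{N}\log \Lambda_{N,n},
\]
where $\Lambda_{N,n}$ is the largest eigenvalue in the $n$-particle sector, and the Bethe Ansatz expresses $\log\Lambda_{N,n}$ explicitly as a sum $\sum_j h(\lambda_j)$ over the $n$ Bethe roots $\lambda_1,\dots,\lambda_n$. Using the condensation of Bethe roots---which the earlier sections of the paper establish with quantitative error bounds---together with an Euler--Maclaurin comparison of this Riemann-type sum with its continuum counterpart, one should obtain
\[
	f_N^{(n)}(a,b,c) \;=\; \mathcal F(\alpha_n) + O(1/N),
\]
uniformly in $n$ satisfying \eqref{eq:n_cond_quant}, where $\mathcal F(\alpha)$ is the explicit functional of the limit density $\rho_\alpha$ supported on the Fermi interval $[-Q_\alpha,Q_\alpha]$, and where $\alpha \mapsto (Q_\alpha,\rho_\alpha)$ is determined by the normalisation $\int_{-Q_\alpha}^{Q_\alpha} \rho_\alpha = \alpha$ together with the linear integral equation arising from condensation.

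The heart of the proof is then the Taylor expansion of $\mathcal F$ around $\alpha=1/2$. The Bethe Ansatz framework provides a variational principle for $\mathcal F(\alpha)$: up to an $N$-independent Jacobian, $\mathcal F'(\alpha)$ is expressed in terms of the \emph{dressed energy} $\varepsilon_\alpha$, the unique solution on $[-Q_\alpha,Q_\alpha]$ of the linear integral equation with kernel equal to the two-particle scattering phase and source given by the bare excitation energy. By the arrow-reversal symmetry $f_N^{(n)}=f_N^{(N-n)}$, one has $\mathcal F(\alpha)=\mathcal F(1-\alpha)$, so the sign and order of the correction are entirely controlled by the behaviour of $\varepsilon_{1/2}$ at the boundary of the condensation interval. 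In the disordered regime $-1\le\Delta<1$, $Q_{1/2}=\infty$, the dressed energy decays as $\varepsilon_{1/2}(Q)\sim \mathrm{const}\cdot\sin\theta\cdot e^{-|Q|}$, while $1-2\alpha \sim \int_{|x|>Q_\alpha}\rho_{1/2}$ is likewise exponentially small in $Q_\alpha$; eliminating $Q_\alpha$ between these two asymptotic relations produces the correction $-C(\zeta)\sin\theta\,(1-2\alpha)^2$, with $C(\zeta)$ extracted from a Wiener--Hopf factorisation of the half-line integral equation. In the antiferroelectric regime $\Delta<-1$, $Q_{1/2}$ is finite, $\varepsilon_{1/2}(Q_{1/2})>0$ equals the spectral mass gap (depending on both $\zeta$ and $\theta$), and integration of $\mathcal F'$ from $\alpha$ to $1/2$ yields the linear correction $-C'(\zeta,\theta)(1-2\alpha)$.

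The main obstacle will be the gapless regime, for two reasons. First, the explicit identification of $C(\zeta)$ requires a Wiener--Hopf solution of an integral equation on a half-line whose kernel carries all the dependence on $\zeta$; the constant emerging from this solution is delicate to track. Second, the $(1+o(1))$ factor in the statement must be justified by quantitative stability estimates on $\varepsilon_\alpha$ and on the Fermi boundary $Q_\alpha$ as $\alpha \uparrow 1/2$, for which the technical window \eqref{eq:n_cond_quant} is tailored: it is precisely the range of $\alpha_n$ for which the $(1-2\alpha_n)^2$ correction dominates the $O(1/N)$ error coming from the Euler--Maclaurin step. In the gapped case the same circle of ideas applies but is softer, since $Q_{1/2}$ is bounded away from both $0$ and $\infty$, so that the implicit-function analysis of $\alpha \mapsto Q_\alpha$ and of $\varepsilon_\alpha$ is a routine perturbation of a well-posed system.
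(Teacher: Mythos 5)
Your plan is correct and rests on the same mechanism as the paper's proof: once condensation reduces $f_N^{(n)}$ to the continuum functional up to $O(1/N)$ (this is \eqref{eq:expression}), the correction at filling $n/N$ is governed by the dressed energy at the Fermi edge times the density missing beyond the edge, with the $\sin\theta$ prefactor coming from the exponential decay of the dressed energy and the constant from a half-line Wiener--Hopf equation. The packaging differs in one respect worth noting: rather than differentiating the continuum functional $\mathcal F$ in $\alpha$ (which is awkward in the gapless regime, where $Q_{1/2}=\infty$ and $\alpha\mapsto Q_\alpha$ degenerates at $\alpha=1/2$, and which would require justifying a Yang--Yang-type formula for $\mathcal F'$), the paper works directly in the variable $q=Q(n/N)$ and computes the finite difference $\delta f(q)$ of the two continuum integrals. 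The resolvent form \eqref{eq:second continuum equation} converts this into $\int_{|x|>q}G(x)\rho(x|q)\,dx$ with $G=(\mathrm{id}-\mathcal R)C$ --- precisely your dressed energy at half-filling --- and then a residue computation on $\widehat G$ (poles at $\pm i\pi/\zeta$ with residues $\mp2i\sin\theta$) combined with the quantitative edge expansion $\rho(x|q)=\rho(x)+e^{-q\pi/\zeta}\big[(T-\mathfrak{e})(q-x)+(T-\mathfrak{e})(-q-x)+\delta T(x)\big]$ of Proposition~\ref{lem:asymptotic |Delta|<1} yields the quadratic correction with explicit constant proportional to $C_\Delta^{-2}\int_0^\infty e^{-x\pi/\zeta}T(x)\,dx$; the gapped case is handled by a direct Fourier-series positivity argument for $G(\tfrac\pi2)$. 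Two small corrections to your sketch: the dressed energy decays like $e^{-\pi|Q|/\zeta}$, not $e^{-|Q|}$; and the window \eqref{eq:n_cond_quant} is not chosen so that the quadratic correction dominates the $O(1/N)$ error (the statement is in fact only informative when $N/2-n$ exceeds $\sqrt N$), but is inherited from Theorems~\ref{thm:analyticity} and~\ref{thm:Delta=-1}, i.e., from the range of $(n,N,\Delta)$ in which the analytic condensed family of Bethe roots, and hence \eqref{eq:expression}, is available.
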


Notice that for $\Delta \in [-1,1)$,~\eqref{eq:quantitative free energy} only gives meaningful information when $\frac{N}{2} - n$ exceeds~$\sqrt N$.

This extension has important applications for the six-vertex model and other related models.  
The six-vertex model lies at the crossroads of a vast family of two-dimensional lattice models;
for instance, it has been related to the dimer model, the Ising and Potts models, Fortuin-Kasteleyn (FK) percolation, the loop $O(n)$ models, the Ashkin-Teller models, random permutations, stochastic growth models, 
quantum spin chains, to cite but a few examples. Among such links, one can use the Baxter-Kelland-Wu mapping between the six-vertex model and FK percolation~\cite{BKL} to deduce from Theorem~\ref{thm:quantitative} and the dichotomy result of~\cite{DST} 
that the phase transition of FK percolation on the square lattice is continuous if the cluster weight $q$ satisfies $1\le q\le 4$, and is discontinuous for $q>4$. 
We refer to the papers where the results were proved (using alternative methods) for additional details~\cite{DST,DGHMT}. 
It should be mentioned that the continuity result of~\cite{DST} may be deduced directly from Theorem~\ref{thm:quantitative} using the same procedure as in~\cite{DKMT20}. 
In the same spirit, the results can be used to derive dimerisation properties of the anti-ferromagnetic Heisenberg chain~\cite{AizDum20}.

A second application of our results is related to the height function $h$ of the six-vertex model, 
which can be proved to be localised (meaning that the variance of $h(x)-h(y)$ is bounded uniformly in $|x-y|$) whenever $a=b=1$ and $c>2$, and delocalised (meaning that the variance of $h(x)-h(y)$ tends to infinity logarithmically fast in $|x-y|$) when $a=b=1$ and $1\le c\le 2$. We refer to~\cite{Hom,DKMT20,GlaPel19} for more details. 
It is conjectured more generally that the height-function is localised when $\Delta < -1$, and delocalised when $-1\le \Delta<1$. 
This property is closely related to the existence of a massive ($\Delta < -1$) and massless  ($-1 \le \Delta < 1$) regime in the XXZ spin-$1/2$ Heisenberg  chain; there, the ground state correlation functions of local operators at distance $m$ decay exponentially fast in $m \rightarrow + \infty$ in the massive regime  and algebraically in $m$ in the massless regime. Indeed, one can show that the XXZ spin-1/2 Heisenberg Hamiltonian ground state generating function of the longitudinal spin-spin correlations does coincide with the generating function of variances of the height function of the six vertex model.  
Thus, the power-law decay of the correlators in the XXZ chain translates to the logarithmic growth of the variance of the height function for the six-vertex model.

Finally, another use of our results is in~\cite{DKKMO20}, where a refined version of Theorem~\ref{thm:quantitative} (see Section~\ref{sec:refined}) is employed to show that the correlations of the height function of the six-vertex model are invariant under rotations in the scaling limit, when $a=b=1$ and $ \sqrt3\le c\le 2$. This rotation invariance should in fact hold for every $c\in [1,2]$ (but this has not been proven yet) and be wrong for $c>2$ (when the height function localises, as discussed above). The argument of~\cite{DKKMO20} involves the FK percolation representation of the six-vertex model, and the rotational invariance result also applies to critical FK percolation on $\mathbb Z^2$ with cluster weight $q \in [1,4]$.

\subsection{Transfer matrix of the six-vertex model and the Bethe Ansatz}

In order to understand the large scale asymptotics of $Z^{(n)}(\mathbb T_{N,M},a,b, c)$ with $0\le n\le N/2$, one  introduces the {\em transfer matrix} $V_N=V_N(a,b,c)$ (that we do not write explicitly here; see {\em e.g.}~\cite{Deg03}) 
defined as an endomorphism of the $2^N$-dimensional real vector space spanned by the basis $\big\{\mathbf\Psi_{\vec x}\big\}_{\vec x}$, where  $\vec x=(x_1,\dots,x_n)$ with $1\le x_1<\dots<x_n\le N$, $0\le n\le N$ (below, we use $|\vec x|:=n$ for the {\em length} of $\vec x$) 
and $\mathbf{\Psi}_{\vec x}=(\Psi_{\vec x}(1),\dots, \Psi_{\vec x}(N) )\in \{\pm 1\}^{N}$ is given by
$$ 
	\Psi_{\vec x}(i) := 
	\begin{cases}
		+1 \quad \text{ if $i \in \{x_1,\dots, x_{|\vec x|}\}$},\\
		-1 \quad \text{ if $i \notin \{x_1,\dots, x_{|\vec x|}\}$}.
	\end{cases}
$$
 In particular, one finds that 
\begin{align}
Z(\mathbb T_{N,M},a,b, c)&=\mathrm{Trace}[V_N(a,b,c)^M],\nonumber\\
 Z^{(n)}(\mathbb T_{N,M},a,b, c)&=\mathrm{Trace}[V_N^{(n)}(a,b,c)^M],\label{eq:trace}
\end{align}
where $V_N^{(n)}(a,b,c)$ is the block of the matrix $V_N(a,b,c)$ restricted to the vector space spanned by the $\mathbf{\Psi}_{\vec x}$ with $|\vec x|=n$. 
This vector space is indeed stable by $V_N(a,b,c)$ because of the conservation of the number of up arrow per horizontal line.  
In light of the above displayed equation, we have a clear interest in studying the spectral properties of $V_N(a,b,c)$ and $V_N^{(n)}(a,b,c)$.  
Standard arguments of rigorous statistical mechanics, see \textit{e.g.}~\cite{LieWu72}, allow one to conclude that 
$$
f(a,b,c) \, = \, \lim_{N\rightarrow + \infty}  \tfrac{1}{N}  \log \Lambda_{N}(a,b,c)  \qquad \textrm{and}\qquad  f_N^{(n)}(a,b,c)= \tfrac{1}{N}  \log \Lambda_{N}^{(n)}(a,b,c) \;, 
$$
where $\Lambda_{N}(a,b,c)$ and $\Lambda_{N}^{(n)}(a,b,c)$ are the largest eigenvalues of $V_N(a,b,c)$ and $V_N^{(n)}(a,b,c)$, respectively. Note that since  $V_N^{(n)}(a,b,c)$ is a Perron-Frobenius matrix, \textit{c.f.}~\textit{e.g.}~\cite{Lie67b}, 
$\Lambda_{N}^{(n)}(a,b,c)$ is the Perron-Frobenius eigenvalue of $V_N^{(n)}(a,b,c)$. The full transfer matrix $V_N(a,b,c)$ is not Perron-Frobenius, but it may be shown that its single largest eigenvalue is $\Lambda_{N}^{(0)}(a,b,c)$.

The coordinate Bethe Ansatz, introduced by Bethe~\cite{Bethe} in 1931, provides mathematicians and physicists with a powerful way of obtaining eigenvalues
of one-dimensional quantum models and of the transfer matrices of certain two-dimensional lattice models. 
In particular, Orbach~\cite{Orb} put it in a form allowing one to study the eigenvalues of the XXZ spin-1/2 Heisenberg chain, a model sharing the same eigenvectors as the six-vertex transfer matrix, \textit{see}~\cite{McCoyWu68} for the explanation of this last fact.  
Further, since the visionary work of the Leningrad School~\cite{FaddeevSklyaninTakhtajan79},
the coordinate Bethe Ansatz has been put into a fully algebraic framework, called nowadays the algebraic Bethe Ansatz, which is deeply connected with the representation theory of quantum groups. 
This picture strongly simplified the analysis of integrable models. 

We now summarise the program corresponding to the implementation of the Bethe Ansatz to understand the asymptotics of the largest eigenvalue of $V_N^{(n)}(a,b,c)$.
The survey~\cite{BetheAnsatz1} contains an elementary derivation of Bethe's Ansatz intended for probabilists, 
and is a useful reference for most of what is discussed above. 

\begin{mdframed}[style=MyFrame]
    \begin{center}{{\em The Bethe Ansatz approach to the dominant eigenvalue}}\end{center}
    
    \paragraph{Step 1} Fix distinct integers or half-integers, depending on the parity of $n$, $(n_1,\dots, n_{n})$ 
    and consider a solution $\pmb\lambda=(\lambda_1,\dots, \lambda_n) \in \mathbb R^n$ 
    to the following set of equations called the {\em logarithmic Bethe equations}
    \begin{equation}
    	\frac{N}{2\pi}\mathfrak{p}(\lambda_i) - \frac{1}{2\pi } \sum_{j = 1}^n \vartheta(\lambda_i - \lambda_j) = n_i,\qquad\forall \; 1\le i\le n,
    	\label{eq:BE_n}
    \end{equation}
    where $\mathfrak{p}$ and $\vartheta$ are defined in Appendix~\ref{sec:appendix0}. 
    While these functions do depend on $\Delta$ and have quite different expressions in the regimes $\Delta<-1$, $\Delta=-1$ and $|\Delta|<1$, 
    we  shall keep this dependence implicit. The coordinates of solutions to~\eqref{eq:BE_n} are called {\em Bethe roots}.
    
    \paragraph{Step 2} Consider the vector
    \[ 
    \Psi_N^{(n)}(\pmb\lambda)  := \sum_{|\vec x|=n}\psi(\vec x|\pmb\lambda )\, \boldsymbol{\Psi}_{\vec x} \, , 
    \]
    for which $\psi(\vec x |\pmb\lambda )$ is defined for every $\vec x$ with $|\vec x|=n$ by
    \begin{align}\label{eq:BA_eigenvector}
   		\psi(\vec x |\pmb\lambda ) 
   		:= \sum_{\sigma \in \mathfrak{S}_n}  \varepsilon(\sigma)  \prod\limits_{k=1}^n \mathrm{e}^{i \mathfrak{p}(\lambda_{\sigma(k)})x_k }
 		\; \prod\limits_{ k < \ell}^{ n}  \mathfrak{s}( \lambda_{\sigma(\ell)}- \lambda_{\sigma(k)} ) \, ,
   	\end{align}
    where $\mathfrak{S}_n$ is the symmetric group on $n$ elements,  $\varepsilon(\sigma)$ is the signature of the permutation $\sigma$, and 
    \begin{equation}
	    \mathfrak{s}(x) := \left\{ \begin{array}{ccc}   \sinh(i \zeta +x)   &   -1 < \Delta < 1,  \\ 
    	i + x    &  \Delta =1,  \\ 
        \sin(i \zeta +x)   &    \Delta < - 1 .   \end{array}  \right. 
    \end{equation}
    The Bethe Ansatz guarantees that for a solution to~\eqref{eq:BE_n} 
    which has pairwise distinct coordinates that lie away from the singularities of $\mathfrak{p}$	and $\vartheta$, 
   	\begin{equation}\label{eq:maybe eigenvalue}
	    V_N^{(n)}(a,b,c)\Psi_N^{(n)}(\pmb\lambda) =\Lambda^{(n)}_N(\pmb\lambda) \Psi_N^{(n)}(\pmb\lambda) ,
    \end{equation} 
    where $\Lambda_N^{(n)}(\pmb\lambda) $ is given by the formula
    \begin{equation}\label{eq:eigenvalue}
    	\Lambda_N^{(n)}(\pmb\lambda) :=
	    \displaystyle a^N \, \prod\limits_{j=1}^n L(\lambda_j) \, + \,  b^N \, \prod\limits_{j=1}^n M(\lambda_j) \; , 
   	\end{equation}
    in which
    \begin{equation*}
	    L(\lambda):=    
	    \left\{ \begin{array}{c }  
	    - \frac{\sinh\big(\lambda - \tfrac{i}{2}\zeta - \tfrac{i}{\pi}\theta\zeta\big)}{\sinh\big(\lambda + \tfrac{i}{2}\zeta - \tfrac{i}{\pi}\theta\zeta\big)}\vspace{2mm}  \\
	    - \frac{\lambda -  \tfrac{i}{2} - \tfrac{i}{\pi} \theta  }{ \lambda +  \tfrac{i}{2} - \tfrac{i}{\pi} \theta }      \vspace{1mm}\\ 
		- \frac{\sin(\lambda - \tfrac{i}{2} \zeta - \tfrac{i}{\pi} \theta \zeta)}{\sin(\lambda + \tfrac{i}{2}\zeta - \tfrac{i}{\pi}\theta\zeta)}     
		\end{array}  \right. 
    	\; \text{and} \quad 
	    M(\lambda):=    \left\{ \begin{array}{ccc}  
	    - \frac{ \sinh(\lambda + \tfrac{3 i}{2}\zeta - \tfrac{i}{\pi}\theta\zeta)}{\sinh(\lambda + \tfrac{i}{2}\zeta - \tfrac{i}{\pi}\theta\zeta)} \quad &  -1 < \Delta < 1, \vspace{1mm} \\ 
  		- \frac{\lambda +  \tfrac{3 i}{2} - \tfrac{i}{\pi} \theta  }{ \lambda +  \tfrac{i}{2} - \tfrac{i}{\pi} \theta  }  & \Delta =1, \vspace{1mm}\\ 
        - \frac{ \sin(\lambda + \tfrac{ 3 i}{2} \zeta - \tfrac{i}{\pi} \theta \zeta ) }{ \sin(\lambda + \tfrac{i}{2} \zeta - \tfrac{i}{\pi} \theta \zeta ) }  &  \Delta < - 1 .  
        \end{array} \right. 
	\end{equation*}
        
    \paragraph{Step 3} 
    Show that for the specific choice of (half-)integers $n_i \equiv I_i := i-\tfrac{n+1}{2}$ for $1\le i\le n$, 
    the vector $\Psi_N^{(n)}(\pmb\lambda)$ produced by Step 2 is the Perron-Frobenius eigenvector of $V_N^{(n)}(a,b,c)$. 
    
    
    \paragraph{Step 4} Perform a large $n,N$ asymptotic expansion  of the formula in~\eqref{eq:eigenvalue} to conclude.
\end{mdframed}
\bigbreak	

Note that the Bethe equations ensure that the Bethe roots are not poles of $L$ or $M$, so that $\Lambda_N^{(n)}(\pmb\lambda)$ is indeed well defined. 
Also notice that the Bethe equations and the resulting vector $\Psi_N^{(n)}$ only depend on $\Delta$ (or equivalently on $\zeta$); 
the only dependence on $a$ and $b$ (or equivalently on $\theta$) is in the formula for $\Lambda_N^{(n)}(\pmb\lambda)$.

At this stage, implementing the above program rigorously requires particular attention at certain points, namely:

\medbreak\noindent In Step 1, for a given choice of $\pmb n= (n_1,\dots, n_n)$, one must prove the existence of solutions 
to~\eqref{eq:BE_n}. In the regime $\Delta < 1$, Yang and Yang~\cite{YangYang66} proved the existence of Bethe roots when $n_i=I_i$ as above. Then Griffiths~\cite{Gri}
established the existence of solutions to a certain class of (half-)integers $\pmb n$. 
More recently, Kozlowski~\cite{KK} established the existence of solutions, as well as their
uniqueness when $N$ is large enough, for a wide class of (half-)integers $\pmb n$ describing the so-called particle-hole excitations. 

\medbreak\noindent In Step 2, in order to conclude from~\eqref{eq:maybe eigenvalue} that $\Lambda_N^{(n)}(\pmb \lambda)$ is an eigenvalue of $V_N^{(n)}(a,b,c)$, 
it must be shown that the Bethe roots' coordinates are pairwise distinct and that $\Psi_N^{(n)}(\pmb \lambda)$ is non-zero. This was shown to hold for the solution $\pmb  \lambda$ associated with $n_i=I_i$ by Yang and Yang~\cite{YangYang66}. 
For solutions having pairwise distinct coordinates that are associated with other choices of (half-)integers $\pmb n$ and which satisfy some form of condensation, \textit{c.f.} later on, 
the non-vanishing of $\Psi_N^{(n)}(\pmb \lambda)$ for $N$ large enough may be proved using the determinant representation for the norm of $\Psi_N^{(n)}(\pmb \lambda)$,
which was conjectured in~\cite{GCW81,Kor82} and rigorously proven in~\cite{KMT99,Sla97}. 

\medbreak\noindent In Step 3, one should argue that the vector $\Psi_N^{(n)}(\pmb \lambda)$ obtained using the specific choice of (half-)integers $I_i$ is indeed the Perron-Frobenius eigenvector of $V_N^{(n)}(a,b,c)$. This was first conjectured by Hulth\'en~\cite{Hul38} and was established by Yang and Yang~\cite{YangYang66}. 
Checking that $\Psi_N^{(n)}(\pmb \lambda)$ is the Perron-Frobenius eigenvector is reasonably simple for $\Delta$ equal to 0 or $-\infty$ (for $\Delta=-\infty$ and general $n$ this actually does require some effort). In order to extend the result to an interval of values of $\Delta$, one may prove the continuity 
or analyticity of $\Psi_N^{(n)}(\pmb \lambda)$ as a function of $\Delta$. If continuity is used, then one should additionally prove that $\Psi_N^{(n)}(\pmb \lambda)$ does not vanish outside of a discrete set of values of $\Delta$.  

\medbreak\noindent In Step 4, in order to perform the asymptotic expansion, one needs to prove some form of {\em condensation} of the Bethe roots $\pmb \lambda$, \textit{i.e.}~the convergence of the point
measure $\texttt{L}_N^{(\pmb \lambda)}=\tfrac{1}{N} \sum_{a=1}^{n} \delta_{\lambda_a}$ towards a given measure in the large $N$ limit. To be more precise, 
we should first introduce the {\em continuum Bethe equation} whose solution allows one to characterise the limiting measure.  
For $q \in [0,\infty]$ when $|\Delta| \leq 1$ and $q \in [0,\pi/2]$ when $\Delta <-1$, define $\rho(\cdot|q)$  
as the solution (the unique solvability was thoroughly discussed, by different methods, in~\cite{DugGohKoz14,KK,YangYang66b}) to the linear integral equation
\begin{align}\label{eq:cBE} 
	&\rho(\lambda| q) + \int_{-q}^q K(\lambda-\mu) \rho(\mu | q)d\mu = \xi(\lambda),\qquad\forall \lambda\in\mathbb R, 
\end{align}
with  $K:=\tfrac1{2\pi}\vartheta'$ and $\xi:=\tfrac1{2\pi}\mathfrak{p}'$. 

When $n/N \rightarrow m \in [0,1/2]$ as $n,N\rightarrow +\infty$, the point measure $\texttt{L}_N^{(\pmb \lambda)}$ associated with the solution $\pmb{\lambda}$ to~\eqref{eq:BE_n}
corresponding to the choice of (half-)integers $n_i=I_i$ converges weakly towards $\rho\big( \lambda | Q(m) \big)  \pmb{ \mathbb{1} }_{ [-Q(m);Q(m)]}(\lambda) d \lambda$, in which 
$Q(m)$ is the unique solution to 
\begin{align}
\int\limits_{ - Q(m) }^{ Q(m) } \rho( \lambda|Q(m) ) d\lambda  = m. \label{eq:cBE2}
\end{align}
The existence and uniqueness of $Q(m)$ has been first proven in~\cite{DugGohKoz14}. 
We also refer to Appendixes~\ref{sec:appendix1} and~\ref{sec:appendix2} for a proof of $Q(m)$'s existence. 
The uniqueness of $Q(m)$ may be obtained as a consequence of Theorem~\ref{thm:existence} below, and will be discussed thereafter. 
For future reference, it may be useful to note that $Q$ is increasing and $Q(1/2) = \pi/2$ when $\Delta < -1$ and $Q(1/2) = \infty$ when $|\Delta| \leq 1$.

Condensation of Bethe roots was first proven when $0< \Delta <1$ by Gusev~\cite{Gus} for any $m$ using convex analysis tools. 
Much later, Dorlas and Samsonov~\cite{DorSam09} used different convex analysis techniques to prove the same result 
and were also able to prove condensation for any $m$  and $\Delta<-\Delta_0$ with $\Delta_0$ large enough, \textit{viz}.~perturbatively around $\Delta=-\infty$. 
More recently, Kozlowski~\cite{KK} proved condensation for any value of $m\in [0,1/2]$ and $\Delta\in (-\infty,1)$, in particular away from the region where convexity or perturbative arguments are applicable. 
That proof relied on developing a rigorous approach to dealing with the non-linear integral equations governing the so-called counting function of the Bethe roots that were introduced and handled, on a loose level of rigour in~\cite{BatKlu,DesdVeg,dVegWoy}. 
The non-linear integral equation method allowed to rigorously establish the condensation of Bethe roots associated with a large class of (half-)integers in
\eqref{eq:BE_n}, not only $n_i=I_i$, as well as to go beyond the limiting value, and to compute an all order asymptotic expansion in $N$ for $\int{}{} f(\mu) d \texttt{L}_N^{(\pmb \lambda)}(\mu)$ for any $\Delta<-1$ and $m \in [0,1/2] $, as well as for any $-1\leq \Delta <1$ and $m \in [0,1/2)$. 
However, owing to the lack of certain compactness properties, the non-linear integral equation method does not allow one to reach rigorously\footnote{This was, however, achieved on a formal level in~\cite{DesdVeg95}.} an estimate beyond $\text{o}(1)$ for 
\begin{equation}
	\int{}{} f(\mu) d \texttt{L}_N^{(\pmb \lambda)}(\mu) \, -\int\limits_{-Q(\frac{n}{N})}^{Q(\frac{n}{N})}\hspace{-3mm} f(\mu) \, \rho(\mu | Q(\tfrac{n}{N}) )d\mu 
\label{ecart some vs mesure empirique}
\end{equation}
when $-1\leq \Delta <1$ and $m=1/2$.

In this work, we develop a method which allows one to estimate~\eqref{ecart some vs mesure empirique} up to a $O(1/N)$ for $-1< \Delta <1$ and $m=1/2$
and up to a $O( \ln N/N)$ for $ \Delta =-1$ and $m=1/2$. 
Reaching these values of the parameters in the model plays a very important role for the results obtained in~\cite{Hom,DKKMO20,DKMT20}
and this stresses the significance of our result.

\subsection{Results for Bethe's equations}

For $n \leq N/2$, we will henceforth always consider the sequence of (half)-integers 
\begin{align}\label{eq:I_ground_state}
	n_i\equiv I_i:=i-\tfrac{n+1}2 \qquad 1 \leq i \leq n,
\end{align}
appearing in~\eqref{eq:BE_n}. 

For $\Delta < 1$, recall that we are interested in the solutions $\pmb\lambda=(\lambda_1,\dots, \lambda_n) \in \mathbb R^n$ to 
\begin{equation}\label{eq:BE}
\frac{N}{2\pi}\mathfrak{p}(\lambda_i) - \frac{1}{2\pi } \sum_{j = 1}^n \vartheta(\lambda_i - \lambda_j) = I_i,\qquad\forall 1\le i\le n,
\end{equation}
where $\mathfrak{p}$ and $\vartheta$ are defined in Appendix~\ref{sec:appendix0}. We will also require that solutions are
\begin{itemize}
	\item {\em symmetric}, meaning that $\lambda_{n+1-i}=-\lambda_i$ for every $1\le i\le n$, 
	\item {\em strictly ordered}, meaning $\lambda_i<\lambda_{i+1}$ for every $1\le i<n$.
\end{itemize}

The first main result of this section is the existence of solutions to~\eqref{eq:BE} without any assumption on $n\le N/2$ or $\Delta\ne -1$, 
with a quantitative control on how condensed these solutions are. 

\begin{theorem}[Existence of condensed solutions to discrete Bethe equations when $\Delta\ne -1$]\label{thm:existence}
	There exists a constant $C>0$ such that
    for every $n\le N/2$ and every $\Delta \in(-\infty,-1)\cup(-1,1)$, 
     there exists a symmetric strictly ordered solution $\pmb\lambda=(\lambda_1,\dots, \lambda_n)$ to~\eqref{eq:BE}, which satisfies 
     \begin{equation}\label{eq:QC} 
	    \Big|\frac{1}{N} \sum_{j = 1}^n f(\lambda_j) - \int_{-\mathfrak{q} }^{ \mathfrak{q} } f(\lambda) \rho(\lambda | \mathfrak{q} )\, d\lambda \Big| 
	    \leq \frac{C}{\zeta N}\,\| f' \|_{L^1(\mathbb R)}
    \end{equation}
	for every $f:\mathbb{R} \rightarrow \mathbb{R}$ with integrable derivative.
    Above, $\zeta$ is related to $\Delta$ as in~\eqref{ecriture parametrisation massless}--\eqref{ecriture parametrisation massive},
    and we introduced the shorthand notation 
    \begin{equation}\label{definition q frak}
	    \mathfrak{q} \, := \, Q\big( \tfrac{n}{N}\big)  . 
    \end{equation}
\end{theorem}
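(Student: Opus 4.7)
The plan is to establish existence via a variational principle, then control the discrepancy between the empirical and continuum counting functions, and finally convert this control into \eqref{eq:QC} via integration by parts.

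\smallskip

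\emph{Existence.} I would introduce the Yang--Yang action $\mathcal Y:\mathbb R^n\to\mathbb R$,
\[
\mathcal Y(\pmb\lambda) := \tfrac{N}{2\pi}\sum_{i=1}^n P(\lambda_i) - \tfrac{1}{4\pi}\sum_{i,j=1}^n \Theta(\lambda_i - \lambda_j) - \sum_{i=1}^n I_i\lambda_i,
\]
with $P'=\mathfrak p$ and $\Theta'=\vartheta$, so that $\nabla_i\mathcal Y = \tfrac{N}{2\pi}\mathfrak p(\lambda_i) - \tfrac{1}{2\pi}\sum_j\vartheta(\lambda_i-\lambda_j) - I_i$; critical points of $\mathcal Y$ are thus exactly the solutions of \eqref{eq:BE}. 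The Hessian has the form $\tfrac{N}{2\pi}\mathrm{diag}\bigl(\mathfrak p'(\lambda_i)\bigr) + \tfrac{1}{2\pi}\bigl(\vartheta'(\lambda_i-\lambda_j) - \delta_{ij}\sum_k\vartheta'(\lambda_i-\lambda_k)\bigr)_{ij}$, whose positive definiteness is the same Fourier-side positivity of $\xi - K\ast\cdot$ that makes \eqref{eq:cBE} uniquely solvable. Combined with the coercivity coming from the fact that $NP$ grows faster than $\Theta$ at infinity, this produces a unique minimizer $\pmb\lambda$. Symmetry follows from the invariance of $\mathcal Y$ under $(\lambda_1,\dots,\lambda_n)\mapsto(-\lambda_n,\dots,-\lambda_1)$ (using oddness of $\mathfrak p,\vartheta$ and of $(I_i)$). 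For strict ordering, note first that if $\lambda_i=\lambda_j$ for $i\neq j$, the Bethe equations force $I_i=I_j$, impossible; then the rearrangement inequality applied to the only non-symmetric term $-\sum_i I_i\lambda_i$ forces the $\lambda_i$ to be sorted in the same order as the $I_i$, which is increasing.

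\smallskip

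\emph{Condensation via counting-function comparison.} Define the continuum counting function
\[
Z(\lambda) := \tfrac{1}{2\pi}\mathfrak p(\lambda) - \tfrac{1}{2\pi}\int_{-\mathfrak q}^{\mathfrak q}\vartheta(\lambda-\mu)\,\rho(\mu|\mathfrak q)\,d\mu,
\]
which by \eqref{eq:cBE} satisfies $Z'=\rho(\cdot|\mathfrak q)$ on $[-\mathfrak q,\mathfrak q]$ and, by \eqref{eq:cBE2} combined with symmetry, takes the values $\pm\tfrac{n}{2N}$ at $\pm\mathfrak q$. Introduce also the discrete counting function $\hat Z_N(\lambda):=\tfrac{1}{2\pi}\mathfrak p(\lambda) - \tfrac{1}{2\pi N}\sum_j\vartheta(\lambda-\lambda_j)$, so that \eqref{eq:BE} reads $\hat Z_N(\lambda_i)=I_i/N$. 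Let $\lambda_i^\star\in[-\mathfrak q,\mathfrak q]$ be defined by $Z(\lambda_i^\star)=I_i/N$. The comparison is controlled by
\[
\hat Z_N(\lambda) - Z(\lambda) = -\tfrac{1}{2\pi}\Bigl[\tfrac{1}{N}\sum_{j=1}^n\vartheta(\lambda-\lambda_j) - \int_{-\mathfrak q}^{\mathfrak q}\vartheta(\lambda-\mu)\rho(\mu|\mathfrak q)\,d\mu\Bigr].
\]
I would estimate the right-hand side by Abel summation, expressing it as $\int_{\mathbb R}K(\lambda-\mu)(F_N(\mu)-F(\mu))\,d\mu$ plus boundary terms, where $F_N,F$ are the distribution functions of $\tfrac{1}{N}\sum\delta_{\lambda_j}$ and $\rho(\cdot|\mathfrak q)\mathbb{1}_{[-\mathfrak q,\mathfrak q]}d\mu$. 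Strict monotonicity of $Z$ then gives $\|F_N-F\|_\infty \lesssim \|\hat Z_N-Z\|_\infty$, and plugging back produces a self-consistent inequality
\[
\|\hat Z_N-Z\|_\infty \leq \tfrac{C}{\zeta N} + C\,\|K\|_{L^1([-\mathfrak q,\mathfrak q])}\,\|\hat Z_N-Z\|_\infty,
\]
where the source term $C/(\zeta N)$ reflects the Lipschitz constant $\|\vartheta'\|_\infty\asymp 1/\zeta$ of the scattering phase times the $1/N$ spacing of the Bethe roots. Closing this bootstrap yields $\|\hat Z_N-Z\|_\infty\leq C/(\zeta N)$, and hence $\|F_N-F\|_\infty\leq C/(\zeta N)$.

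\smallskip

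\emph{From the counting-function bound to \eqref{eq:QC}, and the main obstacle.} Once the CDF bound is in hand, \eqref{eq:QC} follows from
\[
\tfrac{1}{N}\sum_j f(\lambda_j) - \int_{-\mathfrak q}^{\mathfrak q} f(\lambda)\,\rho(\lambda|\mathfrak q)\,d\lambda = -\int_{\mathbb R} f'(\lambda)\bigl(F_N(\lambda)-F(\lambda)\bigr)d\lambda,
\]
valid for every $f$ with $f'\in L^1(\mathbb R)$. The main technical hurdle, in my view, is closing the bootstrap with a genuinely $\zeta$-uniform constant and, in particular, avoiding the degradation that naive pointwise Euler--Maclaurin estimates incur near the Fermi boundaries $\pm\mathfrak q$, where $\rho(\cdot|\mathfrak q)$ may become small (when $\mathfrak q\to\infty$) or where the kernels become singular (when $\zeta\to 0$). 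Overcoming this requires exploiting the structure of the resolvent of $I + \mathbf K$ on $L^\infty([-\mathfrak q,\mathfrak q])$ uniformly in $\mathfrak q$, rather than relying on crude pointwise perturbation estimates; this is exactly the innovation over the non-linear integral equation approach that the authors highlight in the introduction.
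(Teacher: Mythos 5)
Your route is genuinely different from the paper's (which builds a $(k,\mathfrak q)$-interlaced solution as a Brouwer fixed point of the maps $\Phi$, $\Psi$ and reads \eqref{eq:QC} off the interlacement via Lemma~\ref{lem:condensation}), and for $-1<\Delta<1$ your bootstrap can in fact be closed: the identity $\hat Z_N-Z=-K\conv(F_N-F)$ holds exactly (both measures have mass $n/N$, so the boundary terms in the Abel summation vanish), giving $\|\hat Z_N-Z\|_\infty\le\|K\|_{L^1}\|F_N-F\|_\infty$ with the constant \emph{exactly} $\|K\|_{L^1(\mathbb R)}=|1-2\zeta/\pi|<1$, while strict monotonicity of $Z$ (whose derivative is $\rho(\cdot|\mathfrak q)>0$ on all of $\mathbb R$, by \eqref{eq:rewriting}) together with $\hat Z_N(\lambda_i)=I_i/N$ gives $\|F_N-F\|_\infty\le\|\hat Z_N-Z\|_\infty+O(1/N)$; the factor $(1-\|K\|_{L^1})^{-1}\asymp\zeta^{-1}$ then produces precisely the $C/(\zeta N)$ of \eqref{eq:QC}, and does so for \emph{any} solution. (As written, your inequality with a generic constant $C$ in front of $\|K\|_{L^1}$ would not close when $\zeta$ is small; the point is that Young's inequality gives constant $1$ there and the $O(1/N)$ source term comes from the counting step, not from $\|\vartheta'\|_\infty/N$.) There remain, however, two genuine gaps. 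First, the claim that the Hessian of $\mathcal Y$ is positive definite ``by the same Fourier-side positivity'' is false for $\Delta<0$: there $K>0$, the interaction part contributes $-\tfrac12\sum_{i\ne j}K(\lambda_i-\lambda_j)(v_i-v_j)^2\le 0$ to the quadratic form, and whether $\sum_i\xi(\lambda_i)v_i^2$ dominates it is exactly the difficulty that Proposition~\ref{prop:invertibility} can only resolve under $n\le N/2-Ck^2$; solvability of \eqref{eq:cBE} rests on $1+\widehat K>0$, which is a different statement. Bare existence can be rescued by minimizing the coercive but non-convex action over the symmetric subspace, but coercivity is marginal at $n=N/2$ and must be argued, and uniqueness is lost.

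Second, and fatally for half of the theorem, the bootstrap does not close when $\Delta<-1$: there $K>0$ on $[-\pi/2,\pi/2]$ and $\|K\|_{L^1([-\pi/2,\pi/2])}=\widehat K(0)=1$ exactly, so the self-consistent inequality degenerates to $\delta\le\delta+O(1/N)$ and yields nothing. This is precisely the obstruction the paper circumvents by symmetrization: exploiting $\lambda_{n+1-i}=-\lambda_i$, it replaces $\vartheta(\lambda-\mu)$ by $\vartheta^{\rm sym}(\lambda,\mu)=\vartheta(\lambda-\mu)+\vartheta(\lambda+\mu)$, whose relevant oscillation $\varphi(\pi/4)$ with $\varphi(\lambda)=2\vartheta(\lambda)-\vartheta(\lambda+\tfrac\pi2)-\vartheta(\lambda-\tfrac\pi2)$ is strictly below $2\pi$, restoring a contraction on the symmetric subspace. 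Your closing remark about ``exploiting the resolvent of $I+\mathcal K$'' points at the right object (indeed $\|R\|_{L^1}=\widehat R(0)=\tfrac12$ there), but converting the convolution identity for $\hat Z_N-Z$ into a contractive equation in terms of $R$ is the missing step, not a routine one; note also that for $\Delta<-1$ the integration by parts itself needs care, since $\vartheta$ is only quasi-periodic and the boundary terms no longer cancel for free.
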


A solution satisfying~\eqref{eq:QC} will be referred to as {\em condensed}. Note that the condensation is fairly quantitative but that the control degenerates when  $\Delta$ is approaching $-1$. 
We refer to Theorem~\ref{thm:Delta=-1} below for the treatment of the case $\Delta=-1$.

The second theorem will be devoted to the existence of an analytic family of such solutions. The existence of a continuous family of solutions has been previously proven in 
\cite{YangYang66}. Yet, we could not identify any use of the continuity property which warrants mentioning this stronger statement. On the contrary, a property that seems crucial 
for applications to Bethe's Ansatz is the property of analyticity in $\Delta$ of the Bethe roots. 
Analyticity may also be directly inferred from the results of~\cite{KK} for $\Delta<-1$ and all $m$ as well as for $-1 \leq \Delta <1$ and $m\in [0,1/2)$. In this paper, 
we extend these analyticity results (by another range of arguments) up to $m=1/2$ in the sense described by Theorem~\ref{thm:analyticity} below.

\begin{theorem}[Analytic family of solutions to discrete Bethe equations]\label{thm:analyticity}
For every $\Delta_0<1$, there exist $N_0(\Delta_0)<\infty$ and $C_0(\Delta_0)<\infty$ 
such that there exists a {\em unique} family of condensed symmetric strictly ordered solutions $\Delta\mapsto \pmb\lambda(\Delta)$ to~\eqref{eq:BE}, 
which is analytic as a function of $\Delta$ on the following intervals:
\begin{itemize}[noitemsep]

\item If $\Delta_0>-1$, on $(\Delta_0,1)$ as soon as $N\ge N_0(\Delta_0)$ and $n\le N/2-C_0(\Delta_0)$. 
\vspace{2mm}

\item If $\Delta_0<-1$, on $(-\infty,\Delta_0)$ as soon as $N\ge N_0(\Delta_0)$ and $n\le N/2$. 
\end{itemize}
Moreover, there exists $\Delta_0\in(-1,0)$ such that $N_0(\Delta_0)$ and $C_0(\Delta_0)$ can be taken to be 0.
\end{theorem}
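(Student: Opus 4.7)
The plan is to view the system \eqref{eq:BE} as an analytic equation $F_N(\pmb\lambda,\Delta)=0$ in the joint variables $(\pmb\lambda,\Delta)$, and to apply the analytic implicit function theorem at each condensed solution provided by Theorem~\ref{thm:existence}. The analyticity of $F_N$ in both arguments is automatic from the explicit trigonometric/rational form of $\mathfrak{p}$ and $\vartheta$ (one just needs to keep the roots away from the singularities, which is ensured by the condensation estimate). The heart of the matter is therefore the uniform invertibility of the Jacobian of $F_N$ with respect to $\pmb\lambda$, which at a solution is (up to a factor) the discrete Gaudin matrix
\[
G_{ij} := \bigl[\tfrac{1}{N}\bigr]\Bigl\{\delta_{ij}\Bigl(\xi(\lambda_i)+\sum_{k=1}^n K(\lambda_i-\lambda_k)\tfrac{1}{N}\Bigr) - \tfrac{1}{N} K(\lambda_i-\lambda_j)\Bigr\},
\]
up to rescaling, with $K=\tfrac{1}{2\pi}\vartheta'$ and $\xi=\tfrac{1}{2\pi}\mathfrak{p}'$.

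The key analytic step is to show that $G$ is a discretisation of the operator $\operatorname{Id}+\mathcal{K}_{\mathfrak{q}}$ associated with \eqref{eq:cBE} on $[-\mathfrak{q},\mathfrak{q}]$: using the quantitative condensation \eqref{eq:QC} with $f$ chosen to be a smoothed indicator or $K(\lambda_i-\,\cdot\,)$, one replaces sums by integrals at cost $O(1/(\zeta N))$, so that $G$ is, in an appropriate matrix norm, an $O(1/(\zeta N))$-perturbation of the matrix representing $\operatorname{Id}+\mathcal{K}_{\mathfrak{q}}$. The Fredholm operator $\operatorname{Id}+\mathcal{K}_{\mathfrak{q}}$ is invertible (this is precisely the unique solvability of \eqref{eq:cBE} recalled in the introduction), so for $\Delta$ bounded away from $-1$ we get invertibility of $G$ with operator norm $\leq C(\Delta_0)$ as soon as $N\geq N_0(\Delta_0)$, possibly after requiring $n\leq N/2-C_0(\Delta_0)$ in the regime $\Delta>-1$ to keep the effective kernel size under control. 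This yields the implicit function theorem locally at every $\Delta$ in the stated interval, producing an analytic branch $\Delta\mapsto \pmb\lambda(\Delta)$.

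I would then promote the local branches to a single analytic family by a continuation argument. Start from a point $\Delta_*$ where one has a condensed symmetric ordered solution (given by Theorem~\ref{thm:existence}). Extend it analytically as long as possible; on the open set of $\Delta$'s reached, the solution remains symmetric and strictly ordered by continuity (the symmetry $\lambda_{n+1-i}=-\lambda_i$ is preserved because the Bethe equations are equivariant under $(\lambda_i)\mapsto(-\lambda_{n+1-i})$ and $I_i\mapsto -I_{n+1-i}$, and the ordered condensed set is open). If continuation were to fail before reaching the endpoint of the interval, two coordinates would have to collide or escape to infinity, but the condensation estimate \eqref{eq:QC} together with the uniform invertibility of $G$ prevents this: the bound on $G^{-1}$ gives a uniform lower bound on the gaps $\lambda_{i+1}-\lambda_i\gtrsim 1/N$ and a uniform upper bound $|\lambda_i|\leq\mathfrak{q}+O(1)$. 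Hence the family extends to the whole open interval. Uniqueness within the condensed symmetric ordered class follows from the same Jacobian invertibility: if two condensed solutions existed at the same $\Delta$, interpolating them linearly one would find a point where $F_N$ vanishes on a segment, contradicting invertibility of the Jacobian on the segment (which is itself a condensed configuration by convexity of the condensation estimate in $f$).

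The perturbative case $\Delta<\Delta_0$ is treated in the same framework, except that the small parameter now is $|\Delta|^{-1}$ (or equivalently $e^{-\zeta}$), which controls $\vartheta$ and $K$ uniformly, so $G$ is a genuine contraction perturbation of the identity and the IFT applies without any constraint on $n\leq N/2$. Finally, the last sentence (existence of $\Delta_0\in(-1,0)$ for which $N_0=C_0=0$) is obtained by choosing $\Delta_0$ inside the convexity regime of Yang--Yang~\cite{YangYang66}: there the Bethe equations arise as the gradient of a strictly convex functional uniformly in $N$ and $n$, so the Hessian (which equals $G$ up to a positive diagonal) is uniformly positive definite and the IFT applies globally.

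The main obstacle is the uniform-in-$N$ invertibility of the discrete Gaudin matrix $G$ precisely as $\Delta\to -1$ and as $n\to N/2$: here the continuum operator $\mathcal{K}_{\mathfrak{q}}$ acts on an interval $[-\mathfrak{q},\mathfrak{q}]$ that becomes unbounded and the kernel $K$ develops slow decay, so the quantitative condensation loses strength (as already visible in the $1/(\zeta N)$ factor of \eqref{eq:QC}). This is exactly the reason for the quantitative restrictions $N\geq N_0(\Delta_0)$ and $n\leq N/2-C_0(\Delta_0)$ in the first bullet, and why the case $\Delta=-1$ needs the separate treatment deferred to Theorem~\ref{thm:Delta=-1}.
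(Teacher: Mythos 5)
Your overall architecture (analytic implicit function theorem at each condensed solution from Theorem~\ref{thm:existence}, analytic continuation along the interval, uniqueness via non-degeneracy of the Jacobian) matches the paper's, which reduces everything to Proposition~\ref{prop:invertibility}. However, there are two genuine gaps.

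First, the invertibility of the Jacobian in the regime $-1<\Delta<0$ with $n$ close to $N/2$ is not established by your argument, and this is precisely the hard case. Your plan is to view the Jacobian as an $O(1/(\zeta N))$ perturbation, in matrix norm, of a discretisation of $\mathrm{id}+\mathcal K$ on $[-\mathfrak q,\mathfrak q]$. This works when $\Delta<-1$ (the paper does exactly this, via Hilbert--Schmidt and Fredholm-determinant estimates, because there $\mathfrak q\le\pi/2$ and all quantile cells have length $O(1/N)$). But for $-1<\Delta<1$ and $n=N/2-O(1)$, the edge density satisfies $\rho(\mathfrak q)\asymp\tfrac1\zeta(\tfrac12-\tfrac nN)=O(1/(\zeta N))$, so the quantile spacings near $j=1$ and $j=n$ are of order $1$, not $1/N$; the sum-to-integral replacement is \emph{not} uniformly $O(1/(\zeta N))$ in those rows and columns, and the perturbation-off-the-Fredholm-operator argument breaks down exactly where it is needed. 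The paper's resolution is different: it renormalises the Jacobian by $\rho(\lambda_j|\mathfrak q)$ and proves diagonal dominance of the resulting matrix, using the quantitative lower bound $\rho(\lambda_j|\mathfrak q)\ge\tfrac{c_1}{\zeta}(\tfrac12-\tfrac{n+C_1k}{N})$ of~\eqref{eq:crucial} so that the error terms, of relative size $O(k/(N/2-n))$, are beaten by the margin $1/(k+1)$ left by the choice~\eqref{eq:def k1}; this is where the hypothesis $n\le N/2-C_0(\Delta_0)$ is actually consumed. You correctly identify the obstacle but do not supply an argument that survives it.

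Second, your uniqueness argument is incorrect as stated. From $F_N(\pmb\lambda^{(1)})=F_N(\pmb\lambda^{(2)})=0$ one cannot conclude that $F_N$ vanishes on the segment joining them; the mean-value identity only says that the \emph{averaged} differential along the segment annihilates $\pmb\lambda^{(2)}-\pmb\lambda^{(1)}$, and pointwise invertibility of the differential does not imply invertibility of its average (except in the positive-definite regime $\Delta\ge0$). Moreover, condensation in the sense of~\eqref{eq:QC} is not a convex constraint in $\pmb\lambda$ for a general test function $f$ (only the interlacement constraints are). The paper instead proves uniqueness at a single point of each interval where the solution is explicit ($\Delta=0$, resp.\ $\Delta$ very negative via $\Delta=-\infty$ and continuity, Lemmata~\ref{lem:PF0} and~\ref{lem:PFinfinity}) and propagates it along the interval because two distinct analytic branches cannot merge, by the implicit function theorem. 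Relatedly, your justification of the final clause ($N_0=C_0=0$ for some $\Delta_0\in(-1,0)$) via Yang--Yang convexity only covers $\Delta\ge0$; to reach strictly negative $\Delta_0$ one uses instead that at $\Delta=0$ the Jacobian is diagonal ($K\equiv0$) and remains invertible for small $|\Delta|$ uniformly in $n\le N/2$.
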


We are currently unable to prove, with our method, the existence of an analytic solution for arbitrary $n\le N/2$ over the whole intervals $(-\infty,-1)$ and $(-1,1)$. 
We refer to the remarks in Section~\ref{sec:analyticity 0<Delta<1} for more details. However, this fact appears to be closely related to the expected property
that the model undergoes a phase transition of infinite order at $\Delta=-1$. 
 
Our next result states that the eigenvalue~\eqref{eq:eigenvalue} obtained from the Bethe roots provided by Theorem~\ref{thm:analyticity} 
is indeed the Perron-Frobenius eigenvalue of $V_N^{(n)}(a,b,c)$. 

\begin{theorem}[The Bethe Ansatz  gives the Perron-Frobenius eigenvalue]\label{thm:PF}
    Fix $n \leq N/2$. For the analytic family of solutions $\Delta\mapsto \pmb\lambda(\Delta)$ on $(u,v)$ to~\eqref{eq:BE} given by Theorem~\ref{thm:analyticity}
    (with $(u,v) = (-\infty,\Delta_0)$ or $(\Delta_0,1)$), 
    the quantity $\Lambda_N^{(n)}\big(\pmb\lambda(\Delta)\big)$ constructed by \eqref{eq:eigenvalue} from $\pmb\lambda(\Delta)$ 
    is the Perron-Frobenius eigenvalue of $V_N^{(n)}(a,b,c)$ for every $a,b,c$ such that $\Delta\in (u,v)$. 
\end{theorem}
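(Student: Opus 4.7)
My plan is to combine the analyticity of the Bethe family supplied by Theorem~\ref{thm:analyticity} with the simplicity of the Perron--Frobenius eigenvalue of $V_N^{(n)}(a,b,c)$, verify the statement at a single convenient weight in each interval, and propagate to the whole parameter region via an open/closed connectedness argument.

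\emph{Step 1: the Bethe value is an eigenvalue.} By Theorem~\ref{thm:analyticity}, the coordinates of $\pmb\lambda(\Delta)$ are pairwise distinct, while the Bethe equations~\eqref{eq:BE} themselves force these coordinates to avoid the poles of $L$ and $M$; hence $\Psi_N^{(n)}(\pmb\lambda(\Delta))$ and $\Lambda_N^{(n)}(\pmb\lambda(\Delta))$ defined by~\eqref{eq:BA_eigenvector}--\eqref{eq:eigenvalue} are well defined and analytic on $(u,v)$. To apply~\eqref{eq:maybe eigenvalue} one also needs $\Psi_N^{(n)}(\pmb\lambda(\Delta))\neq 0$ on $(u,v)$; I would establish this through the Gaudin--Korepin--Slavnov determinant representation of $\|\Psi_N^{(n)}(\pmb\lambda)\|^2$, whose non-degeneracy is controlled uniformly in $\Delta$ on compact subsets by the quantitative condensation estimate~\eqref{eq:QC} of Theorem~\ref{thm:existence}.

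\emph{Step 2: base-point verification.} I pick a tractable pair $(\Delta^{\ast},(a^{\ast},b^{\ast},c^{\ast}))$ in each interval. For $(u,v)\subset(-1,1)$, fix any $\Delta^{\ast}\in (u,v)$ and take the symmetric weight $a^{\ast}=b^{\ast}$ (with $c^{\ast}$ determined by $\Delta^{\ast}$); Yang and Yang~\cite{YangYang66} showed that the Bethe vector built from the integers $I_i$ coincides with the ground state of the corresponding XXZ Heisenberg chain, hence with the Perron--Frobenius eigenvector of $V_N^{(n)}(a^{\ast},b^{\ast},c^{\ast})$. (Alternatively one could argue at the free-fermion point $\Delta^{\ast}=0$ via Jordan--Wigner diagonalisation.) For $(u,v)\subset(-\infty,-1)$, I pass to the limit $\Delta^{\ast}\to -\infty$: the transfer matrix admits a tractable leading-order structure and both $\pmb\lambda(\Delta)$ and the Perron--Frobenius eigenvector of $V_N^{(n)}$ acquire explicit limits which, after a careful matching, are seen to coincide.

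\emph{Step 3: propagation.} The region
\[
\mathcal{P}_{u,v} \, := \, \bigl\{ (a,b,c)\in (0,\infty)^3 : \Delta(a,b,c)\in (u,v)\bigr\}
\]
is open and path-connected (each level set $\{\Delta=t\}$ is a connected two-dimensional surface in $(0,\infty)^3$, and these fit together continuously as $t$ ranges over $(u,v)$). On $\mathcal{P}_{u,v}$, $V_N^{(n)}(a,b,c)$ is a non-negative irreducible matrix depending polynomially on $(a,b,c)$, so its Perron--Frobenius eigenvalue $\Lambda_{\mathrm{PF}}(a,b,c)$ is simple and analytic. The set
\[
E \, := \, \bigl\{(a,b,c)\in \mathcal{P}_{u,v} : \Lambda_N^{(n)}\bigl(\pmb\lambda(\Delta(a,b,c))\bigr)=\Lambda_{\mathrm{PF}}(a,b,c)\bigr\}
\]
is closed by continuity and non-empty by Step~2. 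It is also open: at any point of $E$, $\Lambda_{\mathrm{PF}}$ is separated from the rest of the spectrum by a positive gap, while $\Lambda_N^{(n)}\bigl(\pmb\lambda(\Delta(\cdot))\bigr)$ is a continuous eigenvalue-valued function; by continuity it stays within the gap on a neighbourhood of the point, and since it is an eigenvalue it must coincide with $\Lambda_{\mathrm{PF}}$ there. Connectedness of $\mathcal{P}_{u,v}$ then yields $E=\mathcal{P}_{u,v}$, which is the stated claim.

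\emph{Main obstacle.} The most delicate part is the base-point verification in the regime $\Delta<-1$ (Step~2): for general $n$ no free-fermion miracle is available, and identifying the large-$|\Delta|$ limit of $\pmb\lambda(\Delta)$ (together with that of the associated Bethe vector) and matching it to the limiting Perron--Frobenius eigenvector requires a careful perturbative analysis on both sides. The non-vanishing of the Bethe vector in Step~1, which ultimately relies on the Gaudin-type determinant combined with the quantitative condensation bound, is a secondary but non-trivial technical point.
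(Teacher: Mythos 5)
Your proposal is correct in its overall architecture and coincides with the paper's strategy in Steps 2 and 3: the paper also verifies the claim at $\Delta=0$ (where the explicit free-fermion solution has strictly positive entries, so the Bethe vector is forced to be the Perron--Frobenius eigenvector) and at $\Delta=-\infty$ (where a Vandermonde computation shows the limiting vector is non-zero, the identification of the largest eigenvalue being delegated to~\cite{DGHMT}), and then propagates by exactly your open/closed argument, using that the Perron--Frobenius eigenvalue of the irreducible matrix $V_N^{(n)}$ is simple and isolated while $(a,b,c)\mapsto \Lambda_N^{(n)}(\pmb\lambda(\Delta(a,b,c)))$ is a continuous eigenvalue-valued function. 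The genuine divergence is in your Step 1. The paper deliberately avoids the Gaudin--Korepin--Slavnov norm formula: it uses analyticity of $\Delta\mapsto\Psi_N^{(n)}(\pmb\lambda(\Delta))$ to conclude that, once the vector is non-zero at the base point, its zero set in $(u,v)$ is discrete, so $\Lambda_N^{(n)}(\pmb\lambda(\Delta))$ is an eigenvalue off a discrete set and continuity of the spectrum fills in the exceptional values. Your route through $\|\Psi_N^{(n)}(\pmb\lambda)\|^2=f(\pmb\lambda)\det[d\mathsf T_\Delta(\pmb\lambda)]$ is viable --- the non-vanishing of the Jacobian determinant along the family is precisely Proposition~\ref{prop:invertibility}, which is already needed to build the analytic family --- but your justification as stated is too vague: the non-degeneracy is not a direct consequence of the condensation estimate~\eqref{eq:QC}; it is the content of the diagonal-dominance/Fredholm arguments of Section~\ref{sec:analyticity}, and it also requires importing the norm formula from~\cite{KMT99,Sla97}. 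The paper's remark after its proof makes exactly this comparison and notes that the analyticity route bypasses the norm formula at no extra cost; if you keep your version, you should replace the appeal to~\eqref{eq:QC} by an explicit appeal to the invertibility of $d\mathsf T_\Delta$ at $\pmb\lambda(\Delta)$. Your identification of the $\Delta<-1$ base point as the main remaining obstacle is accurate.
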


The last two theorems have the following direct consequence. 
For $\Delta\ne -1$ and $n\le N/2$, consider the solution $ \pmb\lambda(\Delta)$ provided by Theorem~\ref{thm:analyticity}. 
Since the functions $\log| L(x)|$ and $\log| M(x)|$ are differentiable, the condensation and symmetry imply that 
\[
\tfrac{1}{N} \sum\limits_{j=1}^n \log |L\big(\lambda_j\big)| =\int\limits_{- \mathfrak{q} }^{ \mathfrak{q} } \log |L(\lambda)|\,\rho(\lambda|\mathfrak{q})d\lambda+O(\tfrac1N),
\]
and a similar expression for $M$. 
When $a>b$, one may check that the contribution to  $\Lambda_N^{(n)}\big( \pmb\lambda(\Delta) \big)$ issuing from the  $L$ term 
is larger than the one  issuing from the  $M$ term. This allows one to  
deduce from the transfer matrix formalism and Theorem~\ref{thm:PF}  that\footnote{The absolute value in $\log|L(\lambda)|$ is harmless, as the symmetry of $\pmb \lambda$ implies that $\prod_{j=1}^n L(\lambda_j)$ is real and positive.} 
\begin{align}
    f_N^{(n)}(a,b,c)&=\lim_{M\rightarrow\infty}\tfrac1{NM}\log{\rm trace}[V_N^{(n)}(a,b,c)^M]\nonumber\\
    &=\tfrac{1}{N} \log \big[ \Lambda_N^{(n)}\big( \pmb\lambda(\Delta) \big) \big] \nonumber\\
    &=\log a+\int\limits_{-\mathfrak{q} }^{ \mathfrak{q} } \log |L(\lambda)|\,\rho(\lambda|\mathfrak{q})  d\lambda + O(\tfrac1N)\label{eq:expression}
\end{align}
as long as $n,N,\Delta$ are in one the cases where Theorem~\ref{thm:analyticity} holds and $a \geq b$. 

Theorems~\ref{thm:free energy} and~\ref{thm:quantitative} follow from~\eqref{eq:expression} once one can estimate efficiently the right-hand side. 
At the core of this estimate is the following observation going back to~\cite{YangYang66b}.
Let $\mathcal{K}$ be the operator  acting on $L^2(I)$, where $I=\mathbb R$ for $|\Delta|\le 1$ and $I=[-\tfrac\pi2,\tfrac\pi2]$ for $\Delta < -1$, 
constructed from the integral kernel $K(\lambda-\mu)$; let $\mathcal R$ be defined by 
$(\mathrm{id}-\mathcal{R}) = (\mathrm{id}+\mathcal{K})^{-1}$. 
We refer to $\mathcal{R}$ as the resolvent, and to its integral kernel $R$ as the {\em resolvent kernel}. 
Then,~\eqref{eq:cBE} is equivalent to the linear integral equation
\begin{equation}\label{eq:second continuum equation}
	\rho(\lambda|q)-\int_{I\setminus[-q,q]}R(\lambda-\mu)\rho(\mu|q)d\mu=\rho(\lambda) \qquad \text{ for all $\lambda \in \mathbb R$},
\end{equation}
where $\rho = (\mathrm{id} - \mathcal R)\xi$. 
The resolvent kernel $R$ and $\rho$ are
best expressed through their Fourier transforms/coefficients\footnote{When $|\Delta|\le 1$, we consider square-integrable functions on $\mathbb R$. For $F$ in $L^2(\mathbb R)$, the Fourier transform of $F$ is given by 
\[\widehat F(t):=\int_\mathbb Re^{-itx}F(x)dx.\] 
When $\Delta<-1$, we consider $\pi$-periodic functions that are square integrable on $[-\pi/2,\pi/2]$ (call $L_\pi^2(\mathbb R)$ the set of $\pi$-periodic functions $f$ with $\int_0^\pi |f(t)|^2dt<\infty$). 
Then, for $f\in L^2_\pi(\mathbb R)$, define the Fourier coefficients $\widehat{f}:\mathbb{Z}\rightarrow\mathbb{C}$ by 
\[
f(t)=\frac{1}{\pi}\sum_{n\in\mathbb{Z}} \widehat{f}(n) e^{2int}.
\]
}
\begin{equation*}
	\widehat R:= \frac{\widehat K}{1+\widehat K} \quad \text{and}  \quad \widehat\rho:= \frac{\widehat\xi}{1+\widehat K} . 
\end{equation*}
We refer to Appendix~\ref{sec:appendix0} for the explicit formulae. 

Due to the definition of $Q$, we have that $I = [-Q(1/2),Q(1/2)]$, and thus $\rho(\lambda)=\rho\big( \lambda | Q(\tfrac{1}{2}) \big)$.
The rewriting of~\eqref{eq:cBE} as~\eqref{eq:second continuum equation} has the advantage of putting emphasis on the perturbative structure of the equation for $q$ located in the vicinity of $Q(\tfrac12)$. 
\vspace{2mm}

Up to now, our results were always stated for $\Delta$ belonging to strict subintervals of $(-\infty,-1)$ or $(-1,1)$. 
We conclude this section with a result dealing with the case $\Delta=-1$. 

\begin{theorem}\label{thm:Delta=-1}
    There exist $N_0,C_0,C_1>0$ such that for every $N\ge N_0$ and 
    \[
    n\le \frac{N}{2}-C_1(\log N)^2,\] 
    there exists  $\Delta\mapsto \pmb\lambda(\Delta)$ on $(-1,1)$ such that 
    \begin{itemize}[noitemsep]
    \item for every $\Delta\in(-1,1)$, $\pmb\lambda(\Delta)$ is a solution to~\eqref{eq:BE} 
    \item $\Delta\mapsto\pmb\lambda(\Delta)$ is analytic on $(-1,1)$;
    \item for every $\Delta\in (-1,1)$ and $f\in L^1(\mathbb R)$,  
    \begin{equation}\label{eq:QC1}
    \Big|\frac{1}{N} \sum\limits_{j = 1}^n f\big( \lambda_j (\Delta) \big) \, - \,  \int\limits_{-\mathfrak{q} }^{ \mathfrak{q} } f(\lambda)\,\rho\big( \lambda| \mathfrak{q} \big)d\lambda \Big| \leq C_0\frac{\log N}{N}\,\| f' \|_{L^1(\mathbb R)}.
    \end{equation}
    \end{itemize}
\end{theorem}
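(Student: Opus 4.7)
The plan is to combine the existence and analyticity results already established for $\Delta \ne -1$ (Theorems~\ref{thm:existence} and~\ref{thm:analyticity}) with a more delicate quantitative analysis that remains meaningful as $\Delta \to -1$, provided the hypothesis $n \le N/2 - C_1(\log N)^2$ is in force. Informally, the $1/\zeta$ factor which degenerates in~\eqref{eq:QC} must be replaced by a $\log N$ factor, and this is only possible when $n/N$ stays away from $1/2$ by a margin of order $(\log N)^2/N$ so that the support $[-\mathfrak q,\mathfrak q]$ of the empirical root measure remains of size $O(\log N)$ rather than blowing up like $1/\zeta$.

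First, I would apply Theorem~\ref{thm:analyticity} with the distinguished $\Delta_0 \in (-1,0)$ for which $N_0(\Delta_0) = C_0(\Delta_0) = 0$: this already yields an analytic family of condensed symmetric strictly ordered solutions $\pmb\lambda(\Delta)$ on $(\Delta_0, 1)$ for all $n \le N/2$. The task is to extend this family analytically down to any fixed $\Delta \in (-1, \Delta_0]$ and to strengthen the condensation estimate on this extension.

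The extension proceeds by a continuation argument based on the analytic implicit function theorem. Writing the Bethe system as $F(\pmb\lambda,\Delta) = 0$ and starting from $\Delta = \Delta_0$, I need to show that the Jacobian $D_{\pmb\lambda}F$ is invertible at every $\Delta \in (-1, \Delta_0]$ with a norm bound on its inverse that is compatible with the quantitative estimates being propagated. After rescaling, $D_{\pmb\lambda}F$ is a discrete version of the Fredholm operator $\mathrm{id}+\mathcal K$ restricted to $[-\mathfrak q, \mathfrak q]$; its invertibility is governed by the resolvent $\mathcal R$, while the finite-$N$ corrections are controlled by the condensation of the roots themselves via a bootstrap. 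The gap $N/2 - n \ge C_1(\log N)^2$ ensures both that $\mathfrak q = O(\log N)$ uniformly in $\Delta \in (-1,1)$ and that the discrete-to-continuum Riemann-sum errors that enter the inverse lose at most a $\log N$ factor (rather than $1/\zeta$), giving $\|D_{\pmb\lambda}F^{-1}\| = O(N \log N)$. Uniqueness of the extended family follows from the uniqueness part of Theorem~\ref{thm:analyticity} together with continuity.

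The improved condensation bound~\eqref{eq:QC1} is then obtained by re-running the argument behind~\eqref{eq:QC} with sharper bookkeeping: the discrepancy between $\tfrac1N \sum f(\lambda_j)$ and $\int_{-\mathfrak q}^{\mathfrak q} f(\lambda) \rho(\lambda | \mathfrak q)\, d\lambda$ is expressed via a discrete-to-continuum comparison whose error is controlled by $\|f'\|_{L^1(\mathbb R)}$ times (length of support)/$N$, and substituting the $O(\log N)$ bound on the support yields the claim. The main obstacle I anticipate is precisely establishing the uniform invertibility of $D_{\pmb\lambda}F$ at $\Delta$ arbitrarily close to $-1$: the pairwise contributions $\vartheta'(\lambda_i - \lambda_j)$ which produce the $1/\zeta$ in Theorem~\ref{thm:existence} have to be shown to cancel against the spreading of the roots over an $O(\log N)$ interval, and doing so uniformly in $N$ is exactly where the $(\log N)^2$ slack in $n \le N/2 - C_1(\log N)^2$ must be spent.
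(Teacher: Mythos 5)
Your high-level plan --- rerun the existence and analyticity arguments with the degenerating $1/\zeta$ replaced by $\log N$, spending the slack $N/2-n\ge C_1(\log N)^2$ to make the estimates uniform as $\Delta\to-1$ --- is exactly the strategy of the paper, which redoes the Brouwer argument of Section~\ref{sec:existence 2} and the diagonal-dominance argument of Section~\ref{sec:analyticity 0<Delta<1} with the interlacement index fixed at $k=C_0\log N$. But your proposal stops short of the one estimate that makes this work, and you yourself flag it as ``the main obstacle I anticipate'' without resolving it. Concretely, the $1/\zeta$ in \eqref{eq:QC} comes from the interlacement index $k(\Delta)$ chosen in \eqref{eq:def k1} so that $2\pi k/(k+1)$ exceeds the full oscillation $|2\pi-4\zeta|$ of $\vartheta$, which forces $k\asymp 1/\zeta$. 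The paper instead takes $k=C_0\log N$ and proves that the oscillation actually entering the interlacement propagation, $|\vartheta(\lambda_i-\lambda_1)-\vartheta(\lambda_i-\mathfrak q)|$, is at most $2\pi-c_3/\log N$ (inequality \eqref{eq:iu}). This is where the hypothesis is really spent: the edge-density lower bound \eqref{eq:crucial}, $\rho(\Lambda(1-\tfrac k2|\mathfrak q))\ge \tfrac{c_1}{\zeta}(\tfrac12-\tfrac{n+C_1k}{N})$, converts $N/2-n\ge C_1(\log N)^2$ into $|\Lambda(1-\tfrac k2|\mathfrak q)|/\zeta=O(\log N)$, and then $|\vartheta(x)|=2\arctan(\tanh|x|\cot\zeta)$ evaluated there is $\pi-O(1/\log N)$ uniformly in $\zeta$. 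Without this, neither the fixed-point step (which you omit entirely: your continuation from $\Delta_0$ still needs the ``closedness'' fact that a limit of strictly interlaced solutions is again \emph{strictly} interlaced, and that is precisely what \eqref{eq:iu} delivers) nor the invertibility step (where the diagonal gap $\tfrac1{k+1}\asymp\tfrac1{\log N}$ must beat error terms of size $k/(N/2-n-Ck)$, whence the need for $N/2-n\gtrsim k^2$) can be closed.

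Two of your supporting claims are also incorrect as stated. First, the mechanism you give for \eqref{eq:QC1} is not the right one: the error in Lemma~\ref{lem:condensation} is $(\text{interlacement index})\times\|f'\|_{L^1}/N$, not $(\text{length of support})\times\|f'\|_{L^1}/N$ (the latter is not even dimensionally consistent with the left-hand side), and the support $[-\mathfrak q,\mathfrak q]$ has length $O(\zeta\log N)$, which \emph{shrinks} as $\Delta\to-1$; the $\log N$ in \eqref{eq:QC1} is the interlacement index $k=C_0\log N$, and establishing $(C_0\log N,\mathfrak q)$-interlacement is exactly the missing step above. Second, the gap condition is not what keeps $\mathfrak q=O(\log N)$ --- that already holds for every $n<N/2$; it is what keeps the density at, and just beyond, the edge of the support bounded below by $\asymp\zeta^{-1}(\log N)^2/N$, which is needed both for \eqref{eq:iu} and for the error terms in the Jacobian estimate.
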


In Remark~\ref{rmk:existence1}, we will also see from the proof that one can obtain a solution $\pmb\lambda(-1)$ 
of~\eqref{eq:BE} with $\Delta=-1$ by taking the limit of $\tfrac1{\zeta}\pmb \lambda(\Delta)$ when $\zeta$ tends to 0. This solution also satisfies~\eqref{eq:QC1}.


\paragraph{Organization}
The paper is split into seven further sections and several appendixes. 
In Sections~\ref{sec:existence} and~\ref{sec:analyticity}, we present the proofs of Theorems~\ref{thm:existence} and~\ref{thm:analyticity}, respectively. 
The sections themselves start with general considerations and are then divided into the different cases $0\le \Delta<1$, $-1< \Delta\le 0$ and $\Delta< -1$, as these exhibit different features. 
Sections~\ref{sec:PF} and~\ref{sec:7} contain the proofs of Theorems~\ref{thm:PF} and~\ref{thm:Delta=-1}, respectively.

Building upon these results, Theorems~\ref{thm:free energy} and~\ref{thm:quantitative} are proved in Sections~\ref{sec:free energy} and~\ref{sec:quantitative}, respectively. 
These sections are divided between the cases $|\Delta|<1$ and $\Delta< -1$ as these correspond to different behaviours. 

Finally, Section~\ref{sec:refined} presents a refined version of Theorem~\ref{thm:quantitative}. 
While being interesting in its own right, this result is mostly useful in our subsequent paper~\cite{DKKMO20}.

The first Appendix lists the different definitions of functions in order to have a place conveniently gathering all the formulae. 
The three other Appendixes gather properties of $\rho(\cdot|q)$ 
and~\eqref{eq:cBE} so as to not overburden the rest of the text. 

\paragraph{Acknowledgments} 
The first author was funded by the ERC CriBLaM. 
The second author was funded by ERC Project LDRAM : ERC-2019-ADG Project 884584
The third author was funded by the Swiss FNS. 
The first, third, fourth and fifth authors were partially funded by the NCCR SwissMap and the Swiss FNS. 
The first and fourth authors thank Matan Harel for inspiring discussions at the beginning of the project.

\section{Proof of Theorem~\ref{thm:existence}}\label{sec:existence}

In this whole section, fix $\Delta\in (-\infty ,-1)\cup(-1,1)$ and $n\le N/2$. 
Recall that $N$ is even and that $I_i=i-\tfrac{n+1}2$ for $1\le i\le n$. 

Below, we introduce the notion of an {\em interlaced solution} which will be useful in the proof. 
For $q > 0$ (with $q \leq \pi/2$ when $\Delta<-1$) and $x\in(-x_0,x_0)$, where $x_0=x_0(q)\in \mathbb R_+\cup\{+\infty\}$ is defined by
\[
\int\limits_0^\infty \rho(\lambda|q)d\lambda=\tfrac1N(x_0-\tfrac{n+1}2),
\]
introduce the quantile $\Lambda(x|q)$ given by the formula 
\begin{equation}\label{eq:ko1}
	\int\limits_0^{\Lambda(x|q)} \rho(\lambda|q)d\lambda=\tfrac1N(x-\tfrac{n+1}{2}).
\end{equation}
Note that $\Lambda(x|q)$ is unambiguously defined  since $\rho(\lambda|q)>0$. 

Due to the definition of $\rho(\cdot|q)$ (see Appendix~\ref{sec:appendix0} and \eqref{eq:second continuum equation}), 
$x_0$ is equal to infinity for $\Delta< -1$ and is finite, but larger than or equal to $\pi/2$ for $\Delta\ge-1$. 
In the latter case, in order to avoid unnecessarily heavy notation, we set $\Lambda(x|q)=+\infty$ for $x\ge x_0$ and $-\infty$ for $x\le -x_0$. 
Note also that by definition of $\mathfrak{q}$, \textit{c.f.}~\eqref{eq:cBE2} and~\eqref{definition q frak}, 
we have that  $\mathfrak{q} = \Lambda( n + \tfrac12 | \mathfrak{q}) = -\Lambda(\tfrac12| \mathfrak{q})$.

\begin{definition}
For $n\le \tfrac{N}{2}$, $k\ge1$ and $q\in\mathbb R_+$, $\pmb\lambda=(\lambda_1,\dots, \lambda_n)\in\mathbb R^n$ is {\em $(k,q)$-interlaced} if for every $1\le i\le n$,
\begin{equation}
\Lambda(i-\tfrac k2|q)\leq \lambda_i \leq \Lambda(i+\tfrac k2|q).
\end{equation}
We say that $\pmb\lambda$ is {\em $(k,q)$-strictly interlaced} if the strict inequalities hold.
\end{definition}
\begin{rem}
When $k=1$, this corresponds to a perfect interlacement between the $\lambda_i$ and the quantiles of the measure $\rho(\lambda|q)d\lambda$.
\end{rem}

The interest of this notion of interlacement becomes apparent in the following lemma
which states that $(k,q)$-interlaced solutions satisfy~\eqref{eq:QC}, provided $k \leq C/\zeta$. 

\begin{lemma}[From interlacement to quantitative condensation]\label{lem:condensation}
    Fix $k\ge1$ and $q\in \mathbb R_+$. For every $(k,q)$-interlaced $\pmb\lambda$ and every $f:\mathbb{R} \rightarrow \mathbb{R}$ with integrable derivative,
    \begin{equation}
    \Big|\frac{1}N \sum_{j = 1}^n f(\lambda_j) - \hspace{-2mm}\int\limits_{ \Lambda(\frac{1}{2}|q) }^{ \Lambda(n+\frac{1}{2}|q) } \hspace{-3mm} f(\lambda) \rho(\lambda | q)\, d\lambda \Big| 
    \leq \frac{k}{N}  \| f' \|_{L^1( \mathcal{I}_k )}.
    \end{equation}
    with $\mathcal{I}_k:= ( \Lambda(1-\tfrac{k}{2}|q) , \Lambda(n+\tfrac{k}{2}|q))$. 
    Furthermore, if $f$ is monotonic, the constant $k\| f' \|_{L^1(\mathcal{I}_k)}$ can be replaced by 
    \[
    \tfrac{k+1}2\max\Big\{ \big| f(\lambda_1)-f\big(  \Lambda(n+\tfrac{1}{2}|q) \big) \big|,\big|f(\lambda_n)-f\big( \Lambda(\tfrac{1}{2}|q) \big) \big| \Big\}.
    \]
\end{lemma}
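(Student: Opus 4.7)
The plan is to exploit the defining identity $\int_{\Lambda(j-1/2|q)}^{\Lambda(j+1/2|q)} \rho(\mu|q)\,d\mu = 1/N$ coming from~\eqref{eq:ko1}, which partitions the relevant interval into $n$ consecutive pieces of equal $\rho(\cdot|q)$-mass. Writing $J_j := [\Lambda(j-\tfrac12|q),\Lambda(j+\tfrac12|q)]$ and $I_j := [\Lambda(j-\tfrac{k}{2}|q),\Lambda(j+\tfrac{k}{2}|q)]$, I would start by rewriting the quantity whose absolute value is to be bounded as
\begin{equation*}
\frac{1}{N}\sum_{j=1}^n f(\lambda_j)\;-\;\int_{\Lambda(1/2|q)}^{\Lambda(n+1/2|q)} f(\mu)\rho(\mu|q)\,d\mu \;=\; \sum_{j=1}^n \int_{J_j}\bigl[f(\lambda_j)-f(\mu)\bigr]\rho(\mu|q)\,d\mu,
\end{equation*}
which represents it as a sum of slice-by-slice discrepancies between $f(\lambda_j)$ (constant on $J_j$) and $f(\mu)$ (varying on $J_j$).

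Next, I would use the $(k,q)$-interlacement together with $k\ge 1$, which implies $J_j\subset I_j$, so both $\lambda_j$ and any $\mu\in J_j$ lie in the common interval $I_j$. The fundamental theorem of calculus then gives $|f(\lambda_j)-f(\mu)|\le \int_{I_j}|f'(s)|\,ds$. Swapping the sum and the integral produces
\begin{equation*}
\text{LHS}\;\le\;\frac{1}{N}\int_{\mathcal I_k}|f'(s)|\Bigl(\sum_{j=1}^n \mathbb 1_{I_j}(s)\Bigr)\,ds.
\end{equation*}
To close the argument it remains to check the multiplicity bound $\sum_{j=1}^n \mathbb 1_{I_j}(s)\le k$ almost everywhere: writing $s=\Lambda(x|q)$ via the quantile change of variable, the condition $s\in I_j$ becomes $|x-j|\le k/2$, which is satisfied by at most $k$ integers~$j$ away from a discrete exceptional set of null Lebesgue measure. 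This yields the main bound $\tfrac{k}{N}\|f'\|_{L^1(\mathcal I_k)}$.

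For the monotonic sharpening, the strategy is to return to the slice decomposition but, instead of taking absolute values pointwise, keep the signs: for $f$ increasing, bound $f(\lambda_j)-f(\mu)$ above by $f(\Lambda(j+\tfrac{k}{2}|q))-f(\Lambda(j-\tfrac12|q))$ and below by $f(\Lambda(j-\tfrac{k}{2}|q))-f(\Lambda(j+\tfrac12|q))$, and then perform an Abel/telescoping summation in $j$. The resulting sums resemble the discrepancy between a Riemann sum of a monotone function and its integral; the bulk contributions cancel, leaving boundary terms that involve only the extreme values $f(\lambda_1),f(\lambda_n),f(\Lambda(\tfrac12|q)),f(\Lambda(n+\tfrac12|q))$, each weighted by an effective multiplicity $(k+1)/2$ coming from the shifted interval lengths.

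The main obstacle is not conceptual but bookkeeping, and it is concentrated in the monotonic refinement: one must carry out the telescoping so that the final bound involves $|f(\lambda_1)-f(\Lambda(n+\tfrac12|q))|$ and $|f(\lambda_n)-f(\Lambda(\tfrac12|q))|$ rather than the coarser $|f(\Lambda(n+\tfrac{k}{2}|q))-f(\Lambda(-\tfrac{k}{2}|q))|$. The principal estimate itself, including the multiplicity count via the quantile change of variable, is essentially a direct computation.
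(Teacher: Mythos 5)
Your proof of the main bound is correct and is essentially the paper's argument: decompose $[\Lambda(\tfrac12|q),\Lambda(n+\tfrac12|q)]$ into the unit-mass slices $J_j$, bound $|f(\lambda_j)-f(\mu)|$ by $\int_{I_j}|f'|$ using interlacement, and control the overlap of the $I_j$'s. Your explicit multiplicity count (at most $k$ indices $j$ with $s\in I_j$ for a.e.\ $s$, via the quantile change of variables) is a step the paper leaves implicit, and it is the right justification for the factor $k$.

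The monotone refinement, however, has a genuine gap as sketched. Your per-slice bounds $f(\lambda_j)-f(\mu)\le f(\Lambda(j+\tfrac k2|q))-f(\Lambda(j-\tfrac12|q))$ telescope to boundary terms located at $\Lambda(n+\tfrac k2|q)$ and $\Lambda(1-\tfrac k2|q)$, not at the points $\lambda_1,\lambda_n,\Lambda(\tfrac12|q),\Lambda(n+\tfrac12|q)$ appearing in the statement; and for $f$ increasing there is no way to dominate $f(\Lambda(n+\tfrac k2|q))$ by $f(\lambda_n)$, since interlacement only gives $\lambda_n\ge\Lambda(n-\tfrac k2|q)$ and $f$ may grow arbitrarily on $[\lambda_n,\Lambda(n+\tfrac k2|q)]$. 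So the claim that the telescoping "leaves boundary terms involving only $f(\lambda_1),f(\lambda_n),f(\Lambda(\tfrac12|q)),f(\Lambda(n+\tfrac12|q))$" does not follow from the bounds you wrote down, and the obstruction is not bookkeeping. The fix (which is the paper's device) is to compare $f(\lambda_j)$ not to a point value of $f$ but to a slice integral of $f\rho$: for $f$ non-decreasing, $f(\lambda_j)\le f(\Lambda(j+\tfrac k2|q))\le N\int_{\Lambda(j+k/2|q)}^{\Lambda(j+k/2+1|q)}f\rho$, and this is used \emph{only} for $j\le n-\tfrac{k+1}2$, so that the slices tile a subinterval of $[\Lambda(\tfrac12|q),\Lambda(n+\tfrac12|q)]$ and cancel against the target integral; the remaining $\tfrac{k+1}2$ indices near $j=n$ are kept as $f(\lambda_j)\le f(\lambda_n)$ (using the ordering of $\pmb\lambda$), while the uncovered $\tfrac{k+1}2$ slices of the integral are bounded using $f\ge f(\Lambda(\tfrac12|q))$. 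This asymmetric treatment of the boundary indices is exactly what produces the stated constant, and it is the missing idea in your sketch.
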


\begin{proof}
By \eqref{eq:ko1}, the integral of $\rho(\cdot|q)$ between $\Lambda(j-\tfrac12|q)$ and $\Lambda(j+\tfrac12|q)$ is $\tfrac1N$, as long as both arguments are inbetween $-x_0$ and $x_0$. Thus, we find
\begin{align*}
	\Big|\frac{1}{N}  \sum_{j = 1}^n f(\lambda_j)- \int\limits_{ \Lambda(\frac{1}{2}|q) }^{ \Lambda(n+\frac{1}{2}|q) } f(\lambda)\rho(\lambda|q)d\lambda \Big|
	&\le \sum_{j = 1}^{n} \int\limits_{\Lambda(j-\frac{1}{2}|q)}^{\Lambda(j+\frac{1}{2}|q)}  \hspace{-3mm} |f(\lambda_j)-f(\lambda)|\rho(\lambda|q)d\lambda\\
	&\le \frac1N\sum_{j = 1}^n\int\limits_{\Lambda(j-\frac{k}{2}|q)}^{\Lambda(j+ \frac{k}{2}|q)} \hspace{-3mm} |f'(\mu)|d\mu\le \frac{k}{N}\,\| f' \|_{L^1(\mathcal{I}_k)},
\end{align*}
where we invoked the following inequality, valid for $\Lambda(j-\frac{1}{2}|q)\le \lambda\le \Lambda(j+\frac{1}{2}|q)$,
\[
|f(\lambda_j)-f(\lambda)|=\Big|\int\limits_\lambda^{\lambda_j}f'(\mu)d\mu\Big|\le \int\limits_{\Lambda(j-\frac{k}{2}|q)}^{\Lambda(j+\frac{k}{2}|q)}|f'(\mu)|d\mu .
\]
In the case when $f$ is non-decreasing (non-increasing works in the same way), $(k,q)$-interlacement gives
\begin{equation}\label{eq:bound}
N \hspace{-3mm} \int\limits_{\Lambda(j-\frac{k}{2}-1|q)}^{\Lambda(j-\frac{k}{2}|q)} \hspace{-3mm} f(\lambda)\rho(\lambda|q)d\lambda\le f(\lambda_j)\le
N\hspace{-3mm} \int\limits_{\Lambda(j+\frac{k}{2}|q)}^{\Lambda(j+\frac{k}{2}+1|q)} \hspace{-3mm} f(\lambda)\rho(\lambda|q)d\lambda.
\end{equation}
The lower-bound holds for $n \ge j\ge (k+3)/2$ while the upper one for $1 \leq j \le n-\tfrac{k+1}{2}$. 
Summing the left-hand side over $j\ge (k+3)/2$,  bounding from below the remaining sums of $f(\lambda_j)$ by $\tfrac{k+1}2f(\lambda_1)$, and the missing piece of integral by  
$-\tfrac{k+1}2f\big(  \Lambda(n+\tfrac{1}{2}|q)  \big)$
gives the lower bound on the difference. The upper bound follows from analogous considerations. 
\end{proof}

\vspace{2mm}

The core of the proof of Theorem~\ref{thm:existence} will be to construct solutions of~\eqref{eq:BE} that lie in the subset of $\mathbb R^n$ given by
\begin{equation}\label{definition Omega k R}
	\Omega_{k,R}:=\{\text{strictly $(k, \mathfrak{q} )$-interlaced, symmetric, strictly ordered }\pmb\lambda\text{ with }|\lambda_i|<R,\forall i\},
\end{equation}
with $\mathfrak{q}$ given by~\eqref{definition q frak}, as fixed points of a well-chosen function.
This will be done by picking $k=k(\Delta)$ and $R=R(\Delta,N)$ carefully, and then proving that the closure $\overline\Omega_{k,R}$ of $\Omega_{k,R}$ is mapped to $\Omega_{k,R}$ by this function.
Then, the Brouwer fixed point theorem implies the existence of a fixed point for this function, 
which is a solution to~\eqref{eq:BE} due to the choice of the function.
While the proof is fairly similar in the different regimes, some tiny differences still exist 
and we therefore divide it between the cases $\Delta\in[0,1)$, $\Delta\in(-1,0]$, and $\Delta<-1$. 
Let us mention that a fixed point method was already used, although in a slightly different manner, for proving the existence of Bethe roots in~\cite{Gri}.

\begin{rem}
    The reason for distinguishing between $\Delta\ge0$ and $\Delta<0$ issues from the fact that $\lambda \mapsto \vartheta(\lambda)$ given in Appendix~\ref{sec:appendix0} is respectively decreasing and increasing. 
    Furthermore, when $\Delta<0$,  dividing between $\Delta<-1$ and $\Delta>-1$ comes from the small caveat 
    that the image of $\vartheta$ is an interval of length strictly smaller than $2\pi$ when $\Delta>-1$ and equal to $2\pi$ when $\Delta < -1$.
\end{rem}

\begin{rem}
We expect the existence of $(1,\mathfrak{q})$-interlaced solutions to~\eqref{eq:BE} for every $\Delta<1$, even though we are currently unable to prove this fact for $n$ close to $N/2$ (when $n/N\le 1/2-\epsilon$, this follows readily 
from the results established in~\cite{KK}).
Below are plots of 
$$
\mathbf{m}:=\max\Big\{ N\int_{\Lambda(i| \mathfrak{q})}^{\lambda_i} \rho(\lambda|\infty) d\lambda\; : \; 1\le i\le N/2 \Big\}
$$ 
as a function of the system size $N$, for $n=N/2$ and different $\Delta\ge -1$. One sees that the quantity remains bounded by $1/2$, meaning that the solution is $(1,+\infty)$-interlaced.
\begin{center}
\includegraphics[scale=0.20]{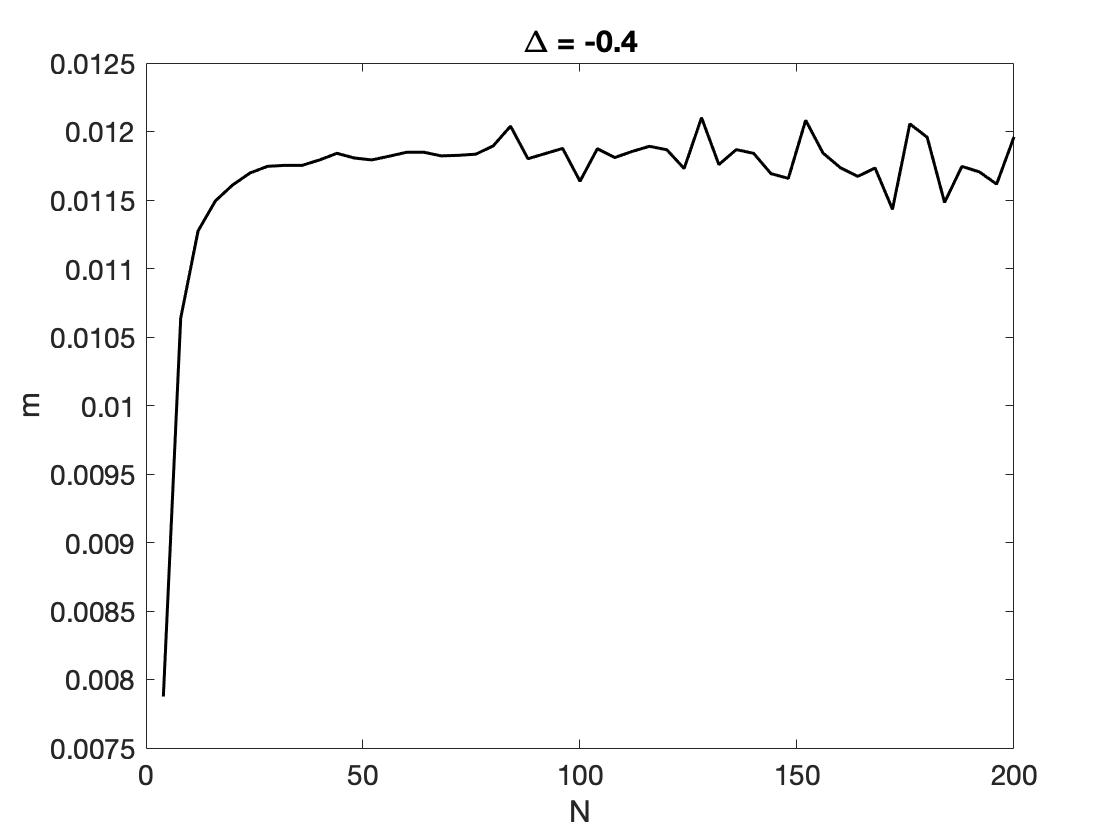}
\includegraphics[scale=0.20]{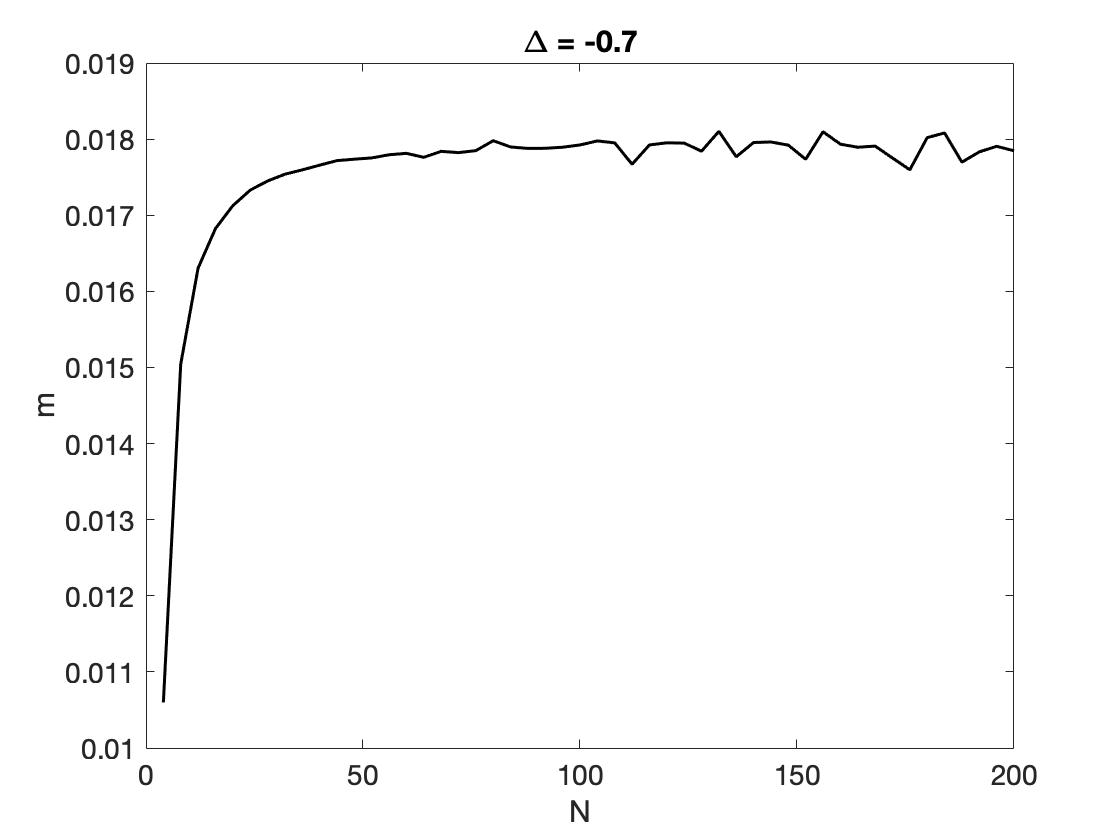}\\
\includegraphics[scale=0.20]{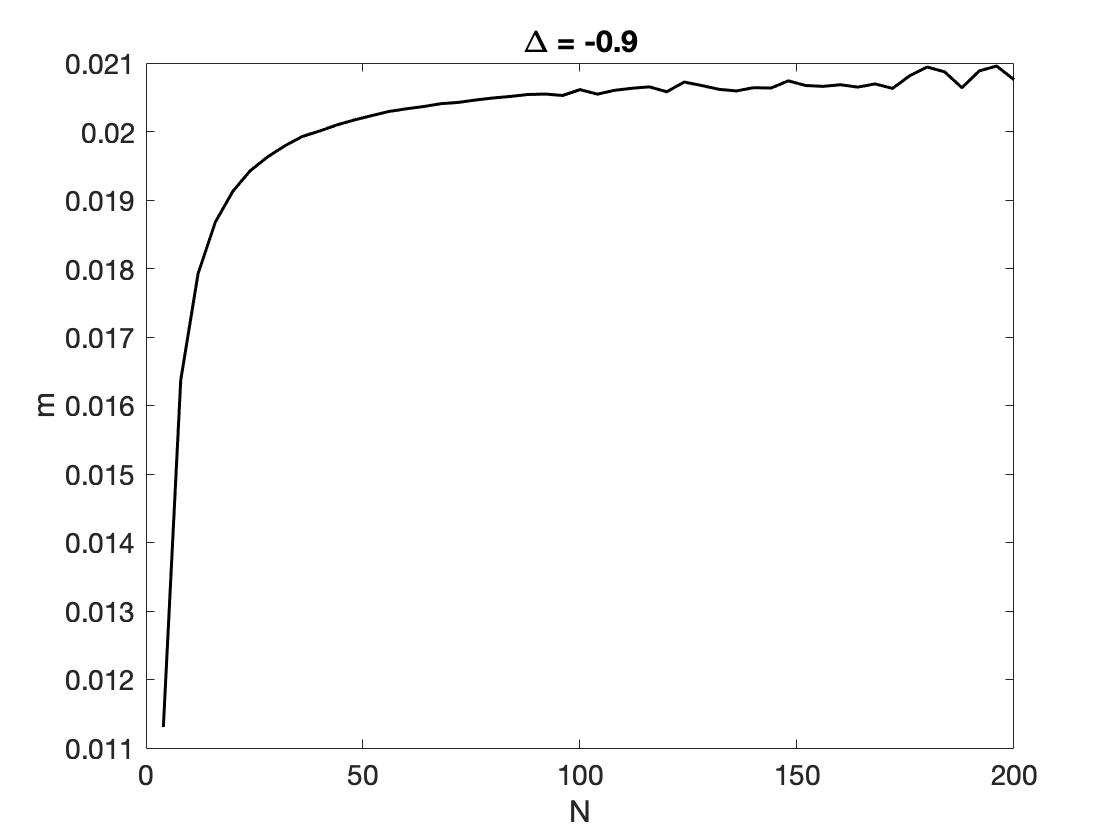}
\includegraphics[scale=0.20]{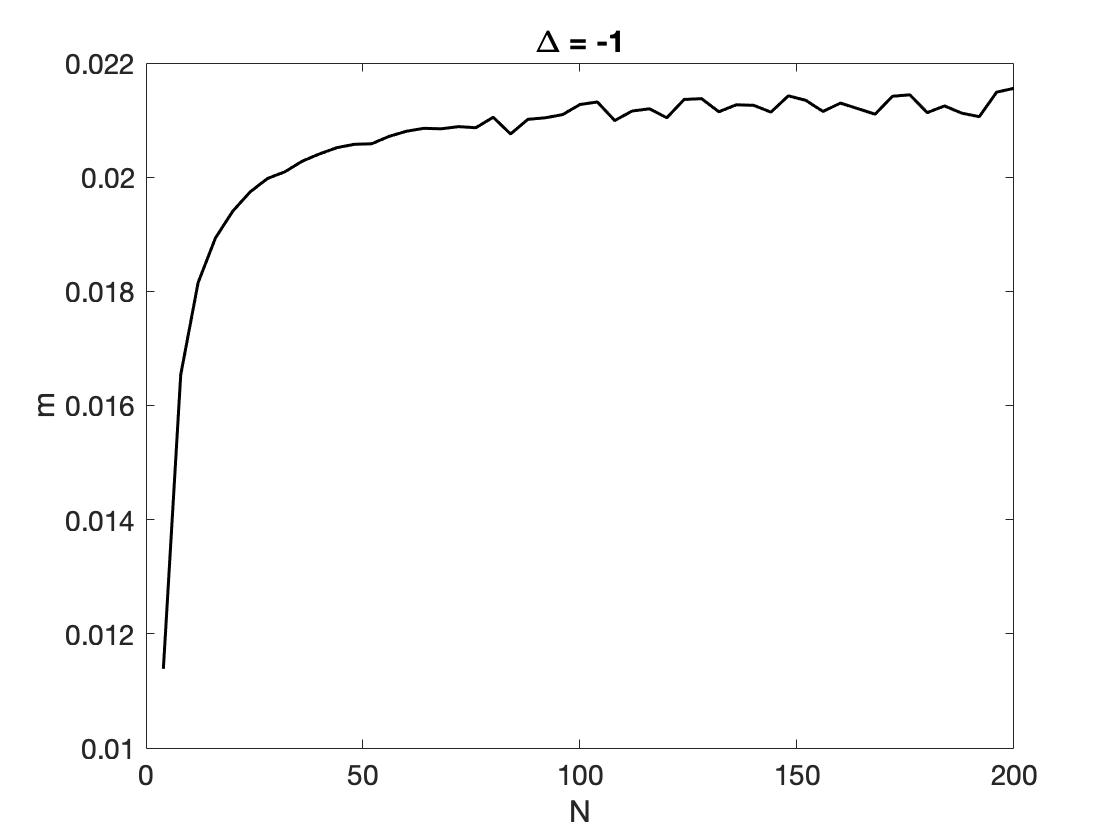}
\end{center}
\end{rem}

\subsection{Case $0\le \Delta< 1$}\label{sec:existence 1}

Introduce the smallest positive integer $k=k(\Delta)$ such that 
\begin{equation}\label{eq:def k1}
	2\pi\frac{k}{k+1}>|\vartheta(+\infty)-\vartheta(-\infty)|=|2\pi-4\zeta|.
\end{equation}
Consider the map $\Phi:\mathbb R^n\rightarrow \mathbb R^n,\pmb\lambda\mapsto\pmb\mu$ for which $\mu_i$ is defined for every $1\le i\le n$ by 
\begin{align}\label{eq:def_phi}
	\mathfrak{p}(\mu_i)-\tfrac1{N}\sum_{j=1}^n\vartheta(\mu_i-\lambda_j)=\tfrac{2\pi}{N} I_i, 
\end{align}
where $\mathfrak{p}$ and $\vartheta$ are given in Appendix~\ref{sec:appendix0}. 

This function is well-defined as the map $x\mapsto \mathfrak{p}(x)-\tfrac1{N}\sum_{j=1}^n\vartheta(x-\lambda_j)$ is continuous strictly increasing 
(here the fact that $\Delta \geq 0$ ensures that $\vartheta$ is decreasing) and tends to $\pm \Upsilon$ with 
$\Upsilon := (1-\tfrac{n}{N})\pi- \zeta(1-2\tfrac{n}{N})$ at $\pm\infty$. 
Also, since $|\tfrac{2\pi}N I_i|\le \tfrac\pi2-\tfrac\pi{N}$ and $\Upsilon \geq \tfrac\pi2$ (recall that $n \leq N/2$ and that $\zeta \leq \pi/2$ in this case), 
there exists a constant $R=R(\Delta,N)$ 
such that $|\mu_i|< R$ for every $i$. From now on, we fix this constant and show that $\Phi$ maps $\overline\Omega_{k,R}$ onto $\Omega_{k,R}$. 
The Brouwer fixed point theorem then implies that $\Phi$ has a fixed point (since $\overline\Omega_{k,R}$ is a compact convex set), 
which is a strictly $(k, \mathfrak{q})$-interlaced symmetric strictly ordered solution of~\eqref{eq:BE}.
Note that the choice \eqref{eq:def k1} of $k$ ensures that $k \leq C/\zeta$ for some universal constant $C$, which implies~\eqref{eq:QC} through Lemma~\ref{lem:condensation}. 

Let $\pmb\mu = \Phi(\pmb\lambda)$ for some $\pmb\lambda \in\overline\Omega_{k,R}$. 
That $\pmb\mu$ is strictly ordered and symmetric is obvious from \eqref{eq:def_phi}, whose left-hand side is strictly increasing in $\mu_i$. 
We therefore only need to check the strict $(k,\mathfrak{q})$-interlacement, which is a direct consequence of the following sequence of inequalities:
\begin{align}
	2\pi \Big|N\int\limits_0^{\mu_i}\rho(\lambda|\mathfrak{q})d\lambda-I_i\Big|
	&=\Big|N\mathfrak{p}(\mu_i)-N\int\limits_{-\mathfrak{q}}^{\mathfrak{q}}\vartheta(\mu_i-\lambda)\rho(\lambda|\mathfrak{q})d\lambda-2\pi I_i\Big|\nonumber\\
	&\le \tfrac{k+1}2|\vartheta(+\infty)-\vartheta(-\infty)|\stackrel{\eqref{eq:def k1}}<\pi k.
	\label{eq:arg}
\end{align}
The equality in~\eqref{eq:arg} is due to the following identity, valid for every $x$ and $q$,
\begin{equation}\label{eq:rewriting}
	2\pi\int\limits_0^{x} \rho(t|q)dt=\mathfrak{p}(x)-\int\limits_{-q}^q\vartheta(x-\lambda)\rho(\lambda|q)d\lambda,
\end{equation}
which is the integrated version of~\eqref{eq:cBE} (recall that $\vartheta$ and $\mathfrak{p}$ are odd).
The first inequality in~\eqref{eq:arg} is an application of the definition \eqref{eq:def_phi} of $\mu_i$ together with 
Lemma~\ref{lem:condensation} applied to the monotone function $\vartheta(\mu_i-\cdot)$; 
it is also useful to recall that $\mathfrak{q}=\Lambda( n+\tfrac{1}{2} |\mathfrak{q})  = - \Lambda( 1 - \tfrac{1}{2} |\mathfrak{q})$,
due to the definitions of $\mathfrak{q}$ and $\Lambda(\cdot|q)$.

\subsection{Case $-1<\Delta<0$}\label{sec:existence 2}

As before, introduce the smallest positive integer $k=k(\Delta)$ such that~\eqref{eq:def k1} holds. 

We are unable to use the map $\Phi$ from the previous subsection as $\vartheta$ is now increasing. We therefore change the map slightly and 
consider the map $\Psi:\mathbb R^n\rightarrow \mathbb R^n,\pmb\lambda\mapsto\pmb\mu$ for which $\mu_i$ is defined for every $1\le i\le n$ by 
\begin{equation}\label{ecritrue eqn pour mui cas delta negatif massless} 
	\mathfrak{p}(\mu_i) = \tfrac{1}{N} \sum_{j=1}^n\vartheta(\lambda_i-\lambda_j)+\tfrac{2\pi}{N} I_i. 
\end{equation}
The map is again well-defined as $\mathfrak{p}$ is continuous, strictly increasing, and $\mathfrak{p}(\mathbb{R})$ is equal to $ [\zeta -\pi, \pi- \zeta]$, 
while for any $\pmb \lambda \in \mathbb{R}^n$, 
\begin{equation*}
 	- \tfrac{2n}{N} (\pi-\zeta) + \tfrac{\pi}{N} 
	\, < \,   \tfrac1{N}\sum_{j=1}^n\vartheta(\lambda_i-\lambda_j)+\tfrac{2\pi}{N} I_i  
	\, < \, \tfrac{2n}{N} (\pi-\zeta) - \tfrac{\pi}{N} \;, 
\end{equation*}
(we use that $|\vartheta|\le \pi-2\zeta$, $|I_i|\le (n-1)/2$) which ensures that the left-hand side of~\eqref{ecritrue eqn pour mui cas delta negatif massless} lies in the range of $\mathfrak{p}$ since $n\le N/2$. 
Thus, as before, we may find $R=R(\Delta,N)$ large enough such that $|\mu_i|< R$ for every $i$.

Again, we wish to prove that $\Psi$ is mapping $\overline\Omega_{k,R}$ to $\Omega_{k,R}$, which will imply the existence of a fixed point, 
and therefore a strictly $(k,\mathfrak{q})$-interlaced symmetric, strictly ordered solution to~\eqref{eq:BE}. 
Note that definition \eqref{eq:def k1} for $k$ entails that $k \leq C/\zeta$ for some constant $C>0$,
which implies~\eqref{eq:QC} by applying Lemma~\ref{lem:condensation}.

Fix $\pmb\lambda\in \overline\Omega_{k,R}$ and set $\pmb\mu := \Psi(\pmb\lambda)$. 
The strict monotonicity and the fact that $\mu_i\in(-R,R)$ are immediate consequences of 
the definition of $\Psi$ and the choice of $R$, and we do not give further details. 
Lemma~\ref{lem:condensation} applied to the decreasing function $\vartheta(\lambda_i-\cdot)$ implies that
\begin{align*}
	N\mathfrak{p}(\mu_i)
	&\le N\int\limits_{-\mathfrak{q}}^{\mathfrak{q}}\vartheta(\lambda_i-\lambda)\rho(\lambda|\mathfrak{q})d\lambda  \\
	& \quad + \tfrac{k+1}{2} \max\big\{ | \vartheta(\lambda_i-\lambda_1)-\vartheta(\lambda_i-\mathfrak{q})|, | \vartheta(\lambda_i-\lambda_n)-\vartheta(\lambda_i+\mathfrak{q})|  \big\} +2\pi I_i.
\end{align*}
Observe now that, due to \eqref{eq:def k1}, the maximum above is smaller than $|2\pi-4\zeta| < 2\pi\frac{k}{k+1}$. 
Since in addition $\lambda_i$ was assumed smaller than $\Lambda(i+\tfrac k2|\mathfrak{q})$, and since $\vartheta$ is increasing, we conclude that
\begin{align*}
	N\mathfrak{p}(\mu_i)
	&< N\int\limits_{-\mathfrak{q}}^{\mathfrak{q}}\vartheta( \Lambda(i+\tfrac k2|\mathfrak{q} )-\lambda)\rho(\lambda|\mathfrak{q})d\lambda+2\pi I_{i}
	+2\pi\tfrac{k}	2=N\mathfrak{p}\big( \Lambda(i+\tfrac k2|\mathfrak{q}) \big),
\end{align*}
where the last equality follows from~\eqref{eq:rewriting} and the definition of $\Lambda(i+\tfrac k2|\mathfrak{q})$. Since $\mathfrak{p}$ is increasing, we get that $\mu_i< \Lambda(i+\tfrac k2|\mathfrak{q})$. 
Similarly, one proves that $\mu_i> \Lambda(i-\tfrac k2|\mathfrak{q})$.

\subsection{Case $\Delta < -1$}\label{sec:existence 3}
 
For $\Delta < -1$ and $q>0$, first observe that the function 
$\varphi(\lambda):=2 \vartheta(\lambda)-\vartheta(\lambda+\tfrac{\pi}{2})-\vartheta(\lambda-\tfrac{\pi}{2})$ 
is  increasing on $[0,\pi/4]$ and decreasing on $[\pi/4,\pi/2]$
with $\varphi(0)=\varphi(\pi/2)=0$ and $\varphi(\pi/4)\in (0,2\pi)$. 
Moreover, $\varphi$ is $\pi$-periodic and even (due to the corresponding properties for $\vartheta$), 
and therefore $\varphi(\pi/4)$ is its maximum over all $\mathbb R$. 
Then,  introduce the smallest {\em odd} integer $k\in\mathbb Z_+$ such that 
\begin{equation}\label{eq:def k3}
	2\pi\frac{k}{k+1}>\sup\big\{ |\varphi(\lambda)| \, : \,  \lambda \in \mathbb R \big\}   = \varphi( \tfrac{\pi}{4})  \,.
\end{equation}

In the present case, we reuse the map $\Psi$ defined in Section~\ref{sec:existence 2}. 
This map is well defined since, in this regime of $\Delta$, $\mathfrak{p}$ is strictly increasing and $\mathfrak{p}(\mathbb{R})=\mathbb{R}$.

For small values of $N$, the existence of a fixed point of $\Psi$ (or equivalently of a solution to \eqref{eq:BE}) that is not necessarily $(k,R)$-interlaced is easily obtained. 
Its condensation may be derived by adjusting the constant $C$ in \eqref{eq:QC}. 
Henceforth we focus on values of $N$ above a threshold independent of $\Delta$ chosen below. 

Let $\pmb\mu = \Psi(\pmb\lambda)$ for some  $\pmb\lambda\in \overline\Omega_{k,R}$.
As in the previous part, it is immediate that $\pmb\mu$ is symmetric and strictly ordered.
One should still establish the boundedness and the strict $(k,\mathfrak{q})$-interlacement of $\pmb{\mu}$. 
We will argue that the former is a direct consequence of the later. We thus first establish interlacement.

To do so, one should start by establishing a generalisation of Lemma~\ref{lem:condensation} to the case of a function 
$g :[0,+\infty) \to \mathbb R$ which is monotonous on $[0,\pi/2]$.
Here, we only treat the case of $n$ even and leave the details of  $n$ odd to the reader, since it only leads to minor modifications. 
We claim that, for any such function $g$, 
\begin{equation}\label{eq:condensation_sym}
	 \Big| \sum_{i=1+ \frac{n}{2}  }^{ n } g(\lambda_i) \, - \, N \int\limits_{0}^{\mathfrak{q}} g(\mu) \rho(\mu |\mathfrak{q}) d \mu   \Big| 
	 \, \leq \, \frac{k+1}{2} \text{max}\big\{ \mathfrak{m}^+[g], \mathfrak{m}^-[g] \big\},  
\end{equation}
where 
$$
  \mathfrak{m}^+[g] :=  \text{max} \big\{ | g(\lambda_j)-g(0) | \, : \, j=n-\tfrac{k-1}{2},\dots, n \big\} \quad \text{and} \quad 
  \mathfrak{m}^-[g]  := | g( \mathfrak{q} )- g( \lambda_{\frac{n}{2}+1} )| .  
$$
The inequality above is obtained in the same way as Lemma~\ref{lem:condensation}, so we provide no further details. 

Define $\vartheta^{\rm sym}(\lambda,\mu):=\vartheta(\lambda-\mu)+\vartheta(\lambda+\mu)$. 
Then a direct computation shows that the functions $\vartheta^{\rm sym}(\lambda_i,\cdot)$ for $i = 1,\dots, n$ are monotonous on $[0,\pi]$. 
Applying \eqref{eq:condensation_sym} to $\vartheta^{\rm sym}(\lambda_i,\cdot)$ we find
\begin{align}
 	\mathfrak{p}(\mu_i) 
	& = \tfrac{1}{N} \sum_{j=1+ \frac{n}{2} }^{n}\vartheta^{\rm sym}(\lambda_i,\lambda_j)+\tfrac{2\pi}{N} I_i \nonumber\\ 
	& \geq \int\limits_{0}^{\mathfrak{q}} \vartheta^{\rm sym}(\lambda_i,\mu)\rho(\mu |\mathfrak{q}) d \mu 
	\,-\,\tfrac{k+1}{2N} \text{max}\big\{\mathfrak{m}^+[\vartheta^{\rm sym}(\lambda_i,\cdot)], \mathfrak{m}^-[\vartheta^{\rm sym}(\lambda_i,\cdot)]\big\}+\tfrac{2\pi}{N} I_i . \label{eq:lalamano} 
\end{align}

It follows from the lower bound $\rho(x|q)\ge \rho(x)\ge \frac1{2\zeta}$ established in Lemma~\ref{lemme borne sur density} of Appendix~\ref{sec:appendix2}, 
that for each $j$
\begin{align}
	\Lambda(n+\tfrac{1}{2}|\mathfrak{q}) \, - \, \Lambda(n-j +\tfrac{1}{2} | \mathfrak{q})   
	\le\;2\zeta \hspace{-3mm} \int\limits_{\Lambda(n-j+\frac{1}{2}|\mathfrak{q})}^{\Lambda(n+\frac{1}{2}|\mathfrak{q})}\hspace{-3mm}\rho(\lambda|\mathfrak{q})d\lambda
	= 2 j\tfrac{\zeta}{N}. 
	\label{bornes espacement des Lambda bornants}
\end{align}
 Therefore, the $(k,\mathfrak{q})$-interlacement of $\pmb\lambda$ allows one to infer that  $\lambda_j=\mathfrak{q}  + O(\tfrac{k}{N})$, 
 with the $O(.)$ here and below being uniform in $j=n-\tfrac{k-1}{2},\dots, n$ and $\Delta < -1$. 
 Hence, since $\mathfrak{q}\leq \pi/2$, any $\lambda_j$ appearing in the definition of $\mathfrak{m}^+[g]$  exceeds $\pi/2$ by at most~$O(\tfrac{k}{N})$. 
 
 A direct computation shows that $\pi/2$ is a local extremum of $\mu \mapsto \vartheta^{\rm sym}(\lambda,\mu)$ 
 on $[\pi/2-\eta, \pi/2+\eta]$ for some $\eta>0$ independent of $\Delta$ and $\lambda$. 
 Thus, we conclude that for all $N$ large enough (which we will assume henceforth for reasons described at the start of the proof), every $2n\leq N$
 and $i\in \{1+\tfrac{n}{2},\dots, n\}$,
$$
\text{max}\big\{\mathfrak{m}^+[  \vartheta^{\rm sym}(\lambda_i,\cdot) ], \mathfrak{m}^-[  \vartheta^{\rm sym}(\lambda_i,\cdot)  ]\big\} \leq 
| \vartheta^{\rm sym}(\lambda_i,  \tfrac{\pi}{2} ) - \vartheta^{\rm sym}(\lambda_i,  0 )  | = |\varphi(\lambda_i)  |   . 
$$
Plugging the above into~\eqref{eq:lalamano}, we find
$$
	\mathfrak{p}(\mu_i)   \geq   \int\limits_{0}^{\mathfrak{q}} \vartheta^{\rm sym}(\lambda_i,\mu)\rho(\mu |\mathfrak{q}) d \mu 
	\, - \, \tfrac{k+1}{2N}\varphi(\tfrac{\pi}{4})   +\tfrac{2\pi}{N} I_i . 
$$
Invoking the choice of $k$ and the fact that $\lambda \mapsto \vartheta(\lambda) $ is increasing on $\mathbb{R}$ gives that
$$
 \mathfrak{p}(\mu_i)   >  \int\limits_{ - \mathfrak{q} }^{ \mathfrak{q} } \vartheta \big( \Lambda(n-\tfrac{k}{2}| \mathfrak{q})  - \mu \big) \rho(\mu |\mathfrak{q}) d \mu \,  + \, \tfrac{2\pi}{N} I_{i-\frac{k}{2}} \, = \, 
 \mathfrak{p}( \Lambda(n-\tfrac{k}{2}| \mathfrak{q}) ) . 
$$
This yields the lower bound for the $(k,\mathfrak{q})$-interlacement of $\pmb{\mu}$. The upper bound is obtained in an analogous way. 
 
Finally, the $(k,\mathfrak{q})$-interlacement of $\pmb{\mu}$ and the upper bound
$ \Lambda(n+\tfrac{k}{2}| \mathfrak{q})\, - \,  \Lambda(n+\tfrac{1}{2}|\mathfrak{q})    \le  \frac{\zeta(k-1)}{N}$ ensure that 
$\mu_n\leq \mathfrak{q} + \tfrac{\zeta(k-1)}{N}$ and thus, by symmetry, that $\mu_i \in (-R,R)$ with $R:=\tfrac\pi2+\frac{\zeta(k-1)}{N}$. 
The latter establishes that $\Psi( \overline{\Omega}_{k,R})\subset\Omega_{k,R}$.  

As before, we deduce through Brouwer's fixed point theorem that $\Psi$ admits a fixed point, 
which provides a solution to~\eqref{eq:BE} satisfying the conditions of Theorem~\ref{thm:existence}. 
The validity of~\eqref{eq:QC} is due to  Lemma~\ref{lem:condensation} and the fact that $k \leq C/\zeta$ for some universal constant $C$;
the latter follows directly from \eqref{eq:def k3} and an upper bound on $\varphi(\tfrac\pi4)$ easily obtained from the definition of $\vartheta$ and Appendix~\ref{sec:A3}.

\section{Proof of Theorem~\ref{thm:analyticity}}\label{sec:analyticity}

Fix $n\le N/2$. Since the dependence on $\Delta$ plays a role in this argument, we recall it in the subscript of 
the map $\mathsf T:(\Delta,\pmb\lambda)\mapsto\mathsf T_\Delta(\pmb\lambda)$ from $[-\infty, 1) \times\mathbb R^n$ to $\mathbb R^n$ defined by the formula 
\begin{align}\label{eq:ahah}
\big[ \mathsf T_\Delta(\pmb\lambda)\big]_i =\tfrac{1}{2\pi}\mathfrak{p} (\lambda_i) - \tfrac{1}{2\pi N} \sum_{j = 1}^{n} \vartheta (\lambda_i-\lambda_j)-\tfrac{1}{N}I_i \; , \quad 1 \leq i \leq n . 
\end{align}
We recall that $\mathfrak{p}$ and $\vartheta$ appearing above do depend on $\Delta$, \textit{c.f.} Appendix~\ref{sec:appendix0}. 

The zeroes of $\mathsf T_\Delta$ correspond to the solutions to~\eqref{eq:BE} for $\Delta$. The following proposition will play a key role in the proof of Theorem~\ref{thm:analyticity}. 

\begin{proposition}\label{prop:invertibility}
    Let $k$ be as defined by~\eqref{eq:def k1} for $-1< \Delta < 1$ and~\eqref{eq:def k3}  for $\Delta <-1$. 
    Then, there exists some universal constant $C$ such that, for every $\Delta \in (-\infty, 1)\setminus \{-1\}$, every $N$ large enough, and every
    \[
    n\le N/2-Ck^2 \pmb{ \mathbb{1} }_{(-1,0)}(\Delta) \, , 
    \] 
     we have that $d\mathsf T_\Delta$ is invertible at $\pmb\lambda$  for any  $(k, \mathfrak{q})$-interlaced, ordered, symmetric $\pmb\lambda$.
\end{proposition}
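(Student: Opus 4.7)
Plan: The Jacobian $d\mathsf T_\Delta(\pmb\lambda)$ has the explicit structure
\[
[d\mathsf T_\Delta]_{ii}=\xi(\lambda_i)-\tfrac{1}{N}\sum_{j\neq i}K(\lambda_i-\lambda_j),\qquad [d\mathsf T_\Delta]_{il}=\tfrac{1}{N}K(\lambda_i-\lambda_l)\ \ (l\neq i),
\]
with $\xi=\tfrac{1}{2\pi}\mathfrak p'$ and $K=\tfrac{1}{2\pi}\vartheta'$; invertibility amounts to showing that $d\mathsf T_\Delta\pmb v=0$ forces $\pmb v=0$. Across all regimes I would split $d\mathsf T_\Delta=\tilde J+E$, where $\tilde J_{il}=\delta_{il}\rho(\lambda_i|\mathfrak q)+\tfrac{1}{N}K(\lambda_i-\lambda_l)$ (no exclusion on the sum). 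The discrepancy $E$ is purely diagonal, and the integrated continuum Bethe equation \eqref{eq:cBE} rewrites $\xi(\lambda_i)-\rho(\lambda_i|\mathfrak q)=\int_{-\mathfrak q}^{\mathfrak q}K(\lambda_i-\mu)\rho(\mu|\mathfrak q)d\mu$, so that
\[
E_{ii}=\int_{-\mathfrak q}^{\mathfrak q}K(\lambda_i-\mu)\rho(\mu|\mathfrak q)d\mu-\tfrac{1}{N}\sum_{j=1}^{n}K(\lambda_i-\lambda_j)+O(1/N),
\]
which by Lemma~\ref{lem:condensation} applied to the monotone piece(s) of $K(\lambda_i-\cdot)$ is $O(k/N)$ uniformly in $i$, where the $(k,\mathfrak q)$-interlacement of $\pmb\lambda$ is used.

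Next, $\tilde J$ is a Nyström-type discretization of the continuum operator $\mathcal L=(\mathrm{id}+\mathcal K)\circ\text{mult}\bigl(\rho(\cdot|\mathfrak q)\bigr)$ acting on functions over $[-\mathfrak q,\mathfrak q]$: if $v_l=u(\lambda_l)$, then $(\tilde J\pmb v)_i=\rho(\lambda_i|\mathfrak q)u(\lambda_i)+\tfrac{1}{N}\sum_lK(\lambda_i-\lambda_l)u(\lambda_l)\approx(\mathcal Lu)(\lambda_i)$. The operator $\mathcal L$ is invertible with $\mathcal L^{-1}=\text{mult}(1/\rho)\circ(\mathrm{id}-\mathcal R)$, and provided $\rho(\cdot|\mathfrak q)$ is bounded below and $[-\mathfrak q,\mathfrak q]$ is bounded, a standard perturbation argument transfers this to a bound on $\|\tilde J^{-1}\|_{\infty\to\infty}$. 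Combined with $\|E\|_{\infty\to\infty}=O(k/N)$, a Neumann series then gives invertibility of $d\mathsf T_\Delta$ for $N$ large.

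The case analysis is now straightforward. For $\Delta\in[0,1)$ one can bypass the perturbative scheme and use a direct maximum principle: here $\vartheta$ is decreasing, so $K\leq 0$, while $\xi>0$; if $d\mathsf T_\Delta\pmb v=0$ and $v_{i^*}=\max_i v_i>0$ (up to sign), the $i^*$-th equation reads $\xi(\lambda_{i^*})v_{i^*}=\tfrac{1}{N}\sum_{l\neq i^*}K(\lambda_{i^*}-\lambda_l)(v_{i^*}-v_l)$, a sum of nonpositive terms, which contradicts the strict positivity of the left-hand side. For $\Delta<-1$ the interval $I=[-\tfrac{\pi}{2},\tfrac{\pi}{2}]$ is compact, $\mathfrak q\leq \pi/2$, and $\rho\geq\tfrac{1}{2\zeta}$ by Lemma~\ref{lemme borne sur density}, so the perturbative scheme applies uniformly. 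For $\Delta\in(-1,0)$ the constraint $n\leq N/2-Ck^2$ is exactly what ensures, through $\mathfrak q=Q(n/N)$ and monotonicity of $Q$, that $\mathfrak q$ stays bounded away from the singular value $\infty$; with $\mathfrak q$ bounded, $\rho(\cdot|\mathfrak q)$ is bounded below on $[-\mathfrak q,\mathfrak q]$ and the same argument goes through.

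The main obstacle is the quantitative control in the regime $\Delta$ near $-1$, where $k(\Delta)\to\infty$: the $O(k/N)$ error must be smaller than the inverse of $\|\tilde J^{-1}\|$, which itself degenerates if either $\mathfrak q\to\infty$ (in the $\Delta\in(-1,0)$ case) or $\rho$ approaches $0$ somewhere. The $Ck^2$ cushion in $n$, together with the $N\geq N_0(\Delta)$ threshold allowed later in Theorem~\ref{thm:analyticity}, is precisely calibrated to absorb this $\Delta$-dependence; the quantitative lower bound on $\rho(\cdot|q)$ provided by Appendix~\ref{sec:appendix2} is what makes the Neumann argument close up uniformly on the parameter ranges stated.
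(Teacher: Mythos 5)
Your treatment of $\Delta\in[0,1)$ (a maximum principle, equivalent in substance to the positive-definiteness argument the paper uses there) is fine, and your Nystr\"om-discretization-versus-continuum-operator picture is the right idea for $\Delta<-1$; note however that the ``standard perturbation argument'' transferring invertibility of $\idmap+\mathcal K$ to the matrix is precisely the nontrivial content in that regime --- the paper carries it out by comparing $\det_N[\mathrm I_N+M/N]$ with the Fredholm determinant of $\idmap+\mathcal K$ via trace and Hilbert--Schmidt estimates, and it closes only because there $\mathfrak q\le\pi/2$ and $\rho\ge 1/(2\zeta)$ (Lemma~\ref{lemme borne sur density}).

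The genuine gap is in your case $\Delta\in(-1,0)$. You claim that $n\le N/2-Ck^2$ forces $\mathfrak q=Q(n/N)$ to stay ``bounded away from the singular value $\infty$'', hence $\rho(\cdot|\mathfrak q)$ bounded below. This is false: the hypothesis allows $n/N\to1/2$ as $N\to\infty$, and by Proposition~\ref{lem:asymptotic |Delta|<1}(iii) one has $Q(m)\sim\tfrac{\zeta}{\pi}\log\tfrac{1}{1/2-m}\to\infty$, while $\rho(\mathfrak q|\mathfrak q)$ is only of order $\zeta^{-1}(\tfrac12-\tfrac nN)$ (Proposition~\ref{lem:properties rho}(iii)) and tends to $0$. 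Thus $\mathcal L^{-1}=\mathrm{mult}(1/\rho)\circ(\idmap-\mathcal R)$ has norm blowing up in exactly the regime at stake, and your Neumann series does not close; the quantitative bound you invoke from Appendix~\ref{sec:appendix2} is the $\Delta<-1$ bound and is unavailable here. The correct mechanism, which the paper implements, is a weighted diagonal dominance of $A_{ij}=[d\mathsf T_\Delta(\pmb\lambda)]_{ij}/\rho(\lambda_j|\mathfrak q)$: the diagonal gap is $1/(k+1)$ by the choice~\eqref{eq:def k1}, the error terms are of size $k/(N\rho(\lambda_j|\mathfrak q))\lesssim k/(N/2-n-C_1k)$ by the degenerate lower bound~\eqref{eq:crucial}, and it is precisely the cushion $N/2-n\ge Ck^2$ that makes this ratio $O(1/(Ck))$, i.e.\ strictly smaller than the gap. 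In other words, the $Ck^2$ condition compensates the vanishing of $\rho$ at the edge of $[-\mathfrak q,\mathfrak q]$, not a putative boundedness of $\mathfrak q$; without this bookkeeping the $(-1,0)$ case is not established.
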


With this proposition at hand, we are in position to prove the theorem. 

\begin{proof}[Proof of Theorem~\ref{thm:analyticity}]
Taking into account the definition of $\mathsf T_\Delta$ and introducing $\Omega(\Delta):=\Omega_{k(\Delta),R(\Delta,N)}$,
with $\Omega_{k,R}$ given by~\eqref{definition Omega k R} and $R(\Delta,N)$ as constructed in Subsections~\ref{sec:existence 1},~\ref{sec:existence 2} and~\ref{sec:existence 3},
depending on the value of $\Delta$, we can restate the theorem as the existence of an analytic family $\Delta\mapsto\pmb\lambda(\Delta)$ 
such that $\pmb\lambda(\Delta)\in \Omega(\Delta)$ satisfies  $\mathsf T_\Delta(\pmb\lambda(\Delta))=0$ for every $\Delta$. 

Consider some $\Delta_0$ for which we are in the possession of $\pmb\lambda(\Delta_0)\in\Omega(\Delta_0)$ satisfying $\mathsf T_{\Delta_0}(\pmb\lambda(\Delta_0))=0$. 
Using Proposition~\ref{prop:invertibility}, the implicit function theorem for analytic functions gives the existence of an analytic family  $\Delta \mapsto \pmb\lambda(\Delta)\in\mathbb R^n$ 
such that $\mathsf T_\Delta(\pmb\lambda(\Delta))=0$ in a small neighbourhood of $\Delta_0$. 
Continuity implies that by reducing the neighbourhood if need be, we can further assume that $\pmb\lambda(\Delta)\in \Omega(\Delta)$.

Also note that a continuous limit, as $\Delta$ tends to some $\Delta_1$, of $\pmb\lambda(\Delta)\in \Omega(\Delta)$ with $\mathsf T_\Delta(\pmb\lambda(\Delta))=0$ converges to $\pmb\lambda(\Delta_1)\in \overline\Omega(\Delta_1)$
with $\mathsf T_{\Delta_1}(\pmb\lambda(\Delta_1))=0$. But, we saw in the previous section that solutions to~\eqref{eq:BE} in $\overline\Omega(\Delta_1)$ are necessarily in $\Omega(\Delta_1)$. 
Together with the previous paragraph, this implies the existence of an analytic family of solutions on any open interval
on which the conditions of Proposition~\ref{prop:invertibility} hold, and which contains at least one value $\Delta$ for which 
there exists a solution  $\pmb\lambda \in \Omega(\Delta)$ to \eqref{eq:BE}. 
The intervals of Theorem~\ref{thm:analyticity} are indeed such that the conditions of Proposition~\ref{prop:invertibility} hold;
the existence of solutions for some $\Delta$ in these intervals is ensured by Theorem~\ref{thm:existence} (or alternatively by  Lemmata~\ref{lem:PF0} and~\ref{lem:PFinfinity}, see below).

To prove the uniqueness of the solutions for all $\Delta$, it suffices to prove it for a single value $\Delta_1$ in each of the two intervals of Theorem~\ref{thm:analyticity}. Indeed, assuming the existence of multiple solutions at some value of $\Delta$, 
the argument above implies the existence of multiple analytic families of solutions in the whole interval. 
These families may not cross inside the interval, due to the implicit function theorem, and would therefore contradict the uniqueness at $\Delta_1$. 
We choose to check the uniqueness of solutions for $\Delta_1=0$ and $\Delta_1$ a very large negative number. 
This is done by solving~\eqref{eq:BE} explicitly for $\Delta=0$ and $\Delta=-\infty$, then extending the property to large negative numbers by continuity; 
see Lemmata~\ref{lem:PF0} and~\ref{lem:PFinfinity} below for more details.
\end{proof}

We now focus on the proof of Proposition~\ref{prop:invertibility}, and divide it into three subsections depending on the range of $\Delta$ as before.  
Note that since $K=\tfrac1{2\pi}\vartheta'$ and $\xi=\tfrac1{2\pi}\mathfrak{p}'$, the matrix $d\mathsf T_{\Delta}
(\pmb\lambda)$ can be evaluated as
    \begin{equation*}
\Big[ d\mathsf T_{\Delta}(\pmb\lambda) \Big]_{ij}=
\begin{cases}
\displaystyle \xi(\lambda_i) -\frac1N \sum_{\ell \not = i}K(\lambda_i - \lambda_\ell) \qquad & i=j,\\
\displaystyle\frac{K(\lambda_i - \lambda_j)}{N} \qquad & i\ne j.
\end{cases}
\end{equation*}

\subsection{Proof of Proposition~\ref{prop:invertibility} when $0\le\Delta<1$}

For any $\pmb \lambda \in \mathbb R^n$, the matrix $d\mathsf T_{\Delta}(\pmb\lambda)$ is symmetric (since $K$ is even) and positive definite:
\begin{align*}
 \big( \boldsymbol{v} , d\mathsf T_{\Delta}(\pmb\lambda)  \boldsymbol{v} \big) \,& = \, 
 \sum\limits_{i=1}^{N}\Big( \xi(\lambda_i) -\frac1N \sum_{\ell \not = i}K(\lambda_i - \lambda_\ell) \Big) v_i^2  \, + \, \sum\limits_{i\not=j }^{ N} v_i v_j \frac{K(\lambda_i - \lambda_j)}{N} \\
\;& = \; \sum\limits_{i=1}^{N}  \xi(\lambda_i) v_i^2 -   \frac{1}{2N} \sum\limits_{i\not=j }^{ N} (v_i -v_j)^2  K(\lambda_i - \lambda_j) \; \geq \; 0,
\end{align*}
since $K\le 0$ and $\xi > 0$, as these are derivatives of decreasing and strictly increasing functions, respectively (see also Appendix~\ref{sec:appendix0}). 
As a consequence, $d\mathsf T_{\Delta}(\pmb\lambda)$ is invertible. 

\begin{rem}
Alternatively, in this regime, one can  obtain existence and uniqueness of the solution to the Bethe equation~\eqref{eq:BE} as follows. 
Since  $d\mathsf T_{\Delta}(\pmb\lambda)$ is a positive definite matrix, the Yang-Yang action \cite{YangYang66}
\[
S(\pmb\lambda):=\sum_{i=1}^n \Big(\int\limits_0^{\lambda_i}\mathfrak{p}(\mu)d\mu-\frac1{2N}\sum_{j=1}^n\int\limits_0^{\lambda_i-\lambda_j}\vartheta(\mu)d\mu-\frac{2\pi I_i}{N}\lambda_i\Big)
\]
is strictly convex, and has therefore at most one extremum which, if it exists, is its minimum. The existence thereof 
can be obtained in at least three ways. Either one uses the fixed point theorem in the previous section, or one checks that $S$ tends to infinity as soon as one of the $\lambda_i$ tends to 
infinity (this is slightly technical), or finally one observes that at $\Delta=0$ there is an explicit solution and that the
implicit function theorem guarantees that this solution extends into an analytic function on $0\le\Delta < 1$.
\end{rem}

\subsection{Proof of Proposition~\ref{prop:invertibility} when $-1<\Delta<0$}\label{sec:analyticity 0<Delta<1}

We remind to the reader that in this regime we restrict our attention to $n$ satisfying 
\begin{equation}
	n\le N/2-Ck^2,\label{eq:ahaha}
\end{equation}
where $k$ is given by~\eqref{eq:def k1} while $C$ (which is independent of $N$ and $k$) is yet to be determined.  Also, we recall that $\mathfrak q$ is given by~\eqref{definition q frak}.

Fix $\pmb\lambda$ as in the proposition. The matrix $d\mathsf T_{\Delta}(\pmb\lambda)$ is no longer obviously positive definite and therefore not obviously invertible.  In order to prove invertibility, we rather show that the matrix $A$ defined by 
\begin{equation*}
A_{ij}:=\frac{d\mathsf T_{\Delta}(\pmb\lambda)_{ij}}{\rho(\lambda_j|\mathfrak q)}=
\begin{cases}
\displaystyle \frac{1}{\rho(\lambda_i|\mathfrak q)} \big[\xi(\lambda_i) -\frac{1}{N} \sum_{ \substack{ \ell=1 \\ \not= i } }^{n}K(\lambda_i - \lambda_\ell)\big]\qquad & i=j,\\
\displaystyle\frac{K(\lambda_i - \lambda_j)}{N\rho(\lambda_j|\mathfrak q)} \qquad & i\ne j,
\end{cases}
\end{equation*}
is diagonal dominant\footnote{Meaning that $A_{ii} > \sum_{j\ne i} | A_{ij} |$ for every $i$.}  
hence invertible (note that Proposition~\ref{lem:properties rho}(i) of Appendix~\ref{sec:appendix1} gives that $\rho(\lambda|\mathfrak q)>0$ on $\mathbb{R}$ and 
therefore the matrix $A$ is well-defined). The invertibility of $d\mathsf T_{\Delta}(\pmb\lambda)$ follows trivially from that of $A$. Checking diagonal dominance relies on two computations.

On the one hand, Lemma~\ref{lem:condensation} applied to $\lambda \mapsto K(\lambda_j-\lambda)$ (since $\pmb\lambda$ is $(k,\mathfrak{q})$-interlaced)  
together with~\eqref{eq:cBE} gives
\begin{align}
A_{jj} 
&\geq 
 \frac{1}{\rho(\lambda_j|\mathfrak q)} \big(\xi(\lambda_j) - \int_{-\mathfrak q}^\mathfrak q K(\lambda_j-\lambda)\rho(\lambda | \mathfrak q) d\lambda - \frac{k}{N}\|K'\|_{L^1(\mathbb{R})} \big) =
1-\frac{k\|K'\|_{L^1(\mathbb R)}}{N\rho(\lambda_j|\mathfrak q)}.\label{eq:ihih}
\end{align}

On the other hand, Lemma~\ref{lem:condensation} applied to the 
function
\[
\lambda \mapsto f_j(\lambda) :=K(\lambda_j-\lambda)/\rho(\lambda|\mathfrak q)
\]
gives\footnote{We also use the trivial facts that $K(0)\ge0$, $\rho(\lambda | \mathfrak q)\geq 0$ and $K=\tfrac1{2\pi}\vartheta'$ in the first inequality.}
\begin{align}
\sum_{\ell \not = j} A_{j \ell} &= \frac1N\sum_{\ell \not = j} \frac{K(\lambda_j - \lambda_\ell)}{ \rho(\lambda_\ell | \mathfrak q)} \nonumber
\\
&\leq
\int_{-\mathfrak q}^\mathfrak q \frac{K(\lambda_j - \lambda)}{\rho(\lambda | \mathfrak q)} \rho(\lambda | \mathfrak q) d\lambda + \frac{k}{N}\|f_j'\|_{L^1(\mathcal{I}_k)}\nonumber 
 \\&= \frac{ \vartheta(\lambda_j+\mathfrak q) - \vartheta(\lambda_j-\mathfrak q)}{2\pi} + \frac{k}{N} \|f_j'\|_{L^1(\mathcal{I}_k)}\nonumber
\\&\stackrel{\eqref{eq:def k1}}\le \frac k{k+1}+ \frac{k}{N}\|f_j'\|_{L^1(\mathcal{I}_k)}, \label{eq:ihihih}
\end{align}
where $\mathcal{I}_k:=[ \Lambda(1-\tfrac{k}{2}|\mathfrak{q}) , \Lambda(n+\tfrac{k}{2}|\mathfrak{q})  ]$.

Now, we estimate the error terms (meaning the terms with factor $\tfrac kN$) in~\eqref{eq:ihih} and~\eqref{eq:ihihih} separately.  
Below, the constants $C_i$ are independent of everything else. 
We use analytic properties of the solution to the continuum Bethe Equation~\eqref{eq:cBE} that are proved in Proposition~\ref{lem:properties rho} of Appendix~\ref{sec:appendix1}. The assumption~\eqref{eq:ahaha} plays an essential role in what follows. 

We start by estimating the error in \eqref{eq:ihih}. 
Using the fact that $|\rho^{\prime}(\lambda)| \leq \pi \rho(\lambda)/\zeta$ (see Appendix \ref{sec:appendix0}) 
and further invoking Proposition~\ref{lem:properties rho}(i), we find:
\begin{equation}
\rho(-\mathfrak q)-\rho(\Lambda(1-\tfrac k2|\mathfrak q))\le\tfrac{ \pi }\zeta\int_{\Lambda(1-k/2|\mathfrak q)}^{\Lambda(1/2|\mathfrak q)} \rho(\lambda)d\lambda 
\le  \tfrac{\pi}\zeta\int_{\Lambda(1-k/2|\mathfrak q)}^{\Lambda(1/2|\mathfrak q)} \rho(\lambda|\mathfrak q)d\lambda 
=  \tfrac{\pi}\zeta\tfrac{k-1}{2N}.
\label{ecriture borne if sur difference des rho lambdas}
\end{equation}
Furthermore, by
Proposition~\ref{lem:properties rho}(iii)  
\begin{equation}\label{eq:ihihihih}
\rho(-\mathfrak q) \ge\tfrac{c_1}{\zeta}( \tfrac{1}{2}  - \tfrac{n}{N} ) \quad \text{with} \quad c_1>0.
\end{equation}
Combining the two last displayed equations, we infer a lower bound 
\begin{equation}\label{eq:ho1}
\rho(\Lambda(1-\tfrac k2|\mathfrak q))\ge\tfrac{c_1}{\zeta}(\tfrac12-\tfrac{n+C_1k}{N})\quad \mathrm{with} \quad C_1: = \frac{\pi}{2c_1}. 
\end{equation}
Then Proposition~\ref{lem:properties rho}(i), the monotonicity of $\rho(\cdot)$ on $(-\infty, 0]$ and 
the symmetry and $(k,\mathfrak q)$-interlacement of $\pmb \lambda$ give that
\begin{equation}\label{eq:crucial}
	\rho(\lambda_j|\mathfrak q)\ge \rho(\lambda_j)\ge \rho(\lambda_1)\ge \rho(\Lambda(1-\tfrac k2|\mathfrak q))\ge\tfrac{c_1}{\zeta}(\tfrac12-\tfrac{ n + C_1 k}{N}) .
\end{equation}

Now, since $K$ is unimodal, even, and has limits $0$ at $\pm \infty$, $\|K'\|_{L^1(\mathbb R)}= 2K(0)$. 
Using this and the previous paragraph, we find
\begin{equation}\label{eq:uh}
	\frac{k}{N}\frac{\|K'\|_{L^1(\mathbb R)}}{\rho(\lambda_j|\mathfrak q)}
	\le \frac{ 2\zeta K(0) k  }{ c_1 (N/2-n-C_1k ) }
	\le \frac{ C_2 k  }{ N/2-n-C_1k} \;,
\end{equation}
where the last inequality is obtained by observing that
$K(0)\le C_0/\zeta$ for some $\zeta$-independent constant $C_0$.

We now turn to the error term in~\eqref{eq:ihihih}.
First, we have that for $1\le j\le n$, $t\in \mathcal{I}_k$, and $N$ large enough,
\begin{align}\label{eq:error}
    |f_j'(t)| & \le \frac{|K'(\lambda_j-t)|}{\rho(t|\mathfrak q)}+|K(\lambda_j-t)|\frac{|\rho'(t|\mathfrak q)|}{\rho(t|\mathfrak q)^2}\nonumber   \\
    & \le     \frac{  |K'(\lambda_j-t)| +  |K(\lambda_j-t)| \frac{C_3}{\zeta} 
    (1+\frac{\rho(\mathfrak q)}{\rho(t|\mathfrak{q})}) }{\rho(t|\mathfrak{q})},
\end{align}
where in the second inequality we used Proposition~\ref{lem:properties rho}(ii).

Now, note that by Proposition~\ref{lem:properties rho}(i) and \eqref{eq:ho1}, for every $t\in \mathcal I_k$,
\begin{equation}\label{eq:ho3}
	\frac{1}{\rho(t|\mathfrak{q})} 
	\le \frac{1}{\rho(t)}
	\le \frac1{\rho(\Lambda(1-\tfrac k2|\mathfrak q))}
	\le \frac{\tfrac1{c_1}\zeta N}{N/2 -n-C_1 k}.
\end{equation}
Also, Proposition~\ref{lem:properties rho}(iii) gives that
\begin{equation}\label{eq:ho2}
	\rho(\mathfrak q) \le\tfrac{C_4}{\zeta}( \tfrac{1}{2}  - \tfrac{n}{N} ) \quad \text{with} \quad C_4>0.
\end{equation} 
Plugging~\eqref{eq:ho3} and \eqref{eq:ho2} in \eqref{eq:error} and then integrating over $\mathcal I_k$, we find that 
\begin{align*}
    \frac{k}{N}\|f_j'\|_{L^1(\mathcal{I}_k)} & \le \frac{k}{N}\frac{\tfrac1{c_1}\zeta N}{N/2 -n-C_1 k} 
    \Big\{ \|K^{\prime}\|_{L^1(\mathbb{R})}+  \|K\|_{L^1(\mathbb{R})} \tfrac{C_3}{\zeta} \Big(1+ \tfrac{C_4}{c_1}\frac{   \,  N/2-n   }{N/2-n-C_1 k  }  \Big)   \Big\} . 
\end{align*}
By choosing $C$ appropriately in~\eqref{eq:ahaha}, we may assume that the parenthesis in the right-hand side above is bounded by a uniform constant. 
Using the previously mentioned facts that $\|K\|_{L^1(\mathbb R)}$ and $\zeta\|K'\|_{L^1(\mathbb R)}$ are bounded by a constant independent of $\zeta$, we conclude that the bracket above, when multiplied by $\zeta$, is bounded uniformly in $\zeta$. 
Thus, we may bound the error term of \eqref{eq:ihihih} as:
\begin{align}
    \frac{k}{N}\|f_j'\|_{L^1(\mathcal{I}_k)} & \le C_5\frac{k}{N/2 -n-C_1 k}.
    \label{eq:ohoh}
\end{align}

Plug \eqref{eq:uh} and  \eqref{eq:ohoh} in~\eqref{eq:ihih} and~\eqref{eq:ihihih}, respectively, to find that
$$A_{jj} - \sum_{\ell \neq j }A_{j\ell} \ge \frac{1}{k+1} -C_6 \frac{ k}{N/2 -n-C_1 k}.$$
Taking $C$ large enough in the statement of the proposition ensures that $A$ is indeed diagonal dominant.

\begin{rem}
    The difficulty in proving that $A_{ij}$ is diagonally dominant comes from the estimates involving $j$ close to $1$ or $n$ as approximating sums by integrals is not efficient for these values of $j$.
    Another way of seeing this is that when $j$ is far from $1$ and $n$ then $\rho(\lambda_j| \mathfrak{q})$ is larger and therefore the error term is smaller. Nonetheless, 
    numerics suggest that the matrix is diagonal dominant for every $-1<\Delta < 1$ and $n\le N/2$, as shown on the plots of 
    $\mathbf{m}:=\min\{A_{ii} - \sum\limits_{j \ne i} |A_{ij}|:1\le i\le n\}$ as a function of the system-size $N$, for $n=N/2$ and at four different values of $\Delta$. 

    \includegraphics[scale=0.20]{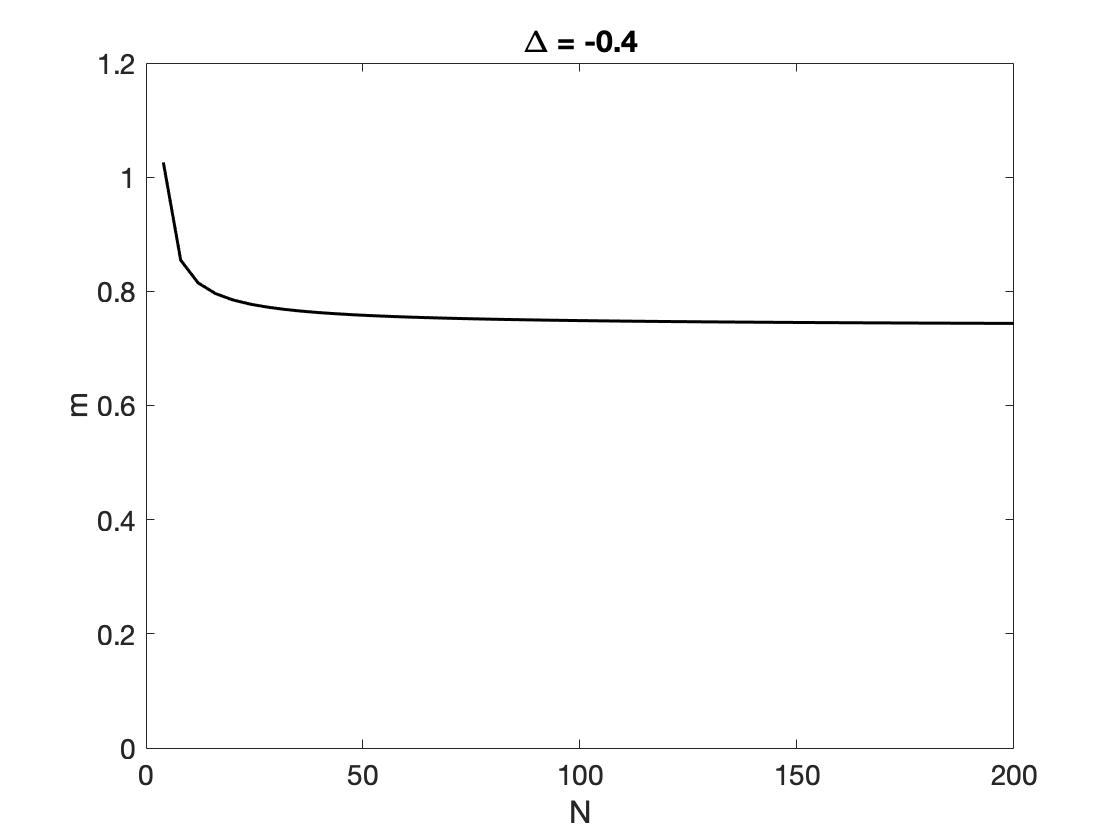}
    \includegraphics[scale=0.20]{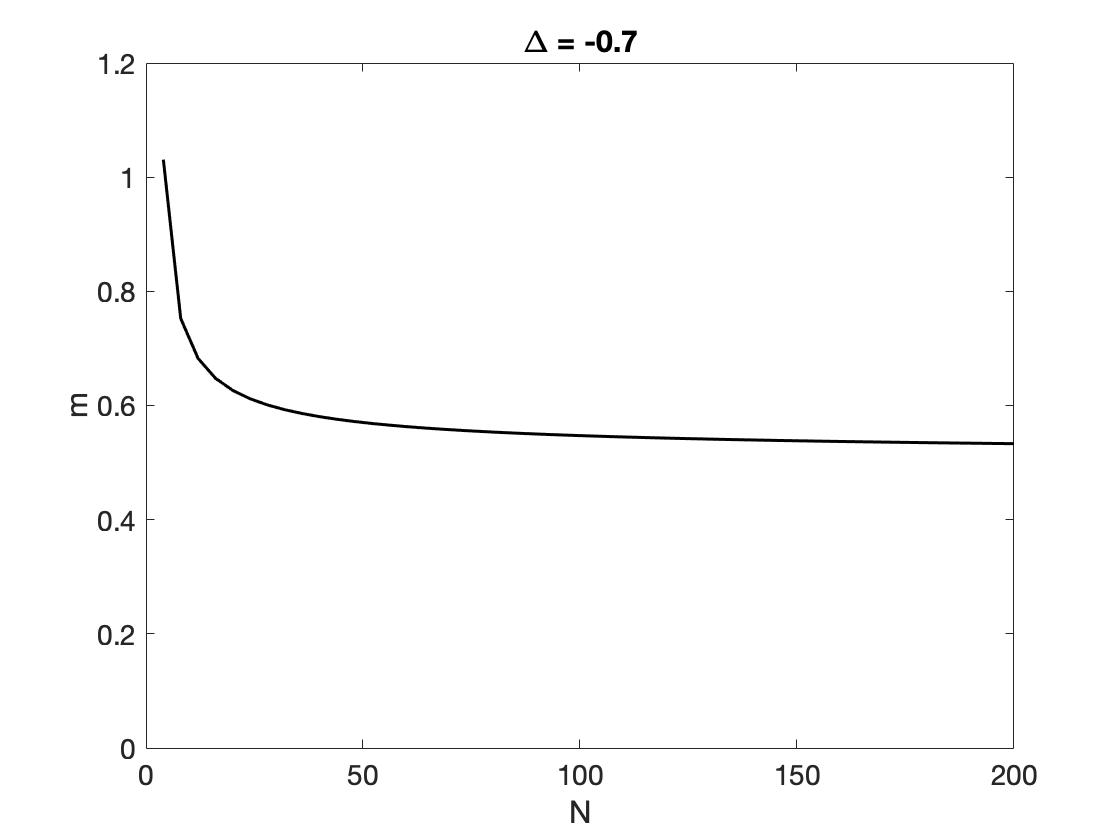}
    
    \includegraphics[scale=0.20]{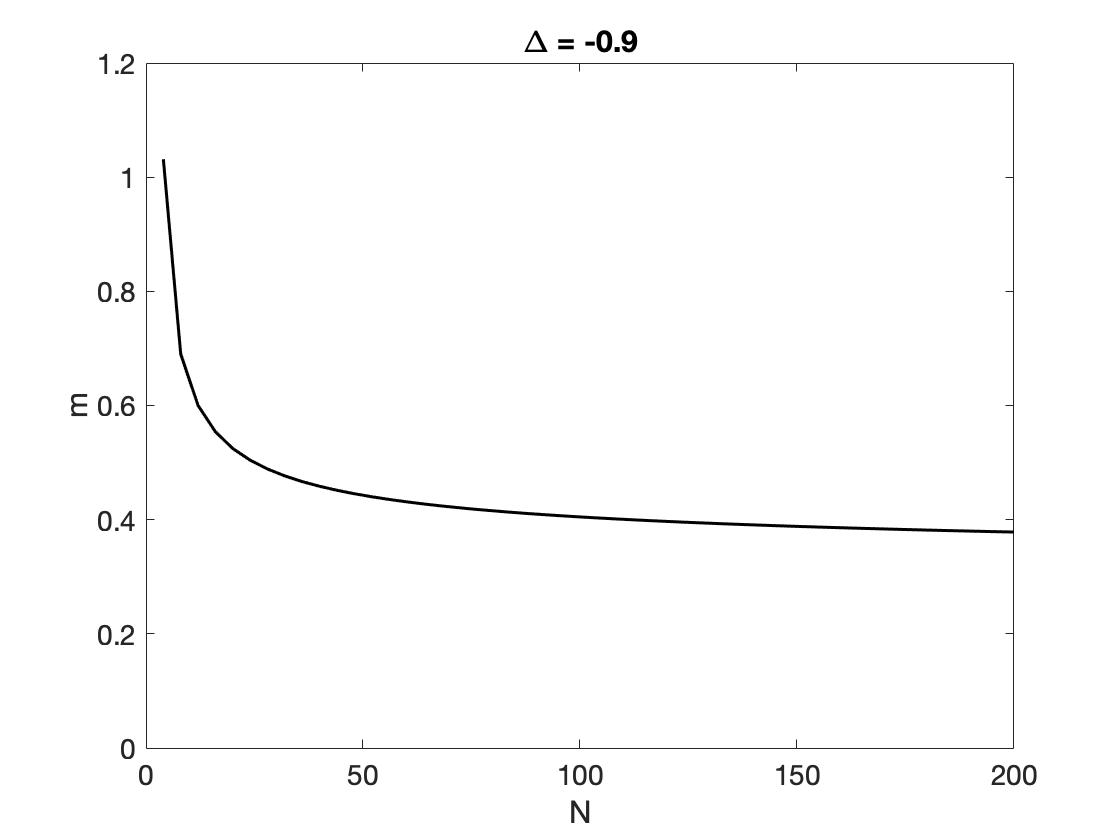}
    \includegraphics[scale=0.20]{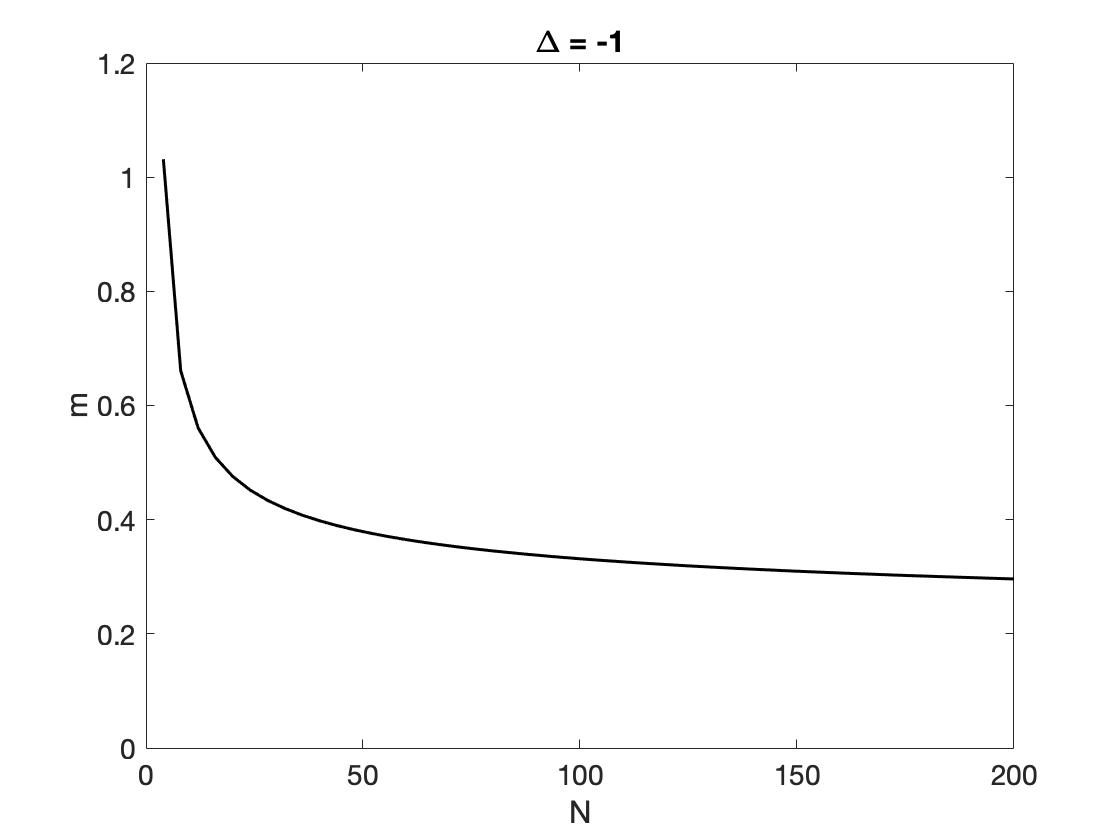}
\end{rem}

\begin{rem}
Since the differential is non-zero at $\Delta=0$ (it is diagonal since $K\equiv0$), we obtain in particular the existence of an analytic family of solutions  for {\em every} $n\le N/2$ for $|\Delta|\le \Delta_0$ with $\Delta_0$ small enough.
\end{rem}

\subsection{Proof of Proposition~\ref{prop:invertibility} when  $\Delta < -1$}
 
Fix $\pmb\lambda$ as in the proposition. Again, $d\mathsf T_{\Delta}(\pmb\lambda)$ is not obviously positive definite. At this point, we may use a symmetrization trick like in the proof of Theorem~\ref{thm:existence} for $\Delta<-1$. This argument was presented in~\cite{DGHMT} and we refer to this paper for a full proof. Here, we present an alternative proof that we find to be of some interest. 
 
 We show that $d\mathsf T{\Delta}(\pmb\lambda)$ is invertible by estimating the large $N$ behaviour of $\det[ d\mathsf T{\Delta}(\pmb\lambda)]$ 
 with the help of Lemma~\ref{lem:condensation}. 
To start with, observe that 
\begin{align}
 \widehat{\chi}(\lambda) & :=  \xi(\lambda) -\frac{1}{N} \sum_{j=1}^{n}K(\lambda - \lambda_j
 ) =  \xi(\lambda) - \int\limits_{-\mathfrak{q}}^{ \mathfrak{q} }K(\lambda, \mu) \rho(\mu|\mathfrak{q}) d\mu 
+ O\Big( \frac{k}{N} \|K^{\prime}(\lambda-\cdot) \|_{ L^{1}( \mathcal{I}_k) } \Big) \nonumber  \\
& = \rho(\lambda|\mathfrak{q}) \, + \,  O\Big( \frac{k}{N} \Big) \;, 
\label{equation approx chi}
\end{align}
 where $\mathcal{I}_k:=[\Lambda\big( 1-\frac{k}{2} |\mathfrak{q} \big) , \Lambda\big( n + \frac{k}{2} |\mathfrak{q} \big) ]$ is a subinterval of $[-\pi,\pi]$ uniformly bounded in $n\leq N /2$. 
 
Since $\rho(\lambda|\mathfrak{q})\ge \tfrac2\zeta>0$  on $\mathbb R$ (by Lemma~\ref{lemme borne sur density}), the above ensures that the matrix
$$
M_{ij} \, = \, \frac{ K(\lambda_i - \lambda_j) }{  \widehat{\chi}(\lambda_j) }
$$
is well-defined for any $n, N$ with $N$ large enough. Then, introduce an integral operator $\mathcal{M}$ acting on $L^2(\, (-\mathfrak{q}, \mathfrak{q} ])$ with the integral kernel 
\begin{equation}
M(\lambda, \mu) \, = \, \sum_{i,j=1}^{n} \pmb{\mathbb{1}}_{ \mathcal{J}_i \times  \mathcal{J}_j }(\lambda,\mu) \, M_{ij} \, \rho(\mu|\mathfrak{q})
\end{equation}
where $\mathcal{J}_i:=\big(\Lambda\big( i-\tfrac{1}{2} |\mathfrak{q} \big) , \Lambda\big( i + \tfrac{1}{2} |\mathfrak{q} \big) \big]$ is a partition of 
the integration domain   $\cup_{i=1}^{n}\mathcal{J}_i=(-\mathfrak{q}, \mathfrak{q} ]$ as can be inferred from the identities
$$
 \mathfrak{q}=\Lambda\big( n+\tfrac{1}{2} |\mathfrak{q} \big)  = - \Lambda\big( 1 - \tfrac{1}{2} |\mathfrak{q} \big) . 
$$
 The matter is that the Fredholm determinant of $\idmap + \mathcal{M}$ is equal to $\det_N[ \mathrm{I}_N + M/N]$, where $\mathrm{I}_N$ is the identity matrix. 
 Indeed, by writing the Fredholm expansion for the determinant, one has that 
\begin{align*}
\det_{ L^2((-\mathfrak{q}, \mathfrak{q} ]) } \hspace{-1mm}\big[\idmap + \mathcal{M}\big] & =  
\sum_{\ell=1}^{+\infty} \frac{1}{\ell!} \int\limits_{-\mathfrak{q}}^{ \mathfrak{q} } d^{\ell}\lambda \det_{\ell}\big[M(\lambda_i,\lambda_j) \big]  \\
& = \sum_{\ell=1}^{+\infty} \frac{1}{\ell!} \int\limits_{-\mathfrak{q}}^{ \mathfrak{q} } d^{\ell}\lambda \sum_{ \substack{ i_1,\dots, i_{\ell}=1 \\ j_1,\dots, j_{\ell}=1}  }^{ n }
\prod_{s=1}^{n} \Big\{ \pmb{\mathbb{1}}_{\mathcal{J}_{i_s}}(\lambda_s) \cdot \pmb{\mathbb{1}}_{\mathcal{J}_{j_s}}(\lambda_s)   \rho(\lambda_s|\mathfrak{q})  \Big\} \det_{\ell}\big[M_{i_k j_u} \big] .
\end{align*}
 Since $\mathcal J_k\cap \mathcal J_{\ell}=\emptyset $ if $k\not=\ell$, by using that $ \int\limits_{-\mathfrak{q}}^{ \mathfrak{q} } d\mu\,  \pmb{\mathbb{1}}_{\mathcal{J}_{i}}(\mu) \rho(\mu|\mathfrak{q}) \, = \, \frac{1}{N}$, one obtains 
\begin{align*}
\det_{ L^2((-\mathfrak{q}, \mathfrak{q} ]) }\hspace{-1mm}\big[\idmap + \mathcal{M}\big] & = \sum_{\ell=1}^{+\infty} \frac{1}{\ell!} \sum_{   i_1,\dots, i_{\ell}=1     }^{ n } \det_{\ell}\big[ \tfrac{1}{N} M_{i_k i_u} \big]  
\, = \, \det_N[\, \mathrm{I}_N + \tfrac{1}{N} M]. 
\end{align*}
Introduce the integral operator $\mathcal{K}$ on $ L^2((-\mathfrak{q}, \mathfrak{q} ])$ characterised by the integral kernel $K(\lambda-\mu)$. Both $\mathcal{M}$ and $\mathcal{K}$ are trace class. 
Indeed, $\mathcal{M}$ is of finite rank while $\mathcal{K}$ has smooth kernel and acts on functions supported on a compact interval~\cite{DudGonBar93}. Moreover,  we have
\begin{align*}
 \mathrm{tr}\big[ \mathcal{K} - \mathcal{M}\big] & = \, 2 \mathfrak{q}K(0) - K(0) \frac{1}{N}\sum_{i=1}^{n} \frac{1}{ \widehat{\chi}(\lambda_i) } \\
&= 2 \mathfrak{q}K(0) - K(0) \bigg\{ \int\limits_{-\mathfrak{q}}^{ \mathfrak{q} }   d\mu   \frac{ \rho(\mu|\mathfrak{q})}{ \widehat{\chi}(\mu) }  
\, + \, O\Big(  \frac{k}{N} \big\|   \frac{  \widehat{\chi}^{\, \prime} }{ \widehat{\chi}^2 }  \big\|_{L^1(\mathcal{I}_k)} \Big)  \bigg\} \, = \,  O\Big(  \frac{k}{N} \Big) . 
\end{align*}
Here we used the estimate~\eqref{equation approx chi}. We now estimate the Hilbert--Schmidt norm of $ \mathcal{K} - \mathcal{M}$. One starts with the representation for the kernel
$$
K(\lambda-\mu) - M(\lambda, \mu) \, = \, \sum_{i,j=1}^{n} \pmb{\mathbb{1}}_{\mathcal{J}_i\times \mathcal{J}_j}(\lambda,\mu) \, \Big\{  K(\lambda-\mu) -K(\lambda_i - \lambda_j)  \frac{ \rho(\mu|\mathfrak{q})  }{  \widehat{\chi}(\lambda_j) }   \Big\} . 
$$
By using that $\Lambda\big( x+k|\mathfrak{q} \big)-\Lambda\big( x |\mathfrak{q} \big)= O\big( k/N \big)$ uniformly in $x$, the mean-value theorem and $(k,\mathfrak{q})$-interlacement of $\pmb{\lambda}$, one gets that 
$$
 K(\lambda-\mu) -K(\lambda_i - \lambda_j) = O\Big( \frac{k+1}{N} \Big)
$$
on $\mathcal{J}_i\times \mathcal{J}_j$. Then, the estimate~\eqref{equation approx chi} allows one to conclude that 
$$
\Big| K(\lambda-\mu) - M(\lambda, \mu) \Big|\,  \leq  \,  C \frac{k+1}{N}  \sum_{i,j=1}^{n} \pmb{\mathbb{1}}_{\mathcal{J}_i\times \mathcal{J}_j}(\lambda,\mu)  =   C \frac{k+1}{N} . 
$$
This yields an estimate on the Hilbert-Schmidt norm $\| \mathcal{K} - \mathcal{M} \|_{HS} = O\big( \frac{k+1}{N} \big)$ since $\mathfrak{q}$ is uniformly bounded in $n\leq N/2$. Thus, since both $\mathcal{K}, \mathcal{M}$ have finite 
Hilbert--Schmidt norms, their $2$-determinants~\cite{GohGolKru00} satisfy
$$
\det{} \! _{2}\big[\idmap + \mathcal{M}\big] -  \det{} \! _{2}\big[\idmap + \mathcal{K}\big]  \, = \, O\Big( \frac{k+1}{N} \Big) . 
$$
Since for a trace class operator $\mathcal{O}$ one has $\det{} \! _{2}\big[\idmap + \mathcal{O}\big] = \det\! \big[\idmap + \mathcal{O}\big] \mathrm{e}^{-\mathrm{tr}[O]}$, where the determinant appearing on the right-hand side
is the usual Fredholm determinant, one infers that $
\det\!\big[\idmap + \mathcal{M}\big] -  \det \!\big[\idmap + \mathcal{K}\big]  \, = \, O\big( \frac{k+1}{N} \big) . 
$
Since $\idmap + \mathcal{K}$ is invertible on $ L^2([-q, q ])$ for any $q \in [0,\pi/2]$ (this is for instance a consequence of the proofs of Propositions 
\ref{prop:existence rho |Delta|<1},~\ref{prop:existence rho Delta=-1}, and~\ref{prop:existence rho Delta>1}), 
this entails that $\det\!\big[\idmap + \mathcal{M}\big]\not=0$ for $N$ large enough.

\section{Proof of Theorem~\ref{thm:PF}}\label{sec:PF}

Let us start by stating two lemmata. 

\begin{lemma}\label{lem:PF0}
    Let $\pmb \lambda (0)$ be the unique solution to~\eqref{eq:BE} when $\Delta=0$. 
    Then, one has $\Psi_N^{(n)}(\pmb \lambda(0))_{|\Delta=0}\ne0$ and $\Lambda_N^{(n)}(\pmb \lambda(0))_{|\Delta=0}$ 
    is the Perron-Frobenius eigenvalue of 
    $$V_N^{(n)}\big( \sqrt{2} r \sin[\tfrac{\pi-\theta}{2}],\sqrt{2} r \sin[\tfrac{\theta}{2}], r \sqrt 2\big)$$
    for any $r$ and $\theta$.
\end{lemma}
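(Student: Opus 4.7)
The plan is to exploit the fact that $\Delta=0$ corresponds to the free fermion point $\zeta=\pi/2$, at which the Bethe equations decouple and the coordinate Bethe wave function~\eqref{eq:BA_eigenvector} collapses to a single Slater determinant. Using $\sinh(x\pm i\pi/2)=\pm i\cosh(x)$, the formulas of Appendix~\ref{sec:appendix0} give $\vartheta\equiv 0$ at $\zeta=\pi/2$, so~\eqref{eq:BE} reduces to the decoupled equations $\mathfrak{p}(\lambda_i)=2\pi I_i/N$, which admit the unique symmetric strictly ordered solution $\pmb\lambda(0)=(\mathfrak{p}^{-1}(2\pi I_i/N))_{1\le i\le n}$ since $\mathfrak{p}$ is a strictly increasing bijection of $\mathbb R$ onto its image.

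Next, $\mathfrak{s}(x)=\sinh(x+i\pi/2)=i\cosh(x)$ is even, so the product $\prod_{k<\ell}\mathfrak{s}(\lambda_{\sigma(\ell)}-\lambda_{\sigma(k)})$ is independent of $\sigma\in\mathfrak{S}_n$: for each unordered pair $\{i,j\}$ the factor $\mathfrak{s}(|\lambda_i-\lambda_j|)$ appears exactly once. Factoring this constant out of~\eqref{eq:BA_eigenvector} leaves a pure Slater determinant
$$
\psi(\vec x\mid\pmb\lambda(0))=\Bigl(\prod_{1\le k<\ell\le n}i\cosh\bigl(\lambda_\ell(0)-\lambda_k(0)\bigr)\Bigr)\det\bigl[e^{i\mathfrak{p}(\lambda_j(0))x_k}\bigr]_{j,k=1}^n,
$$
which for $\vec x=(1,\dots,n)$ is a Vandermonde in the pairwise distinct numbers $e^{2\pi iI_j/N}$ and hence non-zero. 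This proves $\Psi_N^{(n)}(\pmb\lambda(0))\neq 0$.

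To identify $\Lambda_N^{(n)}(\pmb\lambda(0))$ as the Perron--Frobenius eigenvalue of $V_N^{(n)}$, the same Slater construction is applied to every $n$-subset $S$ of admissible Bethe quantum numbers. Each such $S$ produces a genuine eigenvector of $V_N^{(n)}$ with eigenvalue $\Lambda_S=a^N\prod_{j\in S}L(\lambda_j^S)+b^N\prod_{j\in S}M(\lambda_j^S)$, and the $\binom{N}{n}$ Slater determinants so obtained are linearly independent and thus diagonalise the whole sector---this is the standard free-fermion / Jordan--Wigner diagonalisation of $V_N^{(n)}$ at $\Delta=0$. Since $V_N^{(n)}$ is Perron--Frobenius, it remains to check that $S=\{I_1,\dots,I_n\}$ is the unique maximiser of $|\Lambda_S|$; this is the main obstacle. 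A direct computation at $\zeta=\pi/2$ yields $|L(\lambda)|=|M(\lambda)|$ for real $\lambda$, with this common modulus monotonically decreasing in $|\lambda|$, so that the central balanced choice maximises $\prod_{j\in S}|L(\lambda_j^S)|$, while the symmetry $I_i\mapsto-I_i$ forces $\arg\prod_jL(\lambda_j^S)=\arg\prod_jM(\lambda_j^S)$ and hence alignment of the two summands in $\Lambda_S$. A cleaner fallback, avoiding any case-by-case comparison, is to first settle the identification at the symmetric point $\theta=\pi/2$ (where $a=b$ and the model reduces to uniformly weighted domino tilings, whose PF eigenvector is classical), and then propagate it to all $\theta\in(0,\pi)$ by analyticity: $\pmb\lambda(0)$ is $\theta$-independent while the PF eigenvalue of $V_N^{(n)}$ depends analytically on $\theta$ and, being simple, cannot cross other eigenvalues as $\theta$ varies.
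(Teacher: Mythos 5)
Your computation of the explicit solution ($\vartheta\equiv 0$ at $\zeta=\pi/2$, hence $\lambda_i=\mathfrak p^{-1}(2\pi I_i/N)$) and your observation that $\mathfrak s$ is even, so that \eqref{eq:BA_eigenvector} collapses to a constant times $\det[e^{i\mathfrak p(\lambda_j)x_k}]$, are both correct and match the paper's starting point. However, the paper does not stop at proving that one entry is non-zero: it shows that \emph{all} entries of $\Psi_N^{(n)}(\pmb\lambda(0))$ are strictly positive (writing $e^{i\mathfrak p(\lambda_j)x_k}=\omega^{I_jx_k}$ with $\omega=e^{2\pi i/N}$, the determinant factors as an explicit phase times $\prod_{k<\ell}2\sin(\pi(x_\ell-x_k)/N)>0$, and the phase cancels against the prefactor $i^{n(n-1)/2}\prod_{k<\ell}\cosh(\lambda_\ell-\lambda_k)$ up to a global sign). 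Since $V_N^{(n)}$ is an irreducible non-negative matrix, the only eigenvector with strictly positive entries is the Perron--Frobenius one, so positivity identifies the eigenvalue in one stroke, with no comparison against other eigenvalues whatsoever.

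This is precisely where your proposal has a genuine gap: both of your routes to maximality are incomplete. The eigenvalue-comparison route rests on the claim that the $\binom{N}{n}$ Slater determinants built from real Bethe roots diagonalise the sector, but with $\vartheta\equiv0$ and $\mathfrak p(\mathbb R)=(-\tfrac{3\pi}{4},\tfrac{3\pi}{4})$ only roughly $3N/4$ quantum numbers admit real roots, so you obtain strictly fewer than $\binom{N}{n}$ such vectors; the remaining eigenvectors involve complex momenta, for which your monotonicity-of-$|L|$ argument does not apply as stated (and the phase-alignment step is only asserted). The fallback route is circular at its anchor: the claim that at $\theta=\pi/2$ the specific Bethe vector with quantum numbers $I_i$ is the PF eigenvector of the domino transfer matrix is exactly the statement to be proven, and citing it as ``classical'' does not supply a proof. (The continuation in $\theta$ itself is sound --- it is the same open-closed argument the paper uses in $\Delta$ for Theorem~\ref{thm:PF} --- but it needs a proven base case, and the positivity computation above provides one uniformly in $\theta$ since the eigenvector does not depend on $\theta$.) You should replace your maximality discussion by the positivity computation.
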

 
\begin{proof}
    For $\Delta=0$, the unique solution to~\eqref{eq:BE} is given by 
    \[
    \lambda_i(0) =\mathfrak{p}_{|\Delta=0}^{-1} \Big(2\pi \frac{i - (n+1)/2}{N}\Big).
    \] 
    It is then a matter of elementary computation to show that the entries of $\Psi_N^{(n)}$ are strictly positive, which concludes the proof of the lemma.
\end{proof}

\begin{lemma}\label{lem:PFinfinity}
    Let $\Delta \mapsto \pmb \lambda (\Delta)$ be an analytic solution of ~\eqref{eq:BE} defined on $(-\infty,v)$. Then, 
    for $\Delta$ sufficiently negative, $\Psi_N^{(n)}( \pmb\lambda (\Delta))$ is non-zero 
    and $\Lambda_N^{(n)}( \pmb\lambda (\Delta))$ is the Perron-Frobenius eigenvalue of $V_N^{(n)}(a,b,c)$. 
\end{lemma}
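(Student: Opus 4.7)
The plan is to exploit the $\Delta\to-\infty$ limit, where both the transfer matrix and the Bethe roots admit explicit tractable descriptions, and then to extend the conclusion to all sufficiently negative $\Delta$ by continuity and analyticity. Since the Bethe roots lie in the compact interval $[-\pi/2,\pi/2]$ for $\Delta<-1$ (Theorem~\ref{thm:existence}), the analytic family $\pmb\lambda(\Delta)$ is uniformly bounded on $(-\infty,v)$; combined with the quantitative condensation~\eqref{eq:QC} and the $\zeta\to\infty$ behaviour of the continuum density $\rho(\cdot|\mathfrak{q})$, the limit $\pmb\lambda(-\infty):=\lim_{\Delta\to-\infty}\pmb\lambda(\Delta)$ exists and is identified as the unique symmetric strictly ordered solution of the limiting version of~\eqref{eq:BE} obtained by substituting the $\zeta\to\infty$ analogues of $\mathfrak{p}$ and $\vartheta$.

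In parallel, in the parametrisation~\eqref{ecriture parametrisation massive}, an appropriate rescaling of $V_N^{(n)}(a,b,c)$ converges as $\Delta\to-\infty$ to an explicit non-negative irreducible matrix $V_\infty^{(n)}$ corresponding to the Ising (or free-fermion at infinite anisotropy) limit of the XXZ chain. Its Perron-Frobenius eigenvalue is simple and separated from the rest of the spectrum by a uniform gap, and its PF eigenvector $v_\infty$ has strictly positive entries in the basis $\{\mathbf{\Psi}_{\vec x}\}$. Both quantities are computed directly from the dominant vertex configurations in that limit.

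Next, evaluate the limit of $\Psi_N^{(n)}(\pmb\lambda(\Delta))$ using~\eqref{eq:BA_eigenvector}. The relevant asymptotic is $\mathfrak{s}(\lambda-\mu)=\sin(i\zeta+\lambda-\mu)=-\tfrac{e^\zeta}{2i}e^{-i(\lambda-\mu)}\bigl(1+O(e^{-2\zeta})\bigr)$, which allows one to factor the common scalar $\bigl(-\tfrac{e^\zeta}{2i}\bigr)^{n(n-1)/2}$ out of every term of~\eqref{eq:BA_eigenvector}. The remaining sum reduces in the limit to a Slater-type determinant in the variables $e^{i\lambda_j}$ and in the plane-wave factors $e^{i\mathfrak{p}_\infty(\lambda_j)x_k}$. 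Direct evaluation shows that this determinant, up to an overall phase independent of $\vec x$, has strictly positive entries at every $\vec x$ and coincides with $v_\infty$ after normalisation. By continuity of $\Psi_N^{(n)}$ and analyticity of $\pmb\lambda(\Delta)$ in $\Delta$ together with the non-vanishing limit, the Bethe vector $\Psi_N^{(n)}(\pmb\lambda(\Delta))$ has entries all of the same sign for every $\Delta$ sufficiently negative, and in particular is non-zero. Since $V_N^{(n)}(a,b,c)$ is a non-negative irreducible matrix, the only eigenvectors whose entries all share the same sign correspond to its PF eigenvalue; this yields that $\Lambda_N^{(n)}(\pmb\lambda(\Delta))$ is the PF eigenvalue of $V_N^{(n)}(a,b,c)$ for all sufficiently negative $\Delta$.

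The main obstacle lies in the explicit computation of the $n!$-term sum in~\eqref{eq:BA_eigenvector} in the $\zeta\to\infty$ limit: one must track the leading asymptotics of each summand uniformly in $\sigma\in\mathfrak{S}_n$ and verify that the resulting Slater determinant is non-zero with a uniform positive sign at every configuration $\vec x$. This is essentially the classical free-fermion Bethe Ansatz computation at infinite anisotropy and is handled by elementary Vandermonde arguments; a cleaner alternative is to directly identify the PF eigenvector of $V_\infty^{(n)}$ from the known vertex-configuration combinatorics at $\Delta=-\infty$ and to verify that the Bethe Ansatz formula evaluated at $\pmb\lambda(-\infty)$ returns the same vector, whereupon analyticity of the family $\pmb\lambda(\Delta)$ and simplicity of the PF eigenvalue close the argument.
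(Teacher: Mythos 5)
Your overall strategy (pass to the $\Delta\to-\infty$ limit, analyse the limiting Bethe vector and transfer matrix explicitly, then extend by continuity/analyticity) is the same as the paper's, but the crux of the lemma is precisely the step you dispose of by assertion. You claim that ``direct evaluation'' shows the limiting Slater determinant has strictly positive entries at \emph{every} $\vec x$ (up to a global phase) and coincides with the PF eigenvector of the limiting matrix, and that this is ``handled by elementary Vandermonde arguments.'' It is not: for a general configuration $\vec x$ the relevant determinant is $\det[\tau^{j\,m_k}]$ with $\tau$ a root of unity and $(m_k)$ an arbitrary increasing integer sequence, i.e.\ a Schur polynomial times a Vandermonde evaluated on the unit circle, whose sign is not at all obvious. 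The paper deliberately avoids this: it uses the Vandermonde computation only to show that \emph{one} entry (at $\vec x_{\mathfrak e}=(2,4,\dots,2n)$) is non-zero, and it establishes separately --- by citing~\cite[Lemma~3.2]{DGHMT} --- that $\Lambda_N^{(n;\infty)}=r^N$ is the largest eigenvalue of $V_N^{(n;\infty)}$. Entrywise positivity of the limiting Bethe vector is then a \emph{consequence} of these two facts (a non-zero eigenvector for the simple top eigenvalue must be proportional to the PF eigenvector), not an input one can verify configuration by configuration. Note that the authors themselves warn in the introduction that the $\Delta=-\infty$ case ``for general $n$ actually does require some effort''; your proposal hides exactly that effort inside the phrases ``computed directly from the dominant vertex configurations'' and ``direct evaluation.''

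Two smaller points. First, your final deduction relies on the entries of $\Psi_N^{(n)}(\pmb\lambda(\Delta))$ ``all sharing the same sign'' for finite $\Delta$; these entries are complex a priori, so you must first argue (using the symmetry $\lambda_{n+1-i}=-\lambda_i$) that the vector is real up to a global phase before Perron--Frobenius theory for real positive eigenvectors applies. Second, you assert that the limiting matrix $V_\infty^{(n)}$ is irreducible with a uniform spectral gap; this is neither proved nor needed --- irreducibility is only required for $V_N^{(n)}(a,b,c)$ at finite $\Delta$, and the continuity argument only needs the PF eigenvalue of the finite-$\Delta$ matrix to be isolated. With the missing ingredient supplied (one non-vanishing entry plus the identification of $r^N$ as the top eigenvalue of the limit), the rest of your argument closes correctly.
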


\begin{proof}[Proof of Lemma~\ref{lem:PFinfinity}]
    First consider the special case of~\eqref{eq:BE} with $\Delta=-\infty$ (see Appendix~\ref{sec:appendix0}) and note that in that case
    \[
    \lambda_i(-\infty):=\pi \frac{i - (n+1)/2}{N-n}
    \] 
    for $1\le i\le n$ is the $k$-interlaced, symmetric, strictly ordered solution of~\eqref{eq:BE} with $\Delta= -\infty$ 
    (simply note that $\theta_{-\infty} (\lambda) \equiv 2 \lambda$ in this case). 
    Starting from \eqref{eq:BA_eigenvector} and \eqref{eq:maybe eigenvalue}, we deduce that if
    \begin{align*}
    \psi^{(\infty)}(\vec x|\pmb\lambda )
    &:= \lim_{\Delta \to -\infty}\frac{ \psi(\vec x|\pmb\lambda ) }{  (-i\Delta)^{ \tfrac{n(n-1)}{2} }} 
    \, = \, \prod\limits_{k=1}^{n}\mathrm{e}^{i(n+1) \lambda_{k} }\sum\limits_{\sigma \in \mathfrak{S}_n}{} \varepsilon(\sigma) \prod\limits_{k=1}^{n}\mathrm{e}^{2i \lambda_{\sigma(k)}(x_k-k) }  ,\\ 
    V^{(n;\infty)}_N &:= \lim_{\Delta \rightarrow -\infty} \frac{ V_N^{(n)}(a,b,c) }{ (-2\Delta)^{ \tfrac{N}{2} -\tfrac{\theta}{\pi}(N-2n) }  }  ,\\
    \Lambda_N^{(n;\infty)}&:= r^N,
    \end{align*}
    then the Bethe Ansatz at $\Delta=-\infty$ (or equivalently its limit as $\Delta \to -\infty$) implies that 
    $$
    V_N^{(n;\infty)}\Psi_N^{(n;\infty)}  \, = \, \Lambda_N^{(n;\infty)}\Psi_N^{(n;\infty)}  \qquad \text{with} \qquad 
    \Psi_N^{(n;\infty)}   := \sum_{|\vec x|=n}\psi^{(\infty)}(\vec x|\pmb\lambda(-\infty) )\, \boldsymbol{\Psi}_{\vec x} .
    $$
    One gets the result by proving that $\Psi_N^{(n;\infty)}$ is non-zero 
    and that $\Lambda_N^{(n;\infty)}$ is the largest eigenvalue of $V_N^{(n;\infty)}$, which we next do.
    
    We start by showing the first claim by considering the entry of $\Psi_N^{(n;\infty)}$ for $\vec x_{\mathfrak{e}}=(2,4,\dots,2n)$. 
    Setting $\tau= \mathrm{e}^{2\pi i/(N-n)}$, one deduces from the above that 
    \[
     \psi^{(\infty)}(\vec x_{\mathfrak{e}}|\pmb\lambda(-\infty) ) \; = \; \tau^{ -\tfrac{n}{4}(n+1)^2 } \det \big[\tau^{j\cdot k}\big]_{j,k} . 
    \]
    The  determinant of the Vandermonde matrix $(\tau^{j\cdot k})_{j,k}$ does not vanish since it corresponds to the values $\tau, \tau^2, \dots, \tau^n$, 
    which are all distinct owing to  $2n \leq N$. 
    Hence,  $\Psi_N^{(n;\infty)} \ne 0$.
    
    For the second property, we refer to~\cite[Lemma~3.2]{DGHMT} for the full proof.
 \end{proof}

Finally, we are ready to prove the theorem. 

\begin{proof}[Proof of Theorem~\ref{thm:PF}]
	Consider an analytic family of solutions $\Delta\mapsto\pmb\lambda(\Delta)$ as in the statement of the theorem. 
	Then $\Delta \mapsto \Psi_N^{(n)}(\pmb \lambda (\Delta))$ is an analytic family of vectors.
	By Lemma~\ref{lem:PF0} for $(u,v) = (\Delta_0, 1)$ or Lemma~\ref{lem:PFinfinity} for $(u,v) = (-\infty,\Delta_0)$,
	$\Psi_N^{(n)}(\pmb \lambda (\Delta_0))\ne0$ for some $\Delta_0\in(u,v)$.
	The analyticity implies that $\Psi_N^{(n)}( \pmb \lambda (\Delta))\ne 0$ for all but a discrete set $D$ of $\Delta$ in $(u,v)$. 

	It follows that $\Lambda_N^{(n)}( \pmb\lambda (\Delta))$ is an eigenvalue of $V_N^{(n)}(a,b,c)$ for all $\Delta\in(u,v)\setminus D$.  
	By continuity of $(a,b,c)\mapsto V_N^{(n)}(a,b,c)$ and $\Delta\mapsto \Lambda_N^{(n)}(\pmb\lambda(\Delta))$, this property extends to all values~$\Delta\in (u,v)$. 

	Now, since  $V_N^{(n)}(a,b,c)$ is an irreducible symmetric matrix, its Perron-Frobenius eigenvalue is isolated for all $a,b,c$ with $\Delta \in (u,v)$. 
	Lemmata~\ref{lem:PF0} and~\ref{lem:PFinfinity} proved that $\Lambda_N^{(n)}( \pmb\lambda (\Delta))$ is the Perron-Frobenius eigenvalue for {\em some} $\Delta\in(u,v)$; by 
	the fact that the region of $a,b,c$ values with $\Delta \in (u,v)$ is connected, and by the continuity of both $(a,b,c)\mapsto V_N^{(n)}(a,b,c)$ and $(a,b,c)\mapsto \Lambda_N^{(n)}$ 
	this property extends to the whole set of parameters $a,b,c$ with $\Delta \in (u,v)$. 
\end{proof}

\begin{rem}
    The analyticity of $\Delta \mapsto \pmb\lambda(\Delta)$ was only used once in the proof above, 
    namely to show that  the vector  $\Psi_N^{(n)}( \pmb \lambda (\Delta) )$ is non-zero for (almost) all $\Delta$. 
    
    It is non-trivial that this property holds for {\em all} $\Delta$, $N$ and $n$.
    The norm of $\Psi_N^{(n)}( \pmb \lambda (\Delta) )$ has been argued to be given in terms of the determinant of $d\mathsf T_\Delta(\pmb\lambda)$ in~\cite{GCW81,Kor82} 
    and this was proven in~\cite{KMT99,Sla97}. The results reads
    \begin{align}\label{eq:the_norm}
    	\|\Psi_N^{(n)}( \pmb \lambda (\Delta) )\|^2=f(\pmb\lambda)\det[d\mathsf T_\Delta(\pmb\lambda)]
    \end{align}
    for some explicit non-zero function $f$.
    Therefore, proving that the vector is non-zero amounts to proving that the differential of $\mathsf T_\Delta$ is invertible
    which, as shown above, automatically implies analyticity. 
    
    In conclusion, proving analyticity of the solutions and using the strategy above, 
    rather than proving their continuity and separately that the resulting vector is non-zero, 
    bypasses the use of \eqref{eq:the_norm} and contains no additional complications. 
\end{rem}

\section{Proof of Theorem~\ref{thm:Delta=-1}}\label{sec:7}

Let $k=C_0\log N$ and $n\le N/2-C_1k^2$. The constants $C_0$ and $C_1$ will be chosen large enough in the course of the proof.

To start, we follow the argument of Theorem~\ref{thm:existence} in Section~\ref{sec:existence 2} to guarantee that for each $-1<\Delta<0$,  every $(k, \mathfrak{q})$-interlaced solution is strictly interlaced. For the proof to work, we need to check that 
\begin{equation}\label{eq:iu}
|\vartheta(\lambda_i-\lambda_1)-\vartheta(\lambda_i-\mathfrak{q})|\le 2\pi\tfrac{k}{k+1}.
\end{equation}
In order to do that, remark that the value of the extremum of $\vartheta$ and the $(k,\mathfrak{q})$-interlacement imply that
$$
|\vartheta(\lambda_i-\lambda_1)-\vartheta(\lambda_i-\mathfrak{q})| \le 2|\vartheta\big(\tfrac{\lambda_1-\mathfrak{q}}2\big)|
$$
owing that the maximum of the difference is attained at the midpoint. Then, by using interlacement 
$$
 |\vartheta(\lambda_i-\lambda_1)-\vartheta(\lambda_i-\mathfrak{q})|  \le 2|\vartheta \big(\Lambda(1-\tfrac{k}{2}|\mathfrak{q})\big)| . 
$$
Further, upon using the explicit expression for $\vartheta$ given in Appendix~\ref{sec:appendix0}, one gets 
\begin{align*}
	|\vartheta(\lambda_i-\lambda_1)-\vartheta(\lambda_i-\mathfrak{q})|  
	&\le 4 \big|\arctan\big[\tanh\big(\Lambda(1-\tfrac{k}{2}|\mathfrak{q}\big)\cot(\zeta) \big] \big|   \\	
	& \le 2\pi - 4 \arctan\bigg[  \frac{\zeta}{ \Lambda\big(1-\tfrac{k}{2} | Q( \tfrac{1}{2} - C_1 \tfrac{k^2}{N} ) \big) } \bigg]  . 
\end{align*}
The last inequality above follows via simple trigonometric manipulations from $\tanh(y)\le y$, $\cot(\zeta)\le1/\zeta$
and the fact that $q \mapsto \Lambda (x|q)$ is decreasing for $x< (n+1)/2$ in this regime of $\Delta$. 
The latter is easily inferred by taking the $q$ derivative of the equation defining $\Lambda (x|q)$. 
Using \eqref{eq:crucial}, the above then leads to 
$$
 |\vartheta(\lambda_i-\lambda_1)-\vartheta(\lambda_i-\mathfrak{q})|  \le  2\pi - \frac{4\zeta}{\Lambda \big(1-\tfrac k2|Q( \tfrac{1}{2} - C_1 \tfrac{k^2}{N} ) \big) }
\le 2\pi - \frac{  C  }{\log\big[ \frac{C^{\prime}}{N}(C_1 k^2-C^{\prime\prime}k) \big]}   . 
$$
Overall, we deduce that there exits a constant $c_3>0$ independent of everything such that for every $-1<\Delta<0$, 
\[
|\vartheta(\lambda_i-\lambda_1)-\vartheta(\lambda_i-\mathfrak{q}) | \le  2\pi - \frac{c_3}{\log N}.
\]
We deduce~\eqref{eq:iu} by fixing $C_0$ large enough. 
In particular, we obtain the equivalent of Theorem~\ref{thm:existence}, namely that for every $-1<\Delta<0$,
there exists a $(k,\mathfrak{q})$-interlaced strictly ordered symmetric solution $\pmb\lambda(\Delta)$ to~\eqref{eq:BE}.

We now need to prove that the family $\pmb\lambda(\Delta)$ can be assumed to be analytic.
We follow the argument of Section~\ref{sec:analyticity 0<Delta<1}, with minor changes which we describe next. 
As in Section~\ref{sec:analyticity 0<Delta<1} we may use~\eqref{eq:iu} to deduce~\eqref{eq:ihih} and~\eqref{eq:ihihih}.
It remains to bound the error terms. 
We start with~\eqref{eq:ihihih}. Since $N/2-n-k\ge C_1k^2-k$,~\eqref{eq:ohoh} leads to 
\begin{equation*}
\frac{k}{N}\|f_j'\|_{L^1( \mathcal{I}_k) }\le \frac{k|2\pi-4\zeta|}{c_0c_3(N/2-n-Ck)}\le \frac{C_4}{C_1k-1}.
\end{equation*}
This can be made smaller than $1/(4k)$ by choosing $C$ large enough. 
The same argument applies to the error term in~\eqref{eq:ihih} and the proof follows. 

\begin{rem}\label{rmk:existence1}
The uniform $(C_0\log N,\mathfrak{q})$-interlacement shows that the entries of $\tfrac1\zeta\pmb\lambda(\Delta)$ are bounded uniformly in $\Delta$
so that we may extract sub-sequential limits as $\zeta$ tends to 0 to approach $(C_0\log N,\mathfrak{q})$-interlaced strictly 
ordered symmetric solutions $\pmb\lambda(-1)$ to~\eqref{eq:BE} for $\Delta=-1$. Here, one should use the convergence on compact subsets of $\mathbb R$, as $\Delta\rightarrow -1$, \textit{viz}.~$\zeta\rightarrow 0^+$, of $\mathfrak{p}(\cdot/\zeta)$, 
$\vartheta(\cdot/\zeta)$, $\tfrac{1}{\zeta} K(\cdot/\zeta)$ and $\tfrac{1}{\zeta} \xi(\cdot/\zeta)$ to $\mathfrak{p}_{|\Delta=-1}$, $\vartheta_{|\Delta=-1}$, $K_{|\Delta=-1}$ and $\xi_{|\Delta=-1}$ given in Appendix~\ref{sec:appendix-1}.

To prove an analogue of Theorem~\ref{thm:existence} for $\Delta =-1$, 
one may employ the bounds of  Section~\ref{sec:existence 2} and rescale all variables by $1/\zeta$. 
This allows one to conclude that $\Psi$ maps $\tfrac{1}{\zeta} \bar{\Omega}_{k,R}$ onto $\tfrac{1}{\zeta} \Omega_{k,R}$ and so, 
by taking the $\zeta\rightarrow 0$ limit, yields a Brouwer fixed point for the map $\Psi_{|\Delta=-1}$. 
The unique fixed point of this map coincides, by construction, with any sub-sequential limit of $\tfrac1\zeta\pmb\lambda(\Delta)$ as $\zeta$ tends to 0. 
Thus, such sub-sequential limits are unique, which shows that  $\tfrac1\zeta\pmb\lambda(\Delta)$ does converge to a $(C_0\log N,\mathfrak{q})$-interlaced, strictly 
ordered, symmetric solution $\pmb\lambda(-1)$ to~\eqref{eq:BE} for $\Delta=-1$.
\end{rem}


\section{Proof of Theorem~\ref{thm:free energy}}\label{sec:free energy}

We prove the statement for  $a> b$ and $\Delta\ne -1$. 
The results extends to all $a\geq b\ge0$ and $c\geq 0$ with $\Delta<1$, 
since the left and right sides of~\eqref{eq:free energy} are Lipschitz in each coordinate of $(a,b,c)$.
The particular expression for $\Delta = -1$ is obtained by taking the limit either from above or below in \eqref{eq:free energy}.

We start with the expression
\[
	Z(\mathbb T_{N,M},a,b,c)=\sum_{n=0}^{N}Z^{(n)}(\mathbb T_{N,M},a,b,c),
\]
which gives
\begin{align}\label{eq:fnf}
	&\lim_{M\rightarrow\infty}\tfrac1{NM}\log Z(\mathbb T_{N,M},a,b,c)=\max_{n\le N/2}f_N^{(n)}(a,b,c)=f_N^{(N/2)}(a,b,c),
\end{align}
where in the first equality we restrict our attention to $n\le N/2$ thanks to the symmetry $n\longleftrightarrow N-n$ corresponding to the symmetry under the reversal of all arrows. 
The last equality uses the classical fact that $f_N^{(n)}(a,b,c)$ is maximal for $n=N/2$, see~\cite[Lemma 3.6]{DGHMT} or~\cite{Lie67b} for a proof.

For $0\le \Delta < 1$ or $\Delta < -1$, we apply~\eqref{eq:expression} and evaluate \eqref{eq:eigenvalue} at the $n=N/2$ groundstate's Bethe roots to get
\begin{align}\label{eq:aaa}\nonumber
	f_N^{(N/2)}(a,b,c)=  \text{max} \Big\{ &\log a+\int\limits_{-Q(1/2)}^{Q(1/2)} \log|L( \lambda ) |  \rho(\lambda) d\lambda+O(\tfrac1N) \;,  \\ 
	 & \quad \log b + \int\limits_{-Q(1/2)}^{Q(1/2)} \log|M( \lambda ) |  \rho(\lambda) d\lambda+O(\tfrac1N) \Big\} . 
\end{align}
We claim that the same holds for $-1<\Delta<0$. Notice however that we do not have access to $f_N^{(N/2)}(a,b,c)$ in this case. 
However, the inequality\footnote{The displayed inequality can be obtained from the easy observation (already made in a number of papers, see e.g.~\cite{DKMT20}) 
that there exists a map from configurations with $n+1$ up arrows per row to configurations with $n$ up arrows per row constructed by choosing a path of oriented edges cycling around the torus in the vertical direction with length smaller than $MN/n$ 
(such a path always exists and there are at most $N4^{MN/n}$ choices for it)
and reversing all arrows on it. This changes the weights, hence the factor $\min\{a,b,c\}/\max\{a,b,c\}$. 
}
 \[
Z^{(n)}(\mathbb T_{N,M},a,b,c)\ge \frac1N\Big(\frac{\min\{a,b,c\}}{4\max\{a,b,c\}}\Big)^{MN/n}Z^{(n+1)}(\mathbb T_{N,M},a,b,c)
\]
is sufficient to obtain~\eqref{eq:aaa} as we can apply~\eqref{eq:expression} to $n:=N/2-C_0$, and then compare $ f_N^{(N/2)}$ to $f_N^{(n)}$ 
and $\rho(\lambda)$ to $\rho(\lambda|Q(\tfrac12-\tfrac {C_0}N))$. 
 
 Overall, we see that the only remaining difficulty is to compute 
 \[
\log a + \int\limits_{-Q(1/2)}^{Q(1/2)} \log|L( \lambda) |\rho(\lambda)d\lambda \quad \text{and} \quad 
\log b + \int\limits_{-Q(1/2)}^{Q(1/2)} \log|M( \lambda ) |  \rho(\lambda) d\lambda .
 \]
We shall only focus on the evaluation of the first term and split the proof in two depending on whether $|\Delta|<1$ or $\Delta < -1$.  We leave to the reader the verification that the 
first term does indeed dominate the second when $a>b$.

\subsection{Case $|\Delta|<1$}

When $|\Delta|<1$, we use the Fourier transform on $\mathbb R$. Define 
\begin{align}\label{eq:C}
	\mathscr{L}(x):=\tfrac12\log[L(x)L(-x)] = \log|L(x)| \qquad \text{ for $x \in \mathbb R$}
\end{align}
and observe the exact expressions given in Appendix~\ref{sec:appendix0} for the different functions and their Fourier transforms.
Recalling that $\rho$ is even and $Q(1/2)=+\infty$, we find that
\begin{align*}
f(a,b,c)-\log a&=\int\limits_{-\infty}^{+\infty} \mathscr{L}(x)\rho(x)dx=\tfrac{1}{2\pi}\int\limits_{-\infty}^{+\infty}\widehat{\mathscr{L}}(t)\widehat{\rho}(t)dt\\
&= \int\limits_{-\infty}^{+\infty}\frac{1}{2\cosh{(\zeta \frac{t}2)}}\frac{ \sinh(\frac{\theta\zeta}{\pi}t)}{t}\frac{\sinh[(\pi - \zeta) \frac{t}2]}{\sinh{[\pi \frac{t}2]}}dt.
\end{align*}
Using that 
\[
	\log \frac{b}{a} 
	\; = \; \int\limits_{-\infty}^{+\infty}  \frac{\sinh\big[ t\zeta (\tfrac{\theta}{\pi}-\tfrac{1}{2})\big]  \sinh\big[ (\pi - \zeta) \tfrac{t}{2} \big] }{ t \sinh{\big[ \pi \tfrac{t}{2} \big] }}d t
\]
some algebra and the change of variables $t\mapsto2t$ give the result.

\begin{rem}
For the special case $a=b=c=1$, we may compute the integral directly. After a fairly elementary computation, we recover the classical result of Lieb~\cite{Lie67b}
\begin{align*}
	f(1,1,1)&=\int\limits_{-\infty}^{\infty} \frac{\tfrac12\log[1-\frac{3}{1+2\cosh x}]}{\tfrac{8\pi}3\cosh(3x/4)}dx=\tfrac32\log[\tfrac43].
\end{align*}
The particular expression for $a= b= 1$ and $c=2$ is obtained from \eqref{eq:free energy} by direct computation. 
\end{rem}

\subsection{Case $\Delta < -1$}

When $\Delta < -1$, we work with $\pi$-periodic functions and consider Fourier coefficients. 
Again, we introduce $\mathscr{L}(x):=\tfrac12\log[L(x)L(-x)]$ and use the exact expression of the functions and their Fourier coefficients given in Appendix~\ref{sec:appendix0}.  We deduce that
\begin{equation}
\begin{aligned}
f(a,b,c)-\log a&=\int\limits_{-\pi/2}^{\pi/2} \mathscr{L}(x)\rho(x)dx=\tfrac1\pi\sum_{n=0}^\infty \widehat\rho(n)\widehat{ \mathscr{L}}(n) \\
& = \frac{\theta  \zeta }{\pi} + \sum_{n\in\mathbb Z\setminus\{0\}} \mathrm{e}^{-|n|\zeta } \frac{ \sinh\big[ 2n \zeta \theta/\pi \big]}{ 2 n \cosh(\zeta n) }.
\end{aligned}
\end{equation}

\section{Proof of Theorem~\ref{thm:quantitative}}\label{sec:quantitative}

\subsection{Focusing on the asymptotic in the $q$ variable}\label{sec:6.1}

We claim that it suffices to estimate the asymptotic behaviour  of 
\begin{align}\label{eq:the_asymp}
	\delta f(q) :=\int\limits_{ - Q(1/2) }^{ Q(1/2) } \mathscr{L}(x)\rho(x)dx - \int\limits_{-q}^{q} \mathscr{L}(x)\rho(x|q)dx
\end{align}
as $q\nearrow Q(1/2)$, with $\mathscr{L}(x)$ defined in \eqref{eq:C}. 

Indeed,~\eqref{eq:expression} shows that $f(a,b,c) -f^{(n)}(a,b,c) = 	\delta f(Q(n/N)) + O(1/N)$.
Recall that~\eqref{eq:expression} is obtained from Theorem~\ref{thm:analyticity} whenever $\Delta \neq -1$ and \eqref{eq:n_cond_quant} holds. 
For $\Delta = -1$, if~\eqref{eq:n_cond_quant} is satisfied,~\eqref{eq:expression} may be deduced from Theorem~\ref{thm:Delta=-1}. 

The asymptotics of $Q(n/N)$ as $n/N$ approaches $1/2$ are given by Propositions~\ref{lem:asymptotic |Delta|<1},~\ref{lem:asymptotic Delta=1} and Lemma~\ref{lem:aa} of Appendixes~\ref{sec:appendix1},~\ref{sec:appendix1.5} and~\ref{sec:appendix2}, respectively, and read
\begin{align*}
    \lim_{m\rightarrow 1/2}( \tfrac{1}{2}-m) \mathrm{e}^{Q(m)\tfrac{\pi}{\zeta}} =:C_\Delta \text{ for $\Delta \in [-1,1)$}
    \quad\text{and}\quad
    \lim_{m\rightarrow 1/2}\frac{ 1 - 2m }{ \pi -2 Q(m) }=\rho(\tfrac\pi2) \text{ for $\Delta<-1$}
\end{align*}
for some constant $C_{\Delta}>0$.

The estimation of \eqref{eq:the_asymp} is obtained differently for $-1 \leq \Delta<1$ and $\Delta < -1$ and does not refer anymore to the discrete Bethe equation.  
Deriving Theorem~\ref{thm:quantitative} from the asymptotics of \eqref{eq:the_asymp} obtained below and those for $Q$ mentioned above is a matter of simple algebra, which we do not detail further.

\begin{rem}
The constant $C_\Delta$ was explicitly computed in~\cite{DugGohKoz14}. We do not need the precise value here and therefore work with this weaker and simpler result.
We provide however in  Proposition~\ref{lem:asymptotic |Delta|<1} of Appendix~\ref{sec:appendix1} an integral representation for $C_{\Delta}$ in terms of the solution to 
a Wiener-Hopf equation on $\mathbb{R}_+$. 
\end{rem}

\subsection{Case $|\Delta|<1$}\label{sec:6.2}

 Consider the function $G$  defined for $x\in\mathbb R$ by
 \[
 G(x):=\mathscr{L}(x)-\int\limits_{\mathbb R}R(x-y)\mathscr{L}(y)dy.
 \]
 Using~\eqref{eq:second continuum equation}, then reorganizing the integrals (in particular using that $\mathscr{L}$ and $R$ are even), and then passing to Fourier gives that
   \begin{align*}
\delta f(q)&=\int\limits_{\mathbb R}\mathscr{L}(x)\rho(x|q) \pmb{\mathbb{1}}_{|x|>q}dx-\int\limits_{\mathbb R}\int\limits_{\mathbb R}\mathscr{L}(x)R(x-y)\rho(y|q)\pmb{\mathbb{1}}_{|y|>q}dxdy\nonumber\\
&=\int\limits_{\mathbb R}G(x)\rho(x|q)\pmb{\mathbb{1}}_{|x|>q}dx\\
&=\int\limits_{\mathbb R} \widehat G(t)\widehat{\rho(\cdot|q)}(t)\frac{dt}{2\pi}-\int\limits_{\mathbb R}\int\limits_{\mathbb R}
\widehat G(t)\frac{ \mathrm{e}^{iq(t-s)} - \mathrm{e}^{iq(s-t)}}{i(t-s)}\widehat{\rho(\cdot|q)}(s)\frac{ds}{2\pi}\frac{dt}{2\pi}\nonumber\\
&=\int\limits_{\mathbb R} \widehat G(t)\widehat{\rho(\cdot|q)}(t)\frac{dt}{2\pi}
-\lim_{\delta\searrow 0}\int\limits_{\mathbb R+i\delta}\Big(\int\limits_{\mathbb R} \widehat G(t)\frac{ \mathrm{e}^{iq(t-s)} - \mathrm{e}^{iq(s-t)}}{i(t-s)}\frac{dt}{2\pi}\Big) \widehat{\rho(\cdot|q)}(s)\frac{ds}{2\pi} . 
\end{align*}
In the last identity we use that 
$$
\widehat G(k) =  \frac{ \widehat{\mathscr{L}}(k) }{1+\widehat K(k)} \, = \, \frac{\pi}{k} \frac{ \sinh\big[ k\zeta\tfrac{\theta}{\pi} \big] }{ \cosh\big[ k\tfrac{\zeta}{2} \big] } 
$$
is integrable on a neighbourhood of $\mathbb R$ and has exponential decay.
We now perform two elementary residue computations. 
Fix $\pi/\zeta<\beta<3\pi/\zeta$ and $s\in i\delta+\mathbb R$ for $\delta>0$ very small. Since $\operatorname{Res}_{t=\pm i\pi/\zeta}[\widehat G]=\mp 2i\sin \theta$ 
and there is no other pole of $\widehat G$ in the strip $\{z\in \mathbb C:{\rm Im}(z)<\beta\}$ (and $\widehat G$ tends to 0 at infinity),  we get
\begin{align*}
\int\limits_\mathbb{R} \frac{\mathrm{e}^{iq(t-s)}}{t-s} \widehat{G}(t)\frac{dt}{2 i \pi} &=
\widehat{G}(s) + \frac{\mathrm{e}^{iq(i\pi/\zeta-s)}}{i\pi/\zeta - s}\operatorname{Res}_{t= \pm i\pi/\zeta}[\widehat{G}] + \int\limits_{\mathbb{R}+\beta i} \frac{ \mathrm{e}^{iq(t-s)} }{i(t-s)} \widehat{G}(t) \frac{dt}{2\pi}
\\&=
\widehat{G}(s)-2i\sin \theta\,\frac{\mathrm{e}^{-q\pi/\zeta -iqs}}{i\pi/\zeta- s} +  \mathrm{e}^{-\beta q } \psi_{q}^{(+)}(s),
\end{align*}
where 
\begin{equation}
\psi_{q}^{(\pm)}(s) \, = \, \pm    \int\limits_{\mathbb{R} } \frac{ \mathrm{e}^{ \pm iq(t-s)}  \widehat{G}(t \pm i \beta) }{t-s  \pm i \beta} \frac{dt}{2 i \pi}
\end{equation}
and similarly
\begin{align*}
-\int\limits_\mathbb{R} \frac{ \mathrm{e}^{iq(s-t)} }{ t-s } \widehat{G}(t)\frac{dt}{2 i \pi}
&=-2i\sin \theta\,\frac{\mathrm{e}^{-q\pi/\zeta +iqs}}{i\pi/\zeta +s} + \mathrm{e}^{-\beta q } \psi_{q}^{(-)}(s).
\end{align*}
Putting these two displayed equations in the first one gives that
\begin{equation}\label{eq:h8}
\begin{aligned}
\delta f(q) = &\, 2i \sin \theta\, \mathrm{e}^{-\frac{q\pi}{\zeta} }\int\limits_\mathbb{R}\widehat {\rho(\cdot|q)}(s)\left(\frac{\mathrm{e}^{-iqs}}{ i\pi/\zeta-s}+\frac{ \mathrm{e}^{iqs}}{i\pi/\zeta+s}\right)\frac{ds}{2\pi} \\
  & - \mathrm{e}^{-\beta q }  \int\limits_\mathbb{R} \widehat {\rho(\cdot|q)}(s)( \psi_{q}^{(+)}(s) + \psi_{q}^{(-)}(s) )ds . 
\end{aligned}
\end{equation}
We first justify that the second term is a $O(\mathrm{e}^{-\beta q })$. Clearly, $\psi_{q}^{(\pm)} \in L^2(\mathbb{R})$ and $\|\psi_q^{(\pm)}\|_{L^2(\mathbb{R})} \leq C$ uniformly in $q$. 
Furthermore, it is established in  Proposition~\ref{lem:asymptotic |Delta|<1} of Appendix~\ref{sec:appendix1} that given 
\[
\mathfrak{e}(x) := \frac{1}{\zeta}\mathrm{e}^{-\tfrac{\pi}{\zeta}x} \pmb{\mathbb{1}}_{\mathbb{R}_+}(x),
\]
 we find that
\begin{equation}\label{eq:rho shifted}
\rho(x|q)=\rho(x) \, + \, \mathrm{e}^{-\frac{q\pi}{\zeta}} \big[  (T-\mathfrak{e})(x-q)+(T-\mathfrak{e})(-q-x)+\delta T(x) \big]  
\end{equation} 
where $\|\delta T\|_{L^{\infty}(\mathbb{R})}+\|\delta T\|_{L^1(\mathbb{R})} \, = \, O(e^{-2q})$ and $T$ is the unique solution of the integral  equation 
\begin{equation}
T(x) -\int\limits_0^\infty  R(x-y)T(y)dy =\mathfrak{e}(x) .
\end{equation}
By interpolation theorems for $L^p$ spaces, we get that $T$ and $ \delta T $ belong to $L^2(\mathbb{R})$ with norms uniformly bounded in $q$. Therefore, $\rho( \cdot|q) \in L^2(\mathbb{R})$ with a norm controlled uniformly in $q$. 
All of this ensures that the last term in~\eqref{eq:h8} is indeed $O(\mathrm{e}^{-\beta q })$.

Since $\widehat{\rho(\cdot|q)}$ is even and $\tfrac{1}{\pi/\zeta+is}$ is the Fourier transform of $\mathfrak e$, we find
\begin{align}\label{eq:h9}
\int\limits_\mathbb{R}\widehat {\rho(\cdot|q)}(s)\left(\frac{\mathrm{e}^{-iqs}}{\pi/\zeta+is}+\frac{\mathrm{e}^{iqs}}{\pi/\zeta-is}\right)\frac{ds}{2\pi}
&=2 \int\limits_0^\infty \rho(x+q|q) \mathrm{e}^{-x \frac{\pi}{\zeta} } dx.
\end{align}
Plugging $x+q$ in~\eqref{eq:rho shifted}, and performing an asymptotic expansion of~\eqref{eq:rho shifted} (at first order) enables us to recast~\eqref{eq:h8} as
\begin{align*}
\delta f(q) & =  4 \sin \theta\, \mathrm{e}^{-q \frac{\pi}{\zeta} }  \int\limits_0^\infty  \mathrm{e}^{-x \frac{\pi}{\zeta} } 
\Big\{ \rho(x+q) + \mathrm{e}^{-q \frac{\pi}{\zeta} } \big[ T(x) - \mathfrak{e}(x) + T(-x-2q) + \delta T(x+q) \big]  \Big\}dx \\ 
& +   o\big( \mathrm{e}^{-2q \frac{\pi}{\zeta} } \big),
\end{align*}
where we used the fact that  $\beta$ can be taken strictly larger than $2\pi/\zeta$. 
Then, by using  
\begin{itemize}
\item $\rho(x+q)=\mathfrak{e}(x+q) \big( 1 +  \mathrm{e}^{-\frac{ 2\pi }{\zeta} q} \delta \mathfrak{e}(x) \big)$ with $\| \delta \mathfrak{e} \|_{L^1(\mathbb{R}_+)}$ bounded uniformly in $q$,
\item $\delta T\in L^{\infty}(\mathbb{R})$,
\item  the fact that as $\lambda$ tends to infinity,
$$T(\lambda)=O\big( \max\big\{ \mathrm{e}^{ - \frac{2\pi}{\pi-\zeta}  \lambda}, \mathrm{e}^{ - \frac{\pi}{\zeta}  \lambda}  \big\}  \big).$$
which follows from the integral equation satisfied  by $T$, the fact that $T\in L^1(\mathbb{R})$ and similar estimates for the behaviour of $R(\lambda)$ when $\lambda \rightarrow \pm \infty$, 
\end{itemize}
one readily infers that 
\begin{align*}
\delta f(q)& =  4 \sin \theta\, \mathrm{e}^{-2q \frac{\pi}{\zeta} } \cdot \int\limits_0^\infty \mathrm{e}^{-x \tfrac{\pi}{\zeta}}  T(x)dx  \;  + \;  o\big( \mathrm{e}^{-2q \frac{\pi}{\zeta} } ). 
\end{align*}
Thus, all-in-all, we get that with $\mathfrak{q}=Q(n/N)$, 
\begin{align*}
\delta f(\mathfrak{q})& =  ( 1   +  o(1) )  \big( \tfrac{1}{2}-\tfrac{n}{N} \big)^2  \sin \theta \cdot \tfrac{4}{C_{\Delta}^2 }  \int\limits_0^\infty \mathrm{e}^{-x \tfrac{\pi}{\zeta}}  T(x)dx  . 
\end{align*}
Note that the constant is strictly positive since $T>0$ on $\mathbb{R}$.

\subsection{Case $\Delta= -1$}

We omit the proof as it is the same as in the previous section, using Appendix~\ref{sec:appendix1.5} instead of Appendix~\ref{sec:appendix1}. 

\subsection{Case $\Delta < -1$}

Following the same reasoning as in the previous section gives
\begin{align*}
\delta f(q):= & \int\limits_{-\pi/2}^{\pi/2} \mathscr{L}(x)\rho(x)dx - \int\limits_{-q}^q \mathscr{L}(x)\rho(x|q)dx\\
=&(\pi-2q)\rho(\tfrac\pi2)\big[\mathscr{L}(\tfrac\pi2) - \int\limits_{-\pi/2}^{\pi/2} \mathscr{L}(x)R(x-\tfrac\pi2)dx+o(1)\big].
\end{align*}
Above, the $o(1)$ term is as $q\rightarrow \pi/2$. 
It remains to prove that the following quantity is strictly positive:
\begin{align*}
    \mathscr{L}(\tfrac\pi2) - \int\limits_{-\pi/2}^{\pi/2} \mathscr{L}(x)R(x-\tfrac\pi2)dx&=\frac{1}{\pi}\sum_{n\in\mathbb{Z}} (-1)^n \widehat{\mathscr{L}}(n) - 
    \frac{1}{\pi} \sum_{n\in\mathbb{Z}} (-1)^n \widehat{\mathscr{L}}(n)\widehat{R}(n)\\
    &= \frac{1}{\pi} \sum_{n\in\mathbb{Z}} (-1)^n\frac{\widehat{\mathscr{L}}(n)}{1+ \widehat{K}(n)}\\
    &= \sum_{n\in\mathbb{Z}} (-1)^n\frac{ \sinh\big[ 2n \zeta \tfrac{\theta}{\pi} \big] }{  2n \cosh(n\zeta) }. 
\end{align*}
The above is the convolution of the inverse Fourier transforms of $\tanh[n\zeta]/(2n)$ and $\sinh{[2n\zeta\tfrac{\theta}{\pi}]}/\sinh{[n\zeta]}$ evaluated at $\pi/2$. Both of these inverse Fourier transforms may be shown to be positive using Poisson summation, which in turn implies the positivity of the above expression.

\section{A refined version of Theorem~\ref{thm:quantitative} (under additional conditions)}\label{sec:refined}

In this section, we prove a sharper version of Theorem~\ref{thm:quantitative} under mild conditions on $a,b,c$ and $n,N$.

\begin{theorem}\label{thm:estimate}
    For $N\ge2$ and $a\ne b$ and $c\ge0$ leading to $|\Delta|<1$, there exists a constant $C=C(\zeta)<\infty$ 
    such that for every $n\le \tfrac12N-C/\zeta$,
    \begin{equation}\label{eq:eseses}
    	f_N^{(n)}(a,b,c)= f(a,b,c)- C(\zeta)\sin\theta(1+o(1))(1-\tfrac{2n}N)^2+O(\tfrac1{\zeta(N-2n)N}),
    \end{equation}
    where $o(1)$ is a quantity tending to zero as $n/N$ tends to $1/2$.
\end{theorem}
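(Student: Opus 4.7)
The strategy sharpens the proof of Theorem~\ref{thm:quantitative} by improving the $O(1/N)$ reduction error of~\eqref{eq:expression} to $O(1/(\zeta(N-2n)N))$, while the leading term $-C(\zeta)\sin\theta(1+o(1))(1-2n/N)^2$ is extracted from $\delta f(\mathfrak{q})$ exactly as in Section~\ref{sec:6.2}, using the asymptotic $e^{-\mathfrak{q}\pi/\zeta}=(1-2n/N)(2C_\Delta)^{-1}(1+o(1))$ of Proposition~\ref{lem:asymptotic |Delta|<1}. We work with $a>b$; the case $a<b$ is symmetric and $a=b$ is excluded. Theorems~\ref{thm:analyticity} and~\ref{thm:PF} give
\[
  f_N^{(n)}(a,b,c)=\log a+\tfrac{1}{N}\sum_{j=1}^{n}\log|L(\lambda_j)|+O(e^{-cN}),
\]
so the task reduces to establishing
\[
  \Big|\tfrac{1}{N}\sum_{j=1}^{n}\log|L(\lambda_j)|-\int_{-\mathfrak{q}}^{\mathfrak{q}}\log|L(\lambda)|\,\rho(\lambda|\mathfrak{q})\,d\lambda\Big|\le\tfrac{C}{\zeta(N-2n)N}.
\]

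The key device is the \emph{counting function}
\[
  \widehat\xi_N(\lambda):=\tfrac{1}{2\pi}\mathfrak{p}(\lambda)-\tfrac{1}{2\pi N}\sum_{j=1}^{n}\vartheta(\lambda-\lambda_j),
\]
whose derivative $\widehat\xi_N'=\xi-\tfrac{1}{N}\sum_j K(\,\cdot\,-\lambda_j)$ is strictly positive on $\mathbb{R}$ by Theorem~\ref{thm:existence} combined with Proposition~\ref{lem:properties rho}(i). Thus $\widehat\xi_N$ is a diffeomorphism of $\mathbb{R}$, and the Bethe equations~\eqref{eq:BE} assert precisely that $N\widehat\xi_N(\lambda_j)=I_j$. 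The substitution $u=\widehat\xi_N(\lambda)$ therefore maps the Bethe roots onto the equally-spaced grid $\{I_j/N\}_{j=1}^{n}$ of spacing $1/N$, so $\tfrac{1}{N}\sum_{j=1}^{n}\log|L(\lambda_j)|$ becomes a midpoint Riemann sum for $\int G(u)\,du$ with $G:=\log|L|\circ\widehat\xi_N^{-1}$ and step $1/N$. The midpoint-rule error is bounded by $(24N^2)^{-1}\|G''\|_{L^{1}}$.

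The core estimate is the bound on $\|G''\|_{L^{1}}$. Differentiating twice and using $\widehat\xi_N'\approx\rho(\cdot|\mathfrak{q})$, one obtains
\[
  \int|G''(u)|\,du\le\int\frac{|(\log|L|)''(\lambda)|}{\widehat\xi_N'(\lambda)}\,d\lambda+\int\frac{|(\log|L|)'(\lambda)|\,|\widehat\xi_N''(\lambda)|}{\widehat\xi_N'(\lambda)^{2}}\,d\lambda.
\]
The dominant contribution arises near $\pm\mathfrak{q}$, where $\widehat\xi_N'$ is minimal. Proposition~\ref{lem:properties rho}(iii) combined with Proposition~\ref{lem:asymptotic |Delta|<1} gives $\rho(\pm\mathfrak{q}|\mathfrak{q})\asymp\zeta^{-1}(1-2n/N)$, while the analytic structure of $L$ in~\eqref{ecriture parametrisation massless}, whose poles lie at distance at least $\min(\zeta/2,\zeta(\pi-\theta)/\pi)$ from the real axis, yields $\|(\log|L|)^{(k)}\|_{\infty}=O(1/\zeta^{k})$ for $k=1,2$. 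The edge contributions are each of order $N/(\zeta(N-2n))$, producing a quadrature error of order $1/(\zeta(N-2n)N)$. A parallel perturbative argument based on~\eqref{eq:cBE} and on the pointwise estimate $\widehat\xi_N'-\rho(\cdot|\mathfrak{q})=O(1/(\zeta N))$ then replaces $\widehat\xi_N'$ by $\rho(\cdot|\mathfrak{q})$ and adjusts the integration domain to $[-\mathfrak{q},\mathfrak{q}]$ with an error of the same order.

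The main obstacle is the treatment of the edge region $|\lambda|\sim\mathfrak{q}$: the density $\rho(\cdot|\mathfrak{q})$ vanishes exponentially there (by a factor $1-2n/N$), while $\widehat\xi_N$ retains a slowly-varying tail outside $[-\mathfrak{q},\mathfrak{q}]$, and the sharp factor $(N-2n)^{-1}$ emerges only after carefully cancelling these two effects. A naive $L^{\infty}$ bound on $G''$ would blow up as $n\to N/2$, so the fact that the $L^1$ norm involves an integration that tames this blow-up into a mere multiplicative $1/(N-2n)$ is delicate: it requires a second-order perturbative analysis of the discrete Bethe equation around~\eqref{eq:cBE}, matched against the edge asymptotics of $\rho(\cdot|\mathfrak{q})$ furnished by Proposition~\ref{lem:asymptotic |Delta|<1} and the Wiener--Hopf structure of the resolvent recorded in Appendix~\ref{sec:appendix1}.
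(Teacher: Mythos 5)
Your reduction to bounding $\bigl|\tfrac1N\sum_j\log|L(\lambda_j)|-\int_{-\mathfrak q}^{\mathfrak q}\log|L|\,\rho(\cdot|\mathfrak q)\bigr|$ and your extraction of the leading term from $\delta f(\mathfrak q)$ are correct, and the change of variables through the counting function $\widehat\xi_N$ is a legitimate variant of what the paper does (the paper instead changes variables through the quantile function $\Lambda(\cdot|\mathfrak q)$ of the \emph{continuum} density and separately controls the deviation $\mathrm{Diff}(j)$ of $\lambda_j$ from $\Lambda(j|\mathfrak q)$). The quadrature estimate $O(N^{-2}\|G''\|_{L^1})$ together with your edge analysis of $1/\widehat\xi_N'$ is also sound. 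The genuine gap is in the step you dispatch in one sentence: replacing $\widehat\xi_N'$ by $\rho(\cdot|\mathfrak q)$ and moving the endpoints from $\widehat\xi_N^{-1}(\pm n/(2N))$ to $\pm\mathfrak q$. The pointwise bound $\widehat\xi_N'-\rho(\cdot|\mathfrak q)=O(1/(\zeta N))$ you invoke gives, upon integration against $\log|L|$, only an error $O(1/N)$ --- exactly the error the theorem is supposed to beat. In fact, writing $\rho(\cdot|\mathfrak q)=\xi-\int_{-\mathfrak q}^{\mathfrak q}K(\cdot-\mu)\rho(\mu|\mathfrak q)\,d\mu$ via \eqref{eq:cBE} and applying Fubini, the quantity $\int\log|L|\,(\widehat\xi_N'-\rho(\cdot|\mathfrak q))$ is itself a condensation error, namely the one for the convolved function $K\star\log|L|$; likewise, locating $\widehat\xi_N^{-1}(\pm n/(2N))$ relative to $\pm\mathfrak q$ to the required accuracy amounts to a condensation estimate for $\vartheta(x-\cdot)$. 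Your argument is therefore self-referential at the critical point: the $O(1/(\zeta(N-2n)N))$ bound is obtained only if condensation errors of that order are already known for a class of test functions.

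The paper closes this loop with a bootstrap that your proposal lacks: its Claim 2 shows that the deviation $\mathrm{Diff}$ satisfies $\mathrm{Diff}\le\|K\|_{L^1}\,\mathrm{Diff}+O(1/(N-2n))$ with $\|K\|_{L^1}=1-\tfrac{2\zeta}{\pi}<1$, and makes the resulting a priori bound rigorous by a continuity-in-$\Delta$ argument starting from the explicit $\Delta=0$ solution where $\mathrm{Diff}=0$ (this detour is needed because the contraction inequality is only available while $\mathrm{Diff}\le\tfrac12$). Without an analogue of this self-improving step --- for instance a uniform bound on the condensation error over the family $\{\vartheta(x-\cdot)\}_{x}$ closed via the contraction $\|K\|_{L^1}<1$ --- your ``parallel perturbative argument'' does not deliver the claimed error term, and the final paragraph of your proposal, which flags the edge matching as delicate, is precisely where the proof is missing rather than merely delicate.
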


The improvement with respect to Theorem~\ref{thm:quantitative} is that the $O(1/N)$ is replaced with the more precise  $O(\tfrac1{\zeta(N-2n)N})$.
This will be particularly useful in~\cite{DKKMO20}, where it is used to prove that $N|f_N^{(n)}(a,b,c) - f_N^{(n+1)}(a,b,c)| \to 0$
as $N \to \infty$, with $N-2n \leq \sqrt N$. This convergence is expected to hold for all $n = N/2 + o(N)$, when $\Delta \in [-1,1)$.

\begin{proof}[Proof of Theorem~\ref{thm:estimate}] 
Due to the form of \eqref{eq:eseses}, it suffices to prove the statement for $N$ large enough. Fix for now $n$ and $N$ as in the theorem; we will see later which bounds are needed on $N$. 

Consider the analytic family $\pmb\lambda=(\lambda_i:1\le i\le n)$ given by Theorem~\ref{thm:analyticity} on $(\Delta_0, 1)$. 
We start by improving on the condensation formula of Theorem~\ref{thm:existence}.
Introduce, for a fixed $\Delta$, the quantity $\diff : [1,\dots, n]\rightarrow\mathbb{R}$ defined by 
\begin{align*}
	\diff (i) &:= N\int_{\Lambda(i|\mathfrak q)}^{\lambda_i} \rho(\lambda|\mathfrak q)\, d\lambda,\\
	\diff&:=\max\{|\diff(i)|:1\le i\le n\}.
\end{align*}

\paragraph{Claim 1} {\em 
    There exists $C_0>0$ such that for every $f:\mathbb{R} \rightarrow \mathbb{R}$ with integrable first and second derivatives, 
    and for every $\pmb\lambda=(\lambda_i:1\le i\le n)$ with $\mathrm{Diff}\le \tfrac12$,
    \begin{align*}
	    \Big|\frac{1}{N} \sum_{j = 1}^n f(\lambda_j) - \int_{-\mathfrak{q}}^{\mathfrak{q}}& f(\lambda) \rho(\lambda | \mathfrak{q})\, d\lambda \Big|\\
	    & \leq \frac{\mathrm{Diff}}{N}\|f'\|_{L^1[-\mathfrak{q},\mathfrak{q}]}+\frac{C_0(1+\mathrm{Diff})}{N(N-2n-C_0)}(\zeta\|f''\|_{L^1[-\mathfrak{q},\mathfrak{q}]}+\|f'\|_{L^1[-\mathfrak{q},\mathfrak{q}]}).
    \end{align*}
}

\begin{proof}
Using that the integral of $\rho(\lambda|\mathfrak{q})$ between $\Lambda(j-\tfrac12|\mathfrak{q})$ and $\Lambda(j+\tfrac12|\mathfrak{q})$ is $\tfrac1N$ gives
\begin{align*}
\frac{1}{N}  \sum_{j = 1}^n f(\lambda_j)-\int_{-\mathfrak{q}}^\mathfrak{q} f(\lambda)\rho(\lambda|\mathfrak{q})d\lambda &=
\sum_{j = 1}^{n} \int_{\Lambda(j-1/2|\mathfrak{q})}^{\Lambda(j+1/2|\mathfrak{q})}(f(\lambda_j)-f(\lambda))\rho(\lambda|\mathfrak{q})d\lambda.
\end{align*}
Differentiating the definition of $\Lambda(x|\mathfrak{q})$ gives
\[
\Lambda'(y|\mathfrak{q})=\frac1{N\rho(\Lambda(y|\mathfrak{q})|\mathfrak{q})}.
\] 
Therefore, if we set $I_j:=[j-\tfrac12,j+\tfrac12]$ and $g(y):=f(\Lambda(y|\mathfrak{q}))$,
a change of variables implies, for every $j$,
\begin{align*}
	\Big|\int_{\Lambda(j-1/2|\mathfrak{q})}^{\Lambda(j+1/2|\mathfrak{q})}(f(\Lambda(j|\mathfrak{q}))-f(\lambda))\rho(\lambda|\mathfrak{q})d\lambda\Big|
	&=\frac1N\Big|\int_{j-1/2}^{j+1/2}(g(j)-g(x))dx\Big|\\
	&\le \frac1{4N}\|g''\|_{L^1(I_j)}
\end{align*}
and, since  $\lambda_j\in[\Lambda(j-\tfrac12|\mathfrak q),\Lambda(j+\tfrac12|\mathfrak q)]$ thanks to $|\diff(j)|\le1/2$, 
\begin{align*}
    \int_{\Lambda(j-1/2|\mathfrak{q})}^{\Lambda(j+1/2|\mathfrak{q})}|f(\lambda_j)-f(\Lambda(j|\mathfrak{q}))|\rho(\lambda|\mathfrak{q})d\lambda&=\frac1N|f(\lambda_j)-f(\Lambda(j|\mathfrak{q}))|\\
    &\le \frac{|\diff (j)|}{N}\|g'\|_{L^\infty(I_j)}\\
    &\le \frac{|\diff (j)|}{N}(\|g'\|_{L^1(I_j)}+\|g''\|_{L^1(I_j)}).
\end{align*}
Summing these estimates on $j$ gives
\[
\Big|\frac{1}{N} \sum_{j = 1}^n f(\lambda_j) - \int_{-\mathfrak{q}}^{\mathfrak{q}} f(\lambda) \rho(\lambda | \mathfrak{q})\, d\lambda \Big| \leq\frac{\mathrm{Diff}}{N}\|g'\|_{L^1([1/2,n+1/2])}+\frac{\tfrac14+\mathrm{Diff}}{N}\|g''\|_{L^1([1/2,n+1/2])}.
\]
To conclude, observe that 
\[
	\|g'\|_{L^1([1/2,n+1/2])}=\|f'\|_{L^1([-\mathfrak{q},\mathfrak{q}])}
\]
and 
\[
\|g''\|_{L^1([1/2,n+1/2])}\le \Big\|\frac1{N\rho(\cdot|\mathfrak{q})}\Big\|_{L^\infty([-\mathfrak{q},\mathfrak{q}])}\|f''\|_{L^1([-\mathfrak{q},\mathfrak{q}])}+\Big\|\frac{\rho'(\cdot|\mathfrak{q})}{N\rho(\cdot|\mathfrak{q})^2}\Big\|_{L^\infty([-\mathfrak{q},\mathfrak{q}])}\|f'\|_{L^1([-\mathfrak{q},\mathfrak{q}])}
\]
 which, combined with the bounds 
 \[
 \zeta N\rho(x|\mathfrak{q})\ge c_0(N-2n-C_0)
 \]  (which is obtained from~\eqref{eq:crucial}, the monotonicity of $\rho$ on $(-\infty,0]$   and the assumption $\mathrm{Diff}\le 1/2$ which implies interlacement of $\pmb\lambda$ with $k = 1$) and $|\rho'(x|\mathfrak{q})|\le \tfrac{C_1}{\zeta}\rho(x|\mathfrak{q})$ (Proposition~\ref{lem:properties rho}(ii)) on $[-\mathfrak{q},\mathfrak{q}]$, gives the claim.
\end{proof}

\paragraph{Claim 2} {\em 
    There exists a constant $C_2 > \frac\pi2C_0$ such that 
    for every $\Delta \in (-1,1)$ 
    and every $n\le N/2-C_2/\zeta$,
    \[
    \diff \le \frac{C_2}{\zeta(N-2n-C_0)} \, , \]
    with  $C_0$ being the constant arising in Claim 1.}

\begin{proof}
The constant $C_2$ will be chosen at the end of the proof; it will be apparent that it is independent of $n$ or $N$. 
For $\Delta=0$, the result is obvious as the explicit (and unique) solution of the discrete Bethe Equation satisfies $\mathrm{Diff}=0$. 

Assume that there exists $\Delta\in(-1,1)$ such that $\diff = \frac{C_2}{\zeta(N-2n)}\le \tfrac12$. Using~\eqref{eq:rewriting} in the first equality and then~\eqref{eq:BE} in the second, we find
\begin{align*}
\diff (i) 
&=\frac{N}{2\pi}\mathfrak{p}(\lambda_i)-\int_{-\mathfrak{q}}^\mathfrak{q} \vartheta(\lambda_i-\mu)\rho(\mu|\mathfrak{q})d\mu-I_i\\
&= \frac{1}{2\pi}\sum_{j=1}^n \vartheta(\lambda_i-\lambda_j) - \frac{N}{2\pi} \int_{-\mathfrak{q}}^\mathfrak{q} \vartheta(\lambda_i-\mu)\rho(\mu|\mathfrak{q})\, d\mu.
\end{align*}
Now, we use that for $K=\tfrac1{2\pi}\vartheta'$, 
$|K'|\le \tfrac{C}{\zeta}|K|$
and $\|K\|_{L^1[\mathbb R]}= 1-\tfrac{2\zeta}\pi$ (see the Appendix again).
Apply Claim 1 to $\tfrac1{2\pi}\vartheta(\lambda_i-x)$ (and bound the $L^1$ norm on $[-\mathfrak{q},\mathfrak{q}]$ by the $L^1$ norm on $\mathbb R$) to get
\begin{align*}
|\diff (i)|&\le \mathrm{Diff}\cdot \|K\|_{L^1(\mathbb R)}+\frac{C_0(1+\mathrm{Diff})}{N-2n-C_0}(\zeta\|K'\|_{L^1(\mathbb R)}+\|K\|_{L^1(\mathbb R)}),\\
&\le (1-\tfrac{2\zeta}\pi)\mathrm{Diff}+\frac{C_0'}{N-2n-C_0},
\end{align*}
where $C_0$ is the constant given by Claim 1, and $C_0'$ depends on $C_0$, but not on $C_2$.
Since this applies to all $i$, we conclude that 
\begin{align}\label{eq:c333}
	\mathrm{Diff}\le \frac{\pi C_0'}{2\zeta(N-2n-C_0)}.
\end{align}
Choose now $C_2$ so that $C_2 > \frac\pi2 C_0'$. Then \eqref{eq:c333} contradicts our assumption on $\mathrm{Diff}$, 
and we conclude that there exists no $\Delta \in (-1,1)$ with $\diff = \frac{C_2}{\zeta(N-2n-C_0)}$. By the continuity of $\diff$ as a function of $\Delta$ and considering the fact that $\diff = 0$ for $\Delta= 0$, we conclude that $\diff < \frac{C_2}{\zeta(N-2n-C_0)}$ for all $\Delta \in (-1,1)$. 
\end{proof}

We are now in a position to conclude the proof of Theorem~\ref{thm:estimate}. 
Let $C = C_2$ be given by Claim 2 and fix $a,b,c$ as in the theorem.
By taking $N$ large enough, we may assume that the value $\Delta$ corresponding to $(a,b,c)$ is contained in the domain in which 
$\pmb\lambda$ is defined for any $n \leq N/2 - C/\zeta$ (see Theorem~\ref{thm:analyticity}). 
Then the dominant Eigenvalue may be expressed as 
\[
\tfrac1N\log \Lambda_N^{(n)}(\theta)=\ln a \, + \, \tfrac1N\sum_{j=1}^n \mathscr{L}(\lambda_j)+O(e^{-cN}),
\]
where $\mathscr{L}(\cdot)$ is the function defined in~\eqref{eq:C}.
Claims 1 and 2 give
\begin{align*}
    \Big|\tfrac1N\log \Lambda_N^{(n)}-&\int_{-\mathfrak{q}}^{\mathfrak{q}}  \mathscr{L}(\lambda)\rho(\lambda|\mathfrak{q})d\lambda\Big|\\
    &\le \frac{C_3}{\zeta N(N-2n-C_0)}\| \mathscr{L}'\|_{L^1(\mathbb R)}+\frac{C_3}{\zeta N(N-2n-C_0)}(\zeta \| \mathscr{L}''\|_{L^1(\mathbb R)}+\| \mathscr{L}'\|_{L^1(\mathbb R)})\\
    &\le \frac{C_0}{\zeta N(N-2n-C_0)}.
\end{align*}
Furthermore, Sections~\ref{sec:6.1} and~\ref{sec:6.2} give that 
\begin{equation*}
	\int_{-\mathfrak{q}}^{\mathfrak{q}}  \mathscr{L}(\lambda)\rho(\lambda)d\lambda 
	=f(a,b,c) - C(\Delta)(1+o(1))\sin\theta(1-\tfrac{2n}{N})^2.
\end{equation*}
The above implies \eqref{eq:eseses} by choosing $C$ large enough.
\end{proof}

\appendix

\section{Formulae for the different functions and their Fourier transforms}\label{sec:appendix0}
Recall the parameterisations~\eqref{ecriture parametrisation massless},~\eqref{ecriture parametrisation XXX} and~\eqref{ecriture parametrisation massive} of the weights $a, b, c$.
We remind that we always assume that $a\geq b>0$, which corresponds to $\theta\in (0, \pi/2]$. 

\subsection{Case $|\Delta|<1$}

If $\zeta:=\arccos(-\Delta)\in (0, \pi)$, we have for $x \in \mathbb{R}$, using the principal branch of the logarithm,
\begin{align*}
    \mathfrak{p}(x) &:= i \log{\displaystyle\frac{\sinh{(i \zeta / 2 + x)}}{\sinh{(i \zeta / 2 - x)}}} \, = \, 2\arctan[\tanh(x)\cot(\zeta/2)],\\
    \vartheta(x) &:=i \log{\displaystyle\frac{\sinh{(i \zeta + x)}}{\sinh{(i \zeta - x)}}}  \, = \, 2 \arctan\big[ \tanh(x) \cot(\zeta)\big],\\ 
    K(x)&:=\tfrac1{2\pi}\vartheta'(x)
    =\frac1{\pi}\frac{\sin{(2\zeta)}}{\cosh{(2x)} - \cos{(2\zeta)}},\\    
    \xi(x)&:=\tfrac1{2\pi}\mathfrak{p}'(x)
    =\frac1{\pi}\frac{\sin{(\zeta)}}{\cosh{(2x)} - \cos{(\zeta)}},\\
    \rho(x)&:=\frac1{2\zeta \cosh{(\pi x / \zeta)}},\\
     \mathscr{L}(x)&:=\tfrac12\log[L(x)L(-x)].
\end{align*}
Moreover, the following direct consequences of the formulas above are used in the text:
    $\mathfrak{p}$ is strictly increasing, odd and $\mathfrak{p}(\mathbb R) = (-\pi + \zeta/2,\pi-\zeta/2)$; 
	$\vartheta$ is decreasing for $\Delta > 0$, increasing for $\Delta< 0$ and constant for $\Delta = 0$; and finally  $\vartheta(\mathbb R) =(- |\pi -\zeta|, |\pi -\zeta|)$.
	
We will also use the following Fourier transforms (with the relevant continuous extension at $t=0$ when needed):
\begin{align*}
    \widehat{K}(t) &= \frac{\sinh[(\pi - 2 \zeta) t/2]}{\sinh{[\pi t/2]}} ,\\
    \widehat{\xi}(t) &= \frac{\sinh[(\pi - \zeta) t/2]}{\sinh{[\pi t/2]}}, \\
    \widehat{\rho}(t) &= \frac{1}{2\cosh{(\zeta t/2)}},\\
    \widehat{ \mathscr{L}}(t) &= \frac{2\pi\sinh( \tfrac{ \theta}{\pi} \zeta t) }{ t }  \widehat{\xi}(t).
\end{align*} 

\subsection{Case $\Delta= -1$}\label{sec:appendix-1}
We have
\begin{align*}
\mathfrak{p}(x) &:= i \log{\displaystyle \left(\frac{i / 2 + x}{i / 2 - x}\right)},\\
\vartheta(x)&:=i \log{\displaystyle \left(\frac{i + x}{i - x}\right)},\\
K(x)&:=\tfrac1{2\pi}\mathfrak \vartheta'(x) = \frac{1}{\pi(1+x^2)},\\  
\xi(x)&:=\tfrac1{2\pi}\mathfrak{p}'(x)=\frac{2}{\pi(1+4x^2)},\\
\rho(x)&:= \frac{1}{2\cosh{[\pi x]}},\\
 \mathscr{L}(x)&:=\tfrac12\log[L(x)L(-x)] = \tfrac12\log\Big[\frac{x^2+\left(\frac{1}{2}+\frac{\theta}{\pi}\right)^2}{x^2+\left(\frac{1}{2} - \frac{\theta}{\pi}\right)^2}\Big].
\end{align*}
We will also be interested in the following  Fourier coefficients (with relevant extensions at $t=0$)
\begin{align*}
\widehat{K}(t)&=e^{-|t|}, \\
\widehat{\xi}(t)&=e^{-|t|/2}, \\
\widehat{\rho}(t)&=\frac{\widehat{\xi}(t)}{1+\widehat{K}(t)} = \frac{1}{2\cosh{[t/2]}},\\
\widehat{ \mathscr{L} }(t)&=\pi\cdot \frac{e^{-\frac{a-b}{2c}\cdot |t|}\cdot(1-e^{-|t|})}{|t|}
.
\end{align*}

\subsection{Case $\Delta < -1$}\label{sec:A3}

Recall that $\zeta =  {\rm arccosh}( -\Delta) >0 $. We have
\begin{align*}
    K(x)&:=\tfrac1{2\pi}\vartheta'(x)
    =\frac1{\pi}\frac{\sinh{(2\zeta)}}{\cosh{(2\zeta)} - \cos{(2x)}},\\  
    \xi(x)&:=\tfrac1{2\pi}\mathfrak{p}'(x)
    =\frac1{\pi}\frac{\sinh{(\zeta)}}{\cosh{(\zeta)} - \cos{(2x)}},\\
    \rho(x)&:= \tfrac{1}{2\pi} \sum_{n\in\mathbb Z}\frac{e^{2inx}}{\cosh{[n\zeta]}}=\tfrac{1}{2\zeta}\sum_{n\in\mathbb Z}\frac{1}{\cosh{[\pi(\pi n - x)/\zeta]}},\\
    \mathscr{L}(x)&:=\tfrac12\log[L(x)L(-x)].
\end{align*}
The functions $\mathfrak{p}$ and $\vartheta$ are then defined as the odd  smooth functions on $\mathbb R$ that have $\tfrac1{2\pi}\xi$ and $\tfrac{1}{2\pi}K$ as derivatives. In particular, on $(-\tfrac\pi2,\tfrac\pi2)$, they are equal to 
\begin{align*}
\mathfrak{p}(x) &:= i \ln{\displaystyle \frac{\sin (i \zeta/2 + x)}{\sin (i \zeta/2 - x)}},\\
\vartheta(x)&:=i \ln{\displaystyle \frac{\sin (i \zeta + x)}{\sin (i \zeta - x)}}.
\end{align*}
Moreover, the following direct consequences of the formulas above are used in the text:
$\mathfrak{p}$ is increasing and maps $\mathbb R$ to $\mathbb R$;
$\vartheta$ is increasing; $\vartheta([-\pi/2,\pi/2]) =[-\pi, \pi]$ and $\vartheta$ extends to $\mathbb R$ as a quasi-periodic continuous function.
The function $K$ is even, unimodal, and has zero limits at $\pm \infty$.  

We stress that these formulae do not extend, \textit{per se}, beyond  $(-\tfrac\pi2,\tfrac\pi2)$. 
We will also be interested in the following  Fourier coefficients, when $2\theta/\pi<1$,
\begin{align*}
\widehat{K}(n)&=e^{-2|n|\zeta}, \\
\widehat{\xi}(n)&=e^{-|n|\zeta}, \\
\widehat{\rho}(n)&=\frac{\widehat{\xi}(n)}{1+\widehat{K}(n)} = \frac{1}{2\cosh{[n\zeta]}},\\
\widehat{ \mathscr{L}}(n)&=\frac{\pi}{2|n|}(e^{-|n|\cdot \left|1-\frac{2\theta}{\pi}\right|\zeta} - e^{-|n|(1+\frac{2\theta}{\pi})\zeta})
=\frac{\pi}{n}\mathrm{e}^{-|n|\zeta} \sinh\big[2 \zeta n \tfrac{\theta}{\pi} \big]  
\end{align*}
if $n\ne 0$, and $\widehat{ \mathscr{L}}(0)=2\theta\zeta$.

\subsection{Case $\Delta=-\infty$}

We have
\begin{align*}
\mathfrak{p}(x) &:=2x,\\
\vartheta (x)&:=2x,\\
K(x)&:=\tfrac1{\pi},\\
\xi(x)&:=\tfrac1{\pi},\\
\rho(x)&:= \tfrac1{2\pi}.
\end{align*}

\section{Analysis of continuum Bethe equation for $|\Delta|<1$}\label{sec:appendix1}

In this appendix we gather some information on $\rho(x|q)$ when $|\Delta|<1$. The first proposition justifies the existence of this quantity.

\begin{proposition}[Existence of solutions to~\eqref{eq:cBE}]\label{prop:existence rho |Delta|<1}
For every $|\Delta|<1$ and $q\ge 0$, there exists a unique solution $x\mapsto \rho(x|q)$ to~\eqref{eq:cBE}. Furthermore,  for every $m\in[0,1/2]$, there exists $Q(m)$ satisfying~\eqref{eq:cBE2}.
\end{proposition}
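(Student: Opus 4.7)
My plan is to cast \eqref{eq:cBE} as a Fredholm equation of the second kind on $L^2([-q,q])$ and solve it via an energy (positive-definiteness) argument. Concretely, let $\mathcal{K}_q$ denote the integral operator on $L^2([-q,q])$ with kernel $K(\lambda-\mu)$. Since $K$ is continuous and bounded on $\mathbb{R}$ (see Appendix~\ref{sec:appendix0}), the operator $\mathcal{K}_q$ is Hilbert--Schmidt, hence compact. By the Fredholm alternative, the equation
\[
    (\mathrm{id}+\mathcal{K}_q)\,\rho(\cdot|q)\big|_{[-q,q]} \,=\, \xi\big|_{[-q,q]}
\]
admits a unique solution in $L^2([-q,q])$ as soon as $\mathrm{id}+\mathcal{K}_q$ is injective. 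Once this solution is in hand, \eqref{eq:cBE} is used to define $\rho(\cdot|q)$ on the complement $\mathbb{R}\setminus[-q,q]$ and the standard regularity (continuity, positivity, decay) will follow routinely from the smoothness and positivity of the kernels $K$ and $\xi$.

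The key step is the injectivity of $\mathrm{id}+\mathcal{K}_q$, which I would obtain via Fourier analysis on $\mathbb{R}$. A sum-to-product computation gives
\[
    1+\widehat{K}(t) \,=\, \frac{2\sinh\!\bigl[(\pi-\zeta)\tfrac{t}{2}\bigr]\cosh\!\bigl[\zeta\tfrac{t}{2}\bigr]}{\sinh\!\bigl[\pi\tfrac{t}{2}\bigr]} \,>\, 0, \qquad t\in\mathbb{R},
\]
for every $\zeta\in(0,\pi)$. Given $f\in L^2([-q,q])$ with $(\mathrm{id}+\mathcal{K}_q)f=0$, extend $f$ by zero to $\tilde{f}\in L^2(\mathbb{R})$; then $\mathcal{K}_q f = \mathcal{K}\tilde{f}\big|_{[-q,q]}$ where $\mathcal{K}$ is convolution with $K$ on $\mathbb{R}$. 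Plancherel's identity yields
\[
    0 \,=\, \langle \tilde{f}, (\mathrm{id}+\mathcal{K})\tilde{f}\rangle_{L^2(\mathbb{R})} \,=\, \frac{1}{2\pi}\int_{\mathbb{R}}\bigl(1+\widehat{K}(t)\bigr)|\widehat{\tilde{f}}(t)|^2\,dt,
\]
which forces $\tilde{f}\equiv 0$, and hence $f\equiv 0$. This establishes the existence and uniqueness of $\rho(\cdot|q)$.

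For the second assertion, I would define $\Phi(q):=\int_{-q}^{q}\rho(\lambda|q)\,d\lambda$ and apply the intermediate value theorem. At $q=0$ the equation~\eqref{eq:cBE} yields $\rho(\lambda|0)=\xi(\lambda)$, hence $\Phi(0)=0$. At the other end, the estimate $|\rho(\lambda|q)-\rho(\lambda)|\le \int_{|\mu|>q}|R(\lambda-\mu)|\rho(\mu|q)\,d\mu$ issued from~\eqref{eq:second continuum equation} (together with the uniform $L^1$-bound $\|\rho(\cdot|q)\|_{L^1}\le\|\xi\|_{L^1}/\inf(1+\widehat{K})$) gives $\rho(\cdot|q)\to\rho$ in $L^1(\mathbb{R})$ as $q\to\infty$, so that $\lim_{q\to\infty}\Phi(q)=\int_{\mathbb{R}}\rho(\lambda)\,d\lambda=\widehat{\rho}(0)=\tfrac{1}{2}$. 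The existence of $Q(m)$ then follows by continuity of $\Phi$, itself a consequence of the continuous dependence of $(\mathrm{id}+\mathcal{K}_q)^{-1}$ on $q$: indeed, $q'\mapsto \mathcal{K}_{q'}$ is continuous in operator norm (the difference is bounded by $|q-q'|\,\|K\|_{L^\infty}$ uniformly), so standard perturbation theory, combined with the uniform lower bound on $1+\widehat{K}$, ensures that $q\mapsto \rho(\cdot|q)$ is continuous in $L^2_{\mathrm{loc}}$, and hence so is $\Phi$.

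The main obstacle is controlling the continuous and limiting behaviour of $\rho(\cdot|q)$ uniformly in $q$, since the underlying Hilbert space $L^2([-q,q])$ itself depends on $q$. I would circumvent this by always viewing $\rho(\cdot|q)$ as an element of $L^2(\mathbb{R})$ (extended by zero on $[-q,q]^c$ for the purpose of controlling the Fredholm equation) and exploiting the uniform positive lower bound on $1+\widehat{K}$. Everything else is a direct consequence of the Fredholm framework and the explicit Fourier formulae from Appendix~\ref{sec:appendix0}.
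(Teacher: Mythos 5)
Your argument is correct in its main mechanism but takes a genuinely different route from the paper's. The paper inverts $\mathrm{id}+\mathcal{K}_q$ by a Neumann series on $L^1(\mathbb R)\cap L^\infty(\mathbb R)$, using $\|\mathcal{K}_q\|\le\|K\|_{L^1}=|2\pi-4\zeta|/(2\pi)<1$; this is shorter, and it also delivers for free the uniform $L^1\cap L^\infty$ bounds and (via the Fredholm series) the smoothness of $(x,q,\Delta)\mapsto\rho(x|q)$ that the paper invokes for the intermediate value argument. Its drawback is that the contraction constant degenerates as $\Delta\to-1$ (i.e.\ $\zeta\to0$), which is why the paper must switch to the resolvent form \eqref{eq:second continuum equation} in Appendix~\ref{sec:appendix1.5}. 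Your Fredholm-alternative-plus-positive-definiteness argument is sound: since $K$ is real and even and $\tilde f$ is supported in $[-q,q]$, one indeed has $\langle\tilde f,(\mathrm{id}+\mathcal K)\tilde f\rangle_{L^2(\mathbb R)}=\langle f,(\mathrm{id}+\mathcal K_q)f\rangle_{L^2([-q,q])}$, and $1+\widehat K>0$ (your sum-to-product identity is correct, and $\inf_t(1+\widehat K(t))=2(\pi-\zeta)/\pi>0$) forces $f\equiv0$. This buys robustness uniformly down to and including $\Delta=-1$, at the price of working in $L^2$ rather than $L^1\cap L^\infty$. Both proofs conclude by the intermediate value theorem applied to $\Phi(q)=\int_{-q}^q\rho(\lambda|q)\,d\lambda$.

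The one genuine soft spot is your justification that $\Phi(q)\to\tfrac12$. First, the bound $\|\rho(\cdot|q)\|_{L^1}\le\|\xi\|_{L^1}/\inf(1+\widehat K)$ does not follow from your framework: $\inf(1+\widehat K)$ controls the $L^2\to L^2$ norm of $(\mathrm{id}+\mathcal K_q)^{-1}$, not its $L^1\to L^1$ norm; for the latter one falls back on the Neumann bound $(1-\|K\|_{L^1})^{-1}$, i.e.\ the paper's argument. Second, and more substantively, even granting a uniform $L^1(\mathbb R)$ bound, integrating your pointwise estimate only gives $\|\rho(\cdot|q)-\rho\|_{L^1}\le\|R\|_{L^1}\int_{|\mu|>q}|\rho(\mu|q)|\,d\mu$, and a uniform bound on the total mass does not force the mass outside $[-q,q]$ to vanish as $q\to\infty$. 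That tail is precisely the nontrivial quantity: integrating \eqref{eq:cBE} over $\mathbb R$ shows it equals $\tfrac{2(\pi-\zeta)}{\pi}\big(\tfrac12-\Phi(q)\big)$, cf.~\eqref{eq:expression 1/2-m}, so bounding it is equivalent to what you want to prove. To close the loop one can use the positivity of $R$, the bound $\|R\|_{L^1}=\widehat R(0)<\tfrac12$ and the expansion \eqref{eq:infinite sum}, which yield $\int_{|\mu|>q}|\rho(\mu|q)|\,d\mu\le(1-\|R\|_{L^1})^{-1}\int_{|\mu|>q}\rho(\mu)\,d\mu\to0$, as in the proof of Proposition~\ref{lem:properties rho}. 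This is a fixable omission rather than a wrong approach, but as written the limit $\Phi(q)\to\tfrac12$ is not established.
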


\begin{proof}
Direct computation shows that the operator $\mathcal K$ on $L^1([-q,q])\cap L^\infty([-q,q])$ defined by 
 \[
 \mathcal K[f](x):=\int_{-q}^q K(x-y)f(y)dy
\] 
satisfies  
\begin{equation}
	\| \mathcal{K} \|_{L^\infty \rightarrow L^\infty} \le 
	\| \mathcal{K} \|_{L^1 \rightarrow L^1}
	= \| K \|_{L^1(\mathbb{R}) } 
	= |\vartheta(+\infty)-\vartheta(-\infty)|= \frac{|4 \zeta - 2\pi|}{2\pi} < 1.
\end{equation}


It follows that $\idmap + \mathcal{K}$ is invertible and the solution $\rho(\lambda | q)$ is unique and lies in $L^1(\mathbb R) \cap L^\infty(\mathbb R)$ with a uniform bound on the norm. 
Since $K(x)$ is smooth in $x$ and $q$, the Fredholm series representation for the resolvent of $\idmap + \mathcal{K}$~\cite{GohGolKru00} allows one to infer that  $(x,q,\Delta)\mapsto \rho(x | q)$ is smooth.

The existence of $Q(m)$ follows readily from the continuity of the map $(x, q,\Delta)\mapsto \rho(x|q)$, the mean-value theorem, and the fact that $\rho(x|0)$ integrates to $0$ while $\rho(\cdot)=\rho(\cdot|Q(\tfrac12))$ to $\tfrac12$. 
\end{proof}

The following proposition gives the necessary properties for the proof of Theorem~\ref{thm:analyticity} when $-1< \Delta<0$ (see Section~\ref{sec:analyticity 0<Delta<1}).

\begin{proposition}[Properties necessary for Theorem~\ref{thm:analyticity}]\label{lem:properties rho}
Then there exist $c,C>0$ such that for every $-1 < \Delta \le 0$, \vspace{2mm}
	\begin{itemize}[noitemsep,nolistsep]
	\item[ {\rm (i)} ] For $q\in\mathbb R$ and $x\in\mathbb R$, $0<\rho(x)\le \rho(x|q)\le \rho(x)+\rho(q)$.\vspace{2mm}
	\item[ {\rm (ii)} ] For $q\in\mathbb R$ and $x\in
	\mathbb R$, 
	$|\rho'(x|q)| \le \frac{C}{\zeta}(\rho(x|q)+\rho(q))$. 	\vspace{2mm}
	\item[ {\rm (iii)} ] For every $m\in\mathbb R$, $\tfrac{c}{\zeta}(\tfrac12-m)\le \rho(Q(m))\le \tfrac{C}{\zeta}(\tfrac12-m)$.
	\end{itemize}
\end{proposition}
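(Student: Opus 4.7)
For (i), the strict positivity $\rho(x)>0$ is immediate from the explicit formula $\rho(x)=1/[2\zeta\cosh(\pi x/\zeta)]$ recalled in Appendix~\ref{sec:appendix0}. For the lower bound $\rho(x)\le \rho(x|q)$, I would start from the resolvent form~\eqref{eq:second continuum equation}, rearranged as $\rho(\cdot|q)=\rho+\mathcal R P_q^\perp\rho(\cdot|q)$, with $\mathcal R$ the convolution by the full-line resolvent kernel $R$ and $P_q^\perp$ the restriction to $\{|\mu|>q\}$. Iterating produces the Neumann series $\rho(\cdot|q)=\sum_{n\ge 0}(\mathcal R P_q^\perp)^n\rho$, so the inequality reduces to the pointwise positivity $R(x)\ge 0$ on $\mathbb R$ in the regime $\zeta\in(0,\pi/2]$. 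This last positivity I would verify by expanding $\widehat R(t)=\tfrac12\bigl(1-\coth[(\pi-\zeta)t/2]\tanh[\zeta t/2]\bigr)$ into a convergent series of decaying exponentials in $|t|$ (geometric expansions of $\coth$ and $\tanh$) and inverting term by term into Poisson kernels $\alpha/[\pi(\alpha^2+x^2)]$ whose combined coefficients I would argue to be nonnegative. For the upper bound $\rho(x|q)\le \rho(x)+\rho(q)$, the plan is to apply~\eqref{eq:second continuum equation} once more to write $\rho(x|q)-\rho(x)=\int_{|\mu|>q}R(x-\mu)\rho(\mu|q)d\mu$, establish from~\eqref{eq:cBE} a uniform tail bound $\rho(\mu|q)\le C\rho(q)$ for $|\mu|\ge q$ (exploiting the decay of $\xi$ and of the convolution term), and absorb the constant through $\int R\le \widehat R(0)=(\pi-2\zeta)/[2(\pi-\zeta)]<1$.

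For (ii), the plan is to differentiate~\eqref{eq:cBE} in $\lambda$, integrate by parts the resulting convolution, and use the symmetry $\rho(q|q)=\rho(-q|q)$ to obtain $(I+\mathcal K_q)\rho'(\cdot|q)=\xi'+\rho(q|q)\bigl(K(\cdot-q)-K(\cdot+q)\bigr)$. The explicit closed forms $\xi(\lambda)=\sin\zeta/[\pi(\cosh 2\lambda-\cos\zeta)]$ and $K(\lambda)=\sin 2\zeta/[\pi(\cosh 2\lambda-\cos 2\zeta)]$ yield by direct inspection the uniform logarithmic bounds $|\xi'|\le (C/\zeta)\xi$ and $|K'|\le (C/\zeta)K$ on $\mathbb R$ for $\zeta\in(0,\pi/2]$. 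Feeding these into the differentiated equation and using~\eqref{eq:cBE} to exchange $\int K\rho(\cdot|q)$ for $\xi-\rho(\cdot|q)$ yields $|\rho'(\lambda|q)|\le (C/\zeta)\xi(\lambda)+(C/\zeta)\rho(q|q)\bigl(K(\lambda-q)+K(\lambda+q)\bigr)$. Part~(i) then allows one to bound $\xi(\lambda)\lesssim \rho(\lambda|q)$ on the bulk, while for $\lambda\in[-q-\epsilon,q+\epsilon]$ the chain $\rho(\lambda|q)\ge\rho(\lambda)\ge e^{-\pi\epsilon/\zeta}\rho(q+\epsilon)$ (from~(i) combined with the explicit exponential decay rate $\pi/\zeta$ of $\rho$) inverts to produce precisely the $e^{\pi\epsilon/\zeta}$ factor in the target inequality.

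For (iii), pairing~\eqref{eq:cBE} against the constant function $1$ over all of $\mathbb R$ and using $\widehat K(0)=1-2\zeta/\pi$, $\widehat\xi(0)=1-\zeta/\pi$ together with $\int_{-Q}^Q\rho(\cdot|Q)d\mu=m$ gives the identity
\[
\int_{|\mu|>Q}\rho(\mu|Q)\,d\mu\,=\,(1-\zeta/\pi)(1-2m).
\]
The elementary primitive $\int_Q^\infty\rho(\mu)\,d\mu=(1/\pi)\arctan(e^{-\pi Q/\zeta})$ combined with $\rho(Q)=1/[2\zeta\cosh(\pi Q/\zeta)]$ yields the uniform pointwise equivalence $\rho(Q)\asymp(1/\zeta)\int_{|\mu|>Q}\rho(\mu)\,d\mu$, valid for all $Q\ge 0$. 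The upper bound of (iii) then follows from $\int_{|\mu|>Q}\rho\le\int_{|\mu|>Q}\rho(\cdot|Q)=(1-\zeta/\pi)(1-2m)$ granted by (i). For the matching lower bound, I would control the excess $\int_{|\mu|>Q}[\rho(\mu|Q)-\rho(\mu)]\,d\mu$ through a refinement of the upper bound of (i), showing that the correction is concentrated at scale $\zeta$ near $|\mu|=Q$ and integrates to $O(\zeta\rho(Q))=o(1-2m)$, giving $\int_{|\mu|>Q}\rho\asymp 1-2m$. I expect the main obstacle to lie in part (i): the pointwise positivity $R\ge 0$ requires careful sign tracking in the Fourier inversion, and the refined tail estimate on $\rho(\cdot|Q)-\rho$ needed for the lower bound in (iii) is nontrivial.
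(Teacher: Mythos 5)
Your overall architecture (Neumann series for (i), differentiated integral equation for (ii), integrated identity plus explicit tails of $\rho$ for (iii)) is close to the paper's, and your identity $\int_{|\mu|>Q}\rho(\mu|Q)\,d\mu=(1-\zeta/\pi)(1-2m)$ is correct and is exactly the relation the paper derives (via \eqref{eq:second continuum equation} and $\widehat R(0)$ rather than $\widehat K(0),\widehat\xi(0)$). But several sub-steps as you describe them would fail. First, in the upper bound of (i), you cannot get the tail bound $\rho(\mu|q)\le C\rho(q)$ for $|\mu|\ge q$ "from \eqref{eq:cBE} exploiting the decay of $\xi$'': since $K\ge0$ that route only yields $\rho(\mu|q)\le\xi(\mu)$, and $\xi$ decays like $\mathrm e^{-2|\mu|}$ while $\rho$ decays like $\mathrm e^{-\pi|\mu|/\zeta}$, so $\xi(q)\gg\rho(q)$ for large $q$ once $\zeta<\pi/2$. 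The fix is the Neumann series you already have: $\rho(x|q)\le\rho(x)+\sum_{k\ge1}\|\mathcal U\|_{L^\infty\to L^\infty}^k\,\|\rho\,\pmb{\mathbb 1}_{|y|>q}\|_\infty\le\rho(x)+\rho(q)$, using $\|\mathcal U\|\le\widehat R(0)<1/2$; this is what the paper does. Likewise, your positivity of $R$ via termwise Fourier inversion into Poisson kernels is only asserted; the paper's shortcut is to write $\widehat R=\widehat\rho\cdot(\widehat K/\widehat\xi)$ and compute the inverse transform of $\widehat K/\widehat\xi$ in closed form as a manifestly positive function $F$, whence $R=\rho\conv F>0$.

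The most serious gap is in (ii). You differentiate \eqref{eq:cBE} and need to invert $\mathrm{id}+\mathcal K_q$, but $\|\mathcal K_q\|\le\|K\|_{L^1}=1-2\zeta/\pi$, so $\|(\mathrm{id}+\mathcal K_q)^{-1}\|$ is only bounded by $\pi/(2\zeta)$, which blows up precisely in the limit $\zeta\to0$ where the uniformity of the constant $C/\zeta$ matters; your intermediate pointwise inequality $|\rho'(\lambda|q)|\le(C/\zeta)\xi(\lambda)+(C/\zeta)\rho(q|q)(K(\lambda-q)+K(\lambda+q))$ is asserted without proof and the bootstrap it implicitly requires (comparing $\xi$ to $\rho(\cdot|q)$ pointwise) does not obviously close, again because $\xi/\rho$ is unbounded. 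The paper avoids this entirely by differentiating the resolvent form \eqref{eq:second continuum equation}, obtaining $\rho'(\cdot|q)-\mathcal U[\rho'(\cdot|q)]=\rho'+[R(\cdot-q)-R(\cdot+q)]\rho(q|q)$ with the uniformly contractive $\mathcal U$, and then using $|\rho'|\le(\pi/\zeta)\rho$, $\|R\|_\infty\le C/\zeta$ and $\rho(q)\le C\mathrm e^{\pi\eps/\zeta}\rho(x)$ on $[-q-\eps,q+\eps]$. Finally, in (iii) the "nontrivial refinement'' you fear for the lower bound is unnecessary: the $L^1\to L^1$ bound $\|\mathcal U\|\le1/2$ applied to \eqref{eq:infinite sum} gives directly $\int_{|\mu|>Q}\rho(\mu|Q)\,d\mu\le2\int_{|\mu|>Q}\rho(\mu)\,d\mu$, and combined with your exact primitive for $\int_Q^\infty\rho$ this yields both bounds of (iii) at once.
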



The lower bound of (i) was first established in~\cite{DugGohKoz14}.

\begin{proof}
 Recall that $\widehat R=\widehat{K}/(1+\widehat{K})$. Following~\cite{YangYang66b}, one gets that  $R\ge0$  since $\widehat R=
\widehat\rho\widehat{K}/ \widehat{\xi}$ and therefore 
\begin{equation}
R(\lambda) = \int\limits_{ \mathbb{R} }{}  \rho(\lambda-y)F(y)dy  
\label{ecriture R comme convolution}
\end{equation}
in which $\rho$ is obviously positive while 
\begin{equation*}
F(x) = \frac{1}{2\pi} \int\limits_{ \mathbb{R} }{}  \frac{\widehat K(\omega)} { \widehat\xi(\omega)}   \mathrm{e}^{i x \omega}d \omega \,  
= \frac{1}{\pi-\zeta}\frac{ \sin{(\frac{\pi \zeta}{\pi-\zeta}  )} }{  \cosh{(\frac{2 \pi x}{\pi-\zeta})} - \cos{(\frac{\pi\zeta}{\pi-\zeta}  )}   } > 0 \;, 
\end{equation*}
where the second equality follows from a straightforward residue computation. 

The operator $\mathcal U$ on $L^\infty(\mathbb{R})\cap L^1(\mathbb{R})$ defined by 
\[
\mathcal U[f](x):=\int_{\mathbb R\setminus[-q,q]}R(x-y)f(y)dy
\]  
satisfies
\begin{equation}
\max \big\{ \| \mathcal{U} \|_{L^\infty(\mathbb R) \rightarrow L^\infty(\mathbb R)} \, , \,  \| \mathcal{U} \|_{L^1(\mathbb R) \rightarrow L^1(\mathbb R)} \big\} \le \|R\|_{L^1(\mathbb R)}=\widehat R(0)= \frac{\widehat{K}(0)}{1 + \widehat{K}(0)}
=\frac{\pi - 2 \zeta}{2 \pi - 2 \zeta} < \frac12
\end{equation}
so that the version~\eqref{eq:second continuum equation} of~\eqref{eq:cBE} immediately gives that
\begin{align}
\rho(x|q)&=\sum_{k=0}^{\infty}\mathcal U^k[\rho](x).
\label{eq:infinite sum}\end{align}
This expression and the fact that $R\ge0$ gives the lower bound of (i). For the upper bound, we isolate the first term in the sum and then use operator bounds to get that 
\begin{align*}
	\rho(x|q)&\le \rho(x)+\sum_{k=1}^\infty\|\mathcal U^k\|_{L^\infty(\mathbb R)\rightarrow L^\infty(\mathbb R)}\cdot\|\rho \,\pmb{\mathbb{1}}_{|x|>q}\|_\infty
	\le  \rho(x)+\rho(q).
\end{align*}

To prove (ii), differentiate~\eqref{eq:second continuum equation} with respect to $\lambda$ and then integrate by parts to obtain that $\rho'(\cdot|q)$ satisfies the functional equation:
\[
\rho'(x|q)-\int_{ \mathbb{R} \setminus [-q,q] } \hspace{-7mm} R(x-y)\rho'(y|q)\, dy = \rho'(x)+[R(x-q)-R(x+q)]\rho(q|q).
\]
In particular 
\begin{align*}
| \rho'(x|q) | \le   |\rho'(x)|  + 2\| R \|_\infty  \rho(q| q) \le   |\rho'(x)| +  4\tfrac{\widetilde{C}}{\zeta} \rho(q),
\end{align*}
since $\| R \|_\infty < \widetilde{C} / \zeta$ and $\rho(q| q)< 2\rho(q)$.
Finally, using 
$  |\rho'(x)|\leq \tfrac\pi\zeta\rho(x)$ 
 we obtain the desired bound for a well-chosen value of $C$. 
%
%
%
%
%
%
%

We now focus on (iii). The definition of $m$,~\eqref{eq:second continuum equation}, and 
\begin{equation*}
	\widehat{R}(0) \, = \, \frac{\widehat K(0)}{1+\widehat K(0)} \, = \, \frac{\pi - 2 \zeta}{2 \pi - 2 \zeta}<\frac{1}{2}
\end{equation*}
 give that
\begin{align}
\tfrac12-m&=\int\limits_{\mathbb R}\rho(x)dx-\int\limits_{-Q(m)}^{Q(m)}\rho(x|Q(m))dx\nonumber\\
&=  \hspace{-6mm} \int\limits_{ [-Q(m),Q(m)]^{\mathrm{c}} } \hspace{-6mm} \rho(x|Q(m))dx  - \hspace{-6mm}  \int\limits_{ \mathbb R \times [-Q(m),Q(m)]^{\mathrm{c}} } \hspace{-6mm} R(x-y)\rho(y|Q(m)) dy dx\nonumber\\
&=\tfrac{ \pi }{2 \pi - 2 \zeta} \hspace{-6mm} \int\limits_{ [-Q(m),Q(m)]^{\mathrm{c}} } \hspace{-6mm} \rho(x|Q(m))dx
\; > \;  \tfrac{ \pi }{2 \pi - 2 \zeta}  \hspace{-6mm} \int\limits_{ [-Q(m),Q(m)]^{\mathrm{c}} } \hspace{-6mm} \rho(x)dx 
= (1 + o(1)) \; \tfrac{ 2 \zeta }{2 \pi - 2 \zeta}  \rho(Q(m)),\label{eq:expression 1/2-m}
\end{align}
as $m\rightarrow 1/2$ (since then $Q(m)\rightarrow + \infty$). Above, we used the notation $[-Q(m),Q(m)]^{\mathrm{c}} :=  \mathbb R\setminus[-Q(m),Q(m)]$. 
Plugging again the expression~\eqref{eq:infinite sum} in this estimate to replace $\rho(\cdot|Q(m))$ by $\rho$ in the integral, and then using the explicit formula for $\rho$ gives the result easily. 
 \end{proof}
 
 We finish with the properties necessary to obtain Theorem~\ref{thm:quantitative} for $|\Delta|<1$.

 \begin{proposition}[Properties necessary for Theorem~\ref{thm:quantitative}]\label{lem:asymptotic |Delta|<1}
 
 There exists $C>0$ such that for every $|\Delta|<1$:
\begin{itemize}[noitemsep]
\item[ {\rm (i)}] There exists a unique solution $T\in \big( L^{\infty}\cap L^1\big)(\mathbb R)$ of the functional  equation 
\begin{equation}\label{eq:T}
T(x) -\int\limits_0^\infty  R(x-y)T(y)dy = \mathfrak{e}(x) \quad \text{with} \quad   \mathfrak{e}(x) \, := \, \tfrac{1}{\zeta} \mathrm{e}^{-x \frac{\pi}{\zeta}} \pmb{\mathbb{1}}_{\mathbb{R}_+}(x).
\end{equation}
\item[{\rm (ii)}] For every $q\ge0$ and $x\in \mathbb R$,
\begin{equation}
\rho(x|q)=\rho(x) + \mathrm{e}^{-q \frac{\pi}{\zeta} } \big[  (T-\mathfrak{e})(q-x) + (T-\mathfrak{e})(-q-x) + \delta T(x) \big] \, ,
\label{eq:expression rho shifted}
\end{equation} where $\|\delta T\|_\infty+\|\delta T\|_1\le Ce^{-2q}$. \vspace{2mm}
\item[{\rm(iii)}] It holds
$$\displaystyle \lim_{m\rightarrow 1/2}   ( \tfrac{1}{2}-m)  \mathrm{e}^{Q(m) \frac{\pi}{\zeta} }  \;=\; \frac{ \pi }{ \pi - \zeta } \int\limits\limits_{0}^{+\infty} T(\lambda)d \lambda. $$ 
\end{itemize}
\end{proposition}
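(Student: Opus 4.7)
The plan is to treat the three claims in order, with the Wiener--Hopf structure of the half-line operator $I - R_+$, where $R_+ f(x) := \int_0^\infty R(x-y)f(y)\,dy$, as the unifying tool; the guiding idea is that $\rho(\cdot|q)$ is a small perturbation of $\rho$ when $q$ is large, and that its correction, localised near $\pm q$, is governed by a Wiener--Hopf problem which, after rescaling by $e^{-q\pi/\zeta}$, is precisely the equation defining $T$. For (i), I would use that $R \ge 0$ by \eqref{ecriture R comme convolution} together with $\|R\|_{L^1(\mathbb R)} = \widehat R(0) = \tfrac{\pi-2\zeta}{2(\pi-\zeta)} < \tfrac12$. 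This gives $\|R_+\|_{L^1 \to L^1}, \|R_+\|_{L^\infty \to L^\infty} \le \tfrac12$, so $I - R_+$ is invertible on $L^1(\mathbb R) \cap L^\infty(\mathbb R)$ via a Neumann series, and $T := \sum_{k \ge 0} R_+^k \mathfrak{e}$ is the unique solution to \eqref{eq:T} in that space.

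For (ii), I would start from \eqref{eq:second continuum equation}, split the integration domain into $[q,\infty)$ and $(-\infty,-q]$, substitute $y = q+v$ and $y = -q-v$, and use the evenness of $\rho(\cdot|q)$ and of $R$ to derive, for $\tilde\rho(u) := \rho(q+u|q)$ on $u \ge 0$, the Wiener--Hopf equation
\[
(I - R_+)\tilde\rho(u) = \rho(q+u) + \int_0^\infty R(2q+u+v)\tilde\rho(v)\,dv.
\]
The explicit formula for $\rho$ in Appendix~\ref{sec:appendix0} gives $\rho(q+u) = e^{-q\pi/\zeta}\mathfrak{e}(u)\bigl(1 + O(e^{-2q\pi/\zeta})\bigr)$ uniformly in $u \ge 0$, while the exponential decay of $R$ bounds the second term by $Ce^{-2q\alpha}\|\tilde\rho\|_\infty$ for some $\alpha>0$ determined by the first pole of $\widehat R$ off the real axis. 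A first application of $(I-R_+)^{-1}$ together with a self-consistency argument yields the crude estimate $\|\tilde\rho\|_{L^1 \cap L^\infty} = O(e^{-q\pi/\zeta})$; writing $\tilde\rho = e^{-q\pi/\zeta}T + \delta$ and inverting once more produces $\|\delta\|_{L^1 \cap L^\infty} = O(e^{-q(\pi/\zeta + \beta)})$ for some $\beta > 0$. To pass to the full formula on $\mathbb R$, I would insert the refined expansion back into $\rho(x|q) = \rho(x) + \int_0^\infty [R(x-q-v) + R(x+q+v)]\tilde\rho(v)\,dv$ (obtained from \eqref{eq:second continuum equation} by the same change of variable), and exploit the identity $\int_0^\infty R(z-v)T(v)\,dv = T(z) - \mathfrak{e}(z)$, valid for all $z \in \mathbb R$ by \eqref{eq:T}, to recognise the two convolutions as $(T-\mathfrak{e})(x-q)$ and $(T-\mathfrak{e})(-q-x)$, absorbing the remaining terms into $\delta T$.

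For (iii), I would combine \eqref{eq:expression rho shifted} with the identity obtained in the proof of Proposition~\ref{lem:properties rho}(iii), namely $\tfrac12 - m = \tfrac{\pi}{2(\pi-\zeta)}\int_{|x|>Q(m)}\rho(x|Q(m))\,dx$. Substituting the expansion of (ii) at $q = Q(m)$, the leading contribution $\int_{|x|>q}\rho(x)\,dx = \tfrac{2}{\pi}e^{-q\pi/\zeta}(1+o(1))$ is computed directly from the formula for $\rho$, and each of the two cross-terms yields $\int_0^\infty T - \tfrac1\pi + o(1)$ after the changes of variables $u = x-q$ and $v = -q-x$ (using that $\mathfrak{e}$ vanishes on the negative half-line and that tails of $T$ are integrable). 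The constants $-\tfrac1\pi$ precisely cancel the $\tfrac2\pi$ from $\int_{|x|>q}\rho$, so that $\int_{|x|>q}\rho(x|q)\,dx = 2 e^{-q\pi/\zeta}\int_0^\infty T(\lambda)\,d\lambda + o(e^{-q\pi/\zeta})$; multiplication by the prefactor $\pi/(2(\pi-\zeta))$ delivers the stated limit.

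The principal difficulty is (ii): the errors must be controlled uniformly in $q$ in both $L^1$ and $L^\infty$ and with an explicit exponential gain, which forces one to iterate the Wiener--Hopf inversion --- first to obtain the crude decay of $\tilde\rho$, then to peel off the leading-order contribution $e^{-q\pi/\zeta}T$ with a quantitatively controlled remainder; the other steps become essentially algebraic once this expansion is in hand.
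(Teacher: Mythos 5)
Your proposal is correct and rests on the same two pillars as the paper's own proof: invertibility of $\idmap-R_+$ on the half-line via a Neumann series (using $R\ge 0$ and $\|R\|_{L^1(\mathbb R)}=\widehat R(0)<\tfrac12$), and the exponentially small coupling $O(\mathfrak b(2q))=O(R(2q))$ between the two edges $\pm q$; your part (iii) is then exactly the paper's integration argument, including the cancellation of the $\tfrac2\pi$ constants and the prefactor $\tfrac{\pi}{2(\pi-\zeta)}$ from \eqref{eq:expression 1/2-m}. The only real difference is the bookkeeping in (ii): the paper expands $\rho(\cdot|q)=\sum_k(\mathcal U_++\mathcal U_-)^k[\rho]$ and sorts terms by sign strings $\eps_1\cdots\eps_k$ (pure strings produce $\mathrm e^{-q\pi/\zeta}\check\tau_\pm[T-\mathfrak e]$, any string with a sign change feeds $\delta T_{\mathrm{pert}}$), whereas you close a single Wiener--Hopf equation for the edge profile $\tilde\rho(u)=\rho(q+u|q)$, with the reflection term $\int_0^\infty R(2q+u+v)\tilde\rho(v)\,dv$ carrying the cross-coupling, and bootstrap the inversion twice; the two are equivalent and yours is arguably tidier to estimate. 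One reassurance on a point where you appear to disagree with the statement: your first correction term $(T-\mathfrak e)(x-q)$ is what the paper's own computation yields (via $\check\tau_+[T-\mathfrak e](x)=(T-\mathfrak e)(x-q)$) and what is used in Section~\ref{sec:6.2} through $\rho(x+q|q)$; since $T-\mathfrak e$ is not even, the $(T-\mathfrak e)(q-x)$ in the displayed statement must be a typo --- indeed only the $(x-q)$ version makes the bracket even in $x$, as the symmetry of $\rho(\cdot|q)$ requires.
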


Note that, in fact, one may solve~\eqref{eq:T} in terms of a scalar Riemann--Hilbert problem by implementing the Wiener-Hopf method. 
However, we will not need such a precise information on $T$ and will thus establish Proposition~\ref{lem:asymptotic |Delta|<1} by more elementary means.

 \begin{proof}
 
We stress that, below, all domination relations $O(f)$ will be uniform in $\Delta$, \textit{viz}.~
bounded by $Cf$ with $C$ being $\Delta$ independent.
  We start with Item (i). Introduce the operator $\mathcal V $ on $L^\infty(\mathbb{R})\cap L^1(\mathbb{R})$ defined by 
\[
\mathcal V[f](x):=\int\limits_{\mathbb R_+}R(x-y)f(y)dy.
\] 
which has $\|\cdot\|_{L^1(\mathbb R)\rightarrow L^1(\mathbb R)}$-operator norm smaller than $1/2$, owing to the chain of bounds 
$\| \mathcal V[f] \|_1 \, \le \,  \| R \|_\infty \|f\|_1 \, \le \, \tfrac{1}{2} \| f \|_1$.
This justifies the existence and uniqueness of $T\in L^1(\mathbb R)$ and gives the formula
\begin{equation}\label{eq:sum T}
T= \sum_{k=0}^\infty \mathcal V^k[ \mathfrak{e}] .
\end{equation}
 
 For item (ii),  introduce the operators 
\begin{align*}
\mathcal U_+ [f](x)& := \mathcal V[f( q+\cdot)](-q + x) \; = \; \int\limits_{q}^{+\infty}R(x-y)f(y) d y  \\
\mathcal U_- [f](x)& := \mathcal V[f(- q-\cdot)](-q - x) \; = \; \int\limits^{-q}_{-\infty}R(x-y)f(y) d y 
\end{align*}
 and observe that $\mathcal U [f] = \mathcal U_+[f] + \mathcal U_-[f]$. 
Further, by introducing the operators $\tau_{\pm}, \check \tau_{\pm}$ such that $\tau_{\pm}[f](x)=f(\pm q \pm x)$  and $\check\tau_{\pm}[f](x)=f(- q \pm x)$, 
one finds that $ \mathcal U_\pm = \check\tau_{\pm}   \mathcal V   \tau_{\pm}$. Therefore, one gets that
\begin{align*}
\rho(\lambda|q) & =  \rho(\lambda) \, + \, \sum_{ k \ge 1}^{} ( \mathcal U_+ + \mathcal U_-)^k[\rho](\lambda) \, = \, 
 \rho(\lambda) \, + \, \sum_{ k \ge 1}^{}( \mathcal U_{+} ^k  +  \mathcal U_{-} ^k) [\rho] \, + \, \delta T_{\text{pert}}(\lambda) \, , 
\end{align*} 
in which
\begin{equation}
\delta T_{\text{pert}}(\lambda)\, := \,   \sum_{ k \ge 1}  \sum_{p=1}^{k-1} \sum_{ \substack{ \epsilon_i \in \{ \pm \}   \\  \#  \{ i: \epsilon_i=+ \} = p}  }^{} \hspace{-2mm} \big( \mathcal U_{\epsilon_1} \cdots \mathcal U_{\epsilon_k} \big) [\rho](\lambda)  .
\label{definition T pert}
\end{equation}
A direct calculation yields
\begin{equation}
 \mathcal U_+\mathcal U_-[f](\lambda)  =  \int\limits_{0}^{+\infty}  R(-q+\lambda - \mu)d \mu  \int\limits_{-\infty}^{-q}  R(q+\mu-\nu) f(\nu)d \nu.
\end{equation}
 In order to estimate the norm of $ \mathcal U^+\mathcal U^-$ one recalls the convolution representation~\eqref{ecriture R comme convolution} for $R$
 which shows that $R$ is decreasing on $\mathbb{R}_+$ and enjoys the bound 
\begin{equation}
 R(\lambda) \, = \, O( \mathfrak{b}(\lambda) ) \quad \text{as} \quad \lambda \rightarrow + \infty 
\quad \text{and} \; \text{where} \quad \mathfrak{b}(\lambda) := \max\big\{ \mathrm{e}^{ -\frac{2\pi}{\pi-\zeta} |\lambda| }, \mathrm{e}^{ -\frac{\pi}{\zeta} |\lambda| }   \big\} . 
\end{equation}
 This immediately yields 
$$
 \| \mathcal U_+\mathcal U_-[f] \|_{ L^{1}(\mathbb{R}) } \, \leq \, C \, \mathfrak{b}(2q) \, \|  R \|_{ L^{1}(\mathbb{R}) } \| f \|_{ L^{1}(\mathbb{R}) },
$$
for some constant $C$. Clearly, similar bounds do hold for  $\| \mathcal U_-\mathcal U_+[f] \|_{ L^{1}(\mathbb{R}) }$. 

Now, observe that $\check\tau_{\pm}\tau_{\pm}= \idmap$, so that $ \big( \mathcal U_{\pm} \big)^k[\rho] =  \check\tau_{\pm} \mathcal{V}^{k}[\rho(\pm q  \pm \cdot)]$. The identity
\begin{equation}
\rho(\pm q \pm x) \, = \, \mathrm{e}^{-\frac{\pi}{\zeta}q} \mathfrak{e}(x) \Big( 1 +  \delta  \mathfrak{e}(x)  \Big)\qquad \text{with} \qquad
 \delta
 \mathfrak{e}(x) := \frac{ - \mathrm{e}^{-2\frac{\pi}{\zeta}(q+x) }  }{ 1+  \mathrm{e}^{-2\frac{\pi}{\zeta}(q+x) } } \;, 
\end{equation}
which holds uniformly for $x \geq 0$ yields 
\begin{equation}
 \sum_{ k \ge 1}^{} \big( \mathcal U_{\pm} \big)^k  [\rho]  \, = \, \mathrm{e}^{-\frac{\pi}{\zeta}q} \check{\tau}_{\pm}\big[ T - \mathfrak{e}\big] \; + \;   \mathrm{e}^{-\frac{\pi}{\zeta}q} \delta T_{\pm} 
\quad \text{with} \quad 
 \delta T_{\pm}   \, = \,   \sum_{ k \ge 1}  \check{\tau}_{\pm} \cdot  \mathcal V ^k  [  \mathfrak{e} \delta  \mathfrak{e} ] 
\end{equation}
By using that 
\begin{equation}
   \| \mathcal{U}_{\pm} \|_{ L^{1/\infty}(\mathbb{R} ) \rightarrow   L^{1/\infty}(\mathbb{R}) } \, = \,  \|\mathcal{V} \|_{ L^{1/\infty}(\mathbb{R}) } < \varkappa_\zeta= \frac{ \pi -2 \zeta  }{2 ( \pi  - \zeta) }< \tfrac{1}{2},
\label{ecriture borne op sur U pm}
\end{equation}
one infers that 
$$
\|  \delta T_{\pm} \|_{ L^{1/\infty}(\mathbb{R}) } \, \leq \, \sum_{k \geq 1}^{} \frac{1}{2^k} \| \mathfrak{e} \delta  \mathfrak{e}  \|_{ L^{1/\infty}(\mathbb{R}_+) } \, \leq \, C \mathrm{e}^{- \frac{2\pi}{\zeta} q } . 
$$

Further, given $\epsilon_i\in \{\pm\}$  such that $\# \{ i \, : \, \epsilon_i = + \}\in \{1,\dots, k-1\}$, there is necessarily at least one change of sign in the string $\epsilon_1,\dots, \epsilon_k$ so that one gets 
\begin{align*}
	\|\mathcal U_{\epsilon_1} \cdots \mathcal U_{\epsilon_k}   [\rho] \|_{ L^{1/\infty}(\mathbb{R}) } 
	&= \varkappa_{\zeta}^{k-2} \cdot \text{e}^{-\frac{\pi}{\zeta} q}\cdot 
		\max\Big\{  \|  \mathcal U_{+} \mathcal U_{-}  \|_{ L^{1/\infty}(\mathbb{R}) } ,  \|  \mathcal U_{-} \mathcal U_{+}   \|_{ L^{1/\infty}(\mathbb{R}) }  \Big\} \\ 
	&\leq C \cdot \varkappa_{\zeta}^{k-2} \cdot \text{e}^{-\frac{\pi}{\zeta} q}\cdot  \mathfrak{b}(2q) .  
\end{align*}
Above, $\mathrm{e}^{- \frac{ \pi}{\zeta} q }$ issues from the estimates of the action of $\mathcal U^\pm$ on $\rho$. 
This leads to the estimate on $\delta T_{\text{pert}}$ introduced in~\eqref{definition T pert}:
$$
\| \delta T_{\text{pert}}  \|_{ L^{1/\infty}(\mathbb{R}) }\, \leq \, C \cdot \sum_{ k \geq 1 }^{} \sum_{p=1}^{k-1} C_{k}^{p} \varkappa_{\zeta}^{k-2} \mathfrak{b}(2q) \cdot \mathrm{e}^{- \frac{ \pi}{\zeta} q }
\, \leq \, C  \mathrm{e}^{- \frac{ \pi}{\zeta} q } \mathfrak{b}(2q) \cdot \sum_{ k \geq 1 }^{} (2 \varkappa_{\zeta})^{k}= C^{\prime}(\zeta)   \mathrm{e}^{- \frac{ \pi}{\zeta} q } \mathfrak{b}(2q) . 
$$
Thus, one obtains the representation~\eqref{eq:expression rho shifted} with $\delta T$ given by 
\begin{equation}
\delta T \, = \, \delta T_+ \, + \, \delta T_- \, + \, \delta T_{\text{pert}} \quad \text{and} \quad 
\| \delta T  \|_{ L^{1/\infty}(\mathbb{R}) }\, = \, O\big( \mathfrak{b}(2q) \, + \,   \mathrm{e}^{- 2 \frac{ \pi}{\zeta} q } \big).
\end{equation}

For Item (iii), recall the exact representation for $\tfrac{1}{2} - m  $ given in~\eqref{eq:expression 1/2-m}. Then, by virtue of~\eqref{eq:expression rho shifted} 
and upon using the symmetry of the operator $T$, one gets 
\begin{equation}
	\int\limits_{ [-q,q]^{\text{c}} }^{ } \hspace{-2mm} \rho(\lambda|q) d \lambda  \,  =  \, 
	2\int\limits_{q}^{\infty}  \rho(\lambda) d \lambda \, +\,2  \mathrm{e}^{-   \frac{ \pi}{\zeta} q } \int\limits_{0}^{ \infty} \big( T - \mathfrak{e}\big) (\lambda) d \lambda
	\, + \, 2 \mathrm{e}^{-   \frac{ \pi}{\zeta} q } \int\limits_{q}^{\infty}  \big( T(-q-\lambda) + \delta T(\lambda) \big) d \lambda  . 
\end{equation}
Here, we used that $\mathfrak{e}$ has support on $\mathbb{R}_{+}$. 
The part involving $\delta T$ can be directly estimated when $q \rightarrow + \infty$ to give $O(\mathfrak{b}(2q)+\mathrm{e}^{-2\frac{\pi}{\zeta} q})$ 
Using the explicit expression for $\rho(\lambda)$, it is easy to see that 
$$
\int\limits_{q}^{\infty}  \rho(\lambda) d \lambda = (1 + o(1)) \, \mathrm{e}^{-   \frac{ \pi}{\zeta} q } \int\limits_{0}^{ \infty}  \mathfrak{e}  (\lambda) d \lambda.
$$
Further, by using the integral equation satisfied by T, one gets that 
\begin{align*}
 \int\limits_{q}^{\infty}  T(-q-\lambda) d \lambda \, = \,  \int\limits_{-\infty}^{-2q}   T(\lambda)d \lambda \, = \, \int\limits_{-\infty}^{0}   \int\limits_{0}^{\infty} R(\lambda-2q - \mu) T(\mu) d\mu d\lambda \\
 \, \le \,  \|T\|_{L^{1}(\mathbb{R}) }  \int\limits_{0}^{\infty} R(2q+ \lambda )d\lambda  \, \le \, C \mathfrak{b}(2q) . 
\end{align*}
Thus, by substituting $Q(m)$ in place of $q$ in the above estimates and using that $Q(m)\rightarrow + \infty$ as $m\rightarrow \tfrac{1}{2}$ by virtue of 
Item (iii) of Proposition~\ref{lem:properties rho},  one obtains that 
$$
\tfrac{1}{2} - m \, = \, \mathrm{e}^{ - \frac{ \pi}{\zeta} Q(m) } \frac{ \pi }{ \pi - \zeta }  \bigg\{ \int\limits_{0}^{ \infty}   T   (\lambda) d \lambda  
\, + \,  O\Big( \mathfrak{b}( 2 Q(m) ) \, + \,   \mathrm{e}^{- 2 \frac{ \pi}{\zeta} Q(m) } \Big) \bigg\} . 
$$
Therefore

$$
\lim_{m \rightarrow \frac{1}{2} } \big\{ \tfrac{1}{2} - m  \big\}\mathrm{e}^{  \frac{ \pi}{\zeta} Q(m) }  \; = \; \frac{ \pi }{ \pi - \zeta } \int\limits_{0}^{ \infty}   T   (\lambda) d \lambda . 
$$
The constant on the right-hand side is strictly positive since $T$ is given by a sum of strictly positive terms. 
\end{proof}

\section{Analysis of continuum Bethe equation when $\Delta= -1$}\label{sec:appendix1.5}

\begin{proposition}[Existence of solutions to~\eqref{eq:cBE}]\label{prop:existence rho Delta=-1}
Fix $\Delta= -1$ and $q\ge 0$, there exists a unique solution $x\mapsto \rho(x|q)$ to~\eqref{eq:cBE}. Furthermore,  for every $m\in[0,1/2]$, there exists $Q(m)$ satisfying~\eqref{eq:cBE2}.
\end{proposition}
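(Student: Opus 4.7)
The plan is to mirror the proof of Proposition~\ref{prop:existence rho |Delta|<1} with one essential adjustment: at $\Delta=-1$ the kernel $K(x) = \tfrac{1}{\pi(1+x^2)}$ satisfies $\|K\|_{L^1(\mathbb R)} = 1$ exactly, so the full-line convolution operator is only non-strictly contractive and the Neumann series argument of that proof does not go through verbatim. The fix is to work on the truncated interval from the outset and exploit that for every finite $q$,
\[
\sup_{x \in \mathbb R} \int_{-q}^q K(x-y)\, dy \;=\; \frac{2\arctan q}{\pi} \;<\; 1,
\]
with the supremum attained at $x=0$ by concavity of $\arctan$ on $\mathbb{R}_+$. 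This gives $\|\mathcal{K}_q\|_{L^\infty([-q,q]) \to L^\infty([-q,q])} < 1$ for the operator $\mathcal{K}_q[f](x) := \int_{-q}^q K(x-y) f(y)\, dy$, and the same bound for $L^1([-q,q])$ by symmetry of $K$.

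First I would apply a Neumann series to $\mathrm{id} + \mathcal{K}_q$ to obtain the unique bounded solution $\rho(\cdot|q)$ of \eqref{eq:cBE} restricted to $\lambda \in [-q,q]$, and extend it to $\lambda \in \mathbb R$ by
$\rho(\lambda|q) = \xi(\lambda) - \int_{-q}^q K(\lambda-\mu)\rho(\mu|q)\, d\mu$. Smoothness of $(x,q,\Delta) \mapsto \rho(x|q)$ then follows from smoothness of $K$ and $\xi$ together with the Fredholm representation of the resolvent, exactly as in the $|\Delta|<1$ case.

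For existence of $Q(m)$, I set $m(q) := \int_{-q}^q \rho(\lambda|q)\, d\lambda$, a continuous function of $q$ with $m(0) = 0$; the point is to prove $\lim_{q \to \infty} m(q) = 1/2$. The rewriting \eqref{eq:second continuum equation} remains available at $\Delta = -1$: from the Fourier identities of Appendix~\ref{sec:appendix-1}, the relation $\widehat K / \widehat \xi = \widehat \xi$ yields $\widehat R = \widehat \rho\, \widehat \xi$, so $R = \rho \star \xi \geq 0$ with $\|R\|_{L^1(\mathbb R)} = \widehat R(0) = 1/2$. Consequently the operator $\mathcal U[f](x) := \int_{\mathbb R \setminus [-q,q]} R(x-y) f(y)\, dy$ has $L^\infty$-norm at most $1/2$, and the iteration $\rho(\cdot|q) = \sum_{k \geq 0} \mathcal U^k[\rho]$ converges. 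Since $\rho(y) = (2\cosh(\pi y))^{-1}$ decays exponentially, $\|\mathcal U[\rho]\|_{L^1([-q,q])} = O(q e^{-\pi q})$, so $\int_{-q}^q (\rho(\mu|q) - \rho(\mu))\, d\mu \to 0$ and therefore $m(q) \to \int_{\mathbb R} \rho = \widehat \rho(0) = 1/2$. The intermediate value theorem gives $Q(m)$ for every $m \in [0, 1/2)$; the value $Q(1/2) = +\infty$ is the limiting case, consistent with the text's convention that $Q(1/2) = \infty$ when $|\Delta| \leq 1$.

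The main technical obstacle is the degeneration of the contraction $\tfrac{2\arctan q}{\pi} \to 1$ as $q \to \infty$, which would spoil any $q$-uniform Neumann-series bound on $\rho(\cdot|q)$ derived directly from $\mathcal{K}_q$. This is circumvented by passing to the Fredholm rewriting \eqref{eq:second continuum equation}, whose resolvent kernel $R$ obeys the uniform bound $\|R\|_{L^1} = 1/2$: the strict sub-criticality of $\widehat R(0)$ precisely compensates for the critical value $\widehat K(0) = 1$ and is the only real novelty relative to the $|\Delta|<1$ argument.
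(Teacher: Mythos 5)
Your proposal is correct, and it reaches the same two key facts as the paper --- the failure of the naive bound $\|\mathcal K\|\le\|K\|_{L^1}=1$ at $\Delta=-1$, and the rescue via the resolvent rewriting~\eqref{eq:second continuum equation} with $\|R\|_{L^1(\mathbb R)}=\widehat R(0)=\tfrac12$ --- but it organizes them differently. The paper's (very terse) proof uses the rewriting~\eqref{eq:second continuum equation} and the bound $\|\mathcal U\|\le\tfrac12$ already for the existence and uniqueness of $\rho(\cdot|q)$, and then says "the rest of the proof is the same'' for $Q(m)$. You instead get existence and uniqueness from the truncated operator $\mathcal K_q$ directly, exploiting that $\sup_x\int_{-q}^qK(x-y)\,dy=\tfrac{2}{\pi}\arctan q<1$ for each finite $q$ (a correct and more elementary observation, though the contraction constant degenerates as $q\to\infty$), and you reserve the resolvent form only for the limit $m(q)\to\tfrac12$ needed in the intermediate value argument. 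Your route has the virtue of making the $q\to\infty$ step explicit --- the identity $\widehat R=\widehat\rho\,\widehat\xi$ giving $R=\rho\conv\xi\ge0$ and the estimate $\|\rho(\cdot|q)-\rho\|_{L^1}=O(e^{-\pi q})$ are exactly what is needed to justify the paper's assertion that $\int_{-q}^q\rho(\cdot|q)$ interpolates continuously between $0$ and $\tfrac12$ --- whereas the paper's single-operator approach is shorter and gives $q$-uniform control in one stroke. Both are valid; no gap.
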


\begin{proof}
The proof valid for $|\Delta|<1$ does not generalise directly since estimating that the operator $\mathcal K$ has $\|\cdot\|_{L^1\rightarrow L^1}$-norm strictly less than 1 demands 
more effort, see~\cite{KK}. However, by working with~\eqref{eq:second continuum equation} instead of~\eqref{eq:cBE}, 
one readily checks that the fact that the operator $\mathcal U$ has norm smaller than $\tfrac12$, which gives the existence of solutions. The rest of the proof is the same. 
\end{proof}

The following proposition gives the necessary properties for the proof of Theorem~\ref{thm:quantitative} when $\Delta= -1$. The proof is the same as for $|\Delta|<1$.
  
\begin{proposition}\label{lem:asymptotic Delta=1}
	Fix $\Delta= -1$. There exists $C>0$ such that 
    \begin{itemize}[noitemsep]
        \item[ {\rm(i)}] There exists a unique solution $T\in L^1(\mathbb R)$ of the functional  equation 
        \begin{equation} 
    	    T(x) -\int\limits_0^\infty  R_{-1}(x-y)T(y)dy  =  \mathfrak{e}(x) \equiv \mathrm{e}^{- x \pi}   \pmb{\mathbbm{1}}_{\mathbb{R}_+}(x). 
        \end{equation}
        \item[ {\rm(ii)}] We have that 
        \begin{equation}\label{eq:expression rho shifted XXX}
        	\rho(x|q) \, =\, \rho(x) + \mathrm{e}^{-q\pi} \big[ (T-\mathfrak{e})(q-x)+(T-\mathfrak{e})(-q-x) + \delta T(x) \big] \, ,
        \end{equation} where $\|\delta T\|_{L^{\infty}(\mathbb{R})}+\|\delta T\|_{L^{1}(\mathbb{R})}\le Ce^{-2q}$. \vspace{2mm}
         \item[{\rm(iii)}] 
        $\displaystyle \lim_{m\rightarrow 1/2} ( \tfrac{1}{2}-m) \mathrm{e}^{ Q(m)\pi }$  exists and belongs to $(0,\infty)$. 
	\end{itemize}
\end{proposition}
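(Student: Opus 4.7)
The plan is to transcribe the proof of Proposition~\ref{lem:asymptotic |Delta|<1} verbatim, replacing each quantity by its $\Delta=-1$ counterpart listed in Appendix~\ref{sec:appendix-1}. The only facts of the ambient setting that were used in the $|\Delta|<1$ case, and which I will need to re-verify, are (a) positivity of the resolvent kernel $R_{-1}$, (b) that the operators obtained by truncating an integral against $R_{-1}$ to a half-line have $L^1$- and $L^\infty$-norms strictly smaller than $1/2$, and (c) exponential decay of $R_{-1}$ at infinity with rate $\pi$. All three will be read off the explicit formulas $\widehat{R}_{-1}(t)=\mathrm{e}^{-|t|}/(1+\mathrm{e}^{-|t|})$, $\rho(x)=1/(2\cosh\pi x)$, and the representation $R=\rho\star F$ analogous to~\eqref{ecriture R comme convolution}, which still yields $R_{-1}\ge 0$ after a residue computation of the inverse Fourier transform of $\widehat{K}_{-1}/\widehat{\xi}_{-1}=\mathrm{e}^{-|t|/2}$.

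For Item (i), I will introduce $\mathcal{V}[f](x):=\int_0^\infty R_{-1}(x-y)f(y)\,dy$ on $L^1(\mathbb R)\cap L^\infty(\mathbb R)$ and use $\|R_{-1}\|_{L^1(\mathbb R)}=\widehat{R}_{-1}(0)=1/2$ together with $R_{-1}\ge 0$ to get that both the $L^1\to L^1$ and $L^\infty\to L^\infty$ operator norms of $\mathcal V$ are at most $1/2$. The Neumann series $T=\sum_{k\ge 0}\mathcal V^k[\mathfrak e]$ then converges and gives existence and uniqueness of $T\in L^1\cap L^\infty$.

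For Item (ii), I will follow the same splitting $\mathcal U=\mathcal U_++\mathcal U_-$ with $\mathcal U_\pm=\check\tau_\pm\mathcal V\tau_\pm$, and use the analogous pointwise expansion
\[
\rho(\pm q\pm x)=\mathrm{e}^{-\pi q}\mathfrak e(x)\bigl(1+\delta\mathfrak e(x)\bigr),\qquad \delta\mathfrak e(x):=\frac{-\mathrm{e}^{-2\pi(q+x)}}{1+\mathrm{e}^{-2\pi(q+x)}},
\]
valid uniformly for $x\ge 0$. Summing $\sum_{k\ge 1}\mathcal U_\pm^k[\rho]$ produces the announced $\mathrm{e}^{-\pi q}(T-\mathfrak e)$ pieces up to an $O(\mathrm{e}^{-2\pi q})$ error, while the mixed terms $\mathcal U_{\eps_1}\cdots\mathcal U_{\eps_k}[\rho]$ that contain at least one sign change inherit the factor $\|R_{-1}(\cdot\pm 2q)\|_{L^1}=O(\mathrm{e}^{-2\pi q})$ from the single $\mathcal U_+\mathcal U_-$ or $\mathcal U_-\mathcal U_+$ block, since $R_{-1}(\lambda)=O(\mathrm{e}^{-\pi|\lambda|})$ by residues. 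Summing the geometric series in $k$ against the contraction constant $1/2$ gives $\|\delta T\|_{L^1}+\|\delta T\|_{L^\infty}=O(\mathrm{e}^{-2\pi q})$, exactly as in the proof for $|\Delta|<1$.

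For Item (iii), I will integrate~\eqref{eq:second continuum equation} over $\mathbb R$ (using $\widehat R_{-1}(0)=1/2$) to obtain the analogue of~\eqref{eq:expression 1/2-m},
\[
\tfrac{1}{2}-m\;=\;\tfrac{1}{2}\int_{\mathbb R\setminus[-Q(m),Q(m)]}\rho\bigl(\lambda\big|Q(m)\bigr)\,d\lambda,
\]
and then plug in the expansion of Item (ii). The three resulting pieces are handled as in the proof for $|\Delta|<1$: $\int_q^\infty\rho(\lambda)\,d\lambda\sim \mathrm{e}^{-\pi q}\int_0^\infty\mathfrak e$; $\int_q^\infty T(-q-\lambda)\,d\lambda=\int_{-\infty}^{-2q}T$, which a second application of the integral equation for $T$ bounds by $O(\mathrm{e}^{-2\pi q})$; and $\int_q^\infty\delta T(\lambda)\,d\lambda=O(\mathrm{e}^{-2q})$. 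Coupling these with $Q(m)\to\infty$ as $m\to 1/2$ (from Proposition~\ref{lem:properties rho}(iii) in the $\zeta\to 0$ limit, or equivalently from the analogue for $\Delta=-1$) yields
\[
\lim_{m\to 1/2}\bigl(\tfrac{1}{2}-m\bigr)\mathrm{e}^{\pi Q(m)}\;=\;\int_0^\infty T(\lambda)\,d\lambda,
\]
which is strictly positive since the Neumann representation shows $T>0$. The main technical point, and the only place one has to think, is verifying the $L^1$-decay of $R_{-1}$ at infinity at rate $\pi$ so that the $\mathcal U_+\mathcal U_-$ block really contracts by $\mathrm{e}^{-2\pi q}$; since this is a direct residue computation for the explicit $\widehat R_{-1}$, no genuinely new obstacle arises.
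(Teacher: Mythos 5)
Your overall route is the one the paper itself takes (it simply declares the proof to be the transcription of Proposition~\ref{lem:asymptotic |Delta|<1}), and items such as the positivity of $R_{-1}$, the norm bound $\|\mathcal V\|\le \widehat R_{-1}(0)=1/2$, and the reduction of (iii) to the expansion of (ii) are all transcribed correctly. However, the one point you single out as ``the only place one has to think'' is exactly where your argument fails: $R_{-1}$ does \emph{not} decay exponentially. Its Fourier transform $\widehat R_{-1}(t)=\mathrm e^{-|t|}/(1+\mathrm e^{-|t|})=\tfrac12-\tfrac{|t|}{4}+O(t^2)$ has a corner at $t=0$, so it is not the restriction of a function analytic in any strip and no contour shift or residue computation is available. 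Equivalently, $\widehat R_{-1}=\widehat\rho\,\widehat\xi$, so $R_{-1}=\rho\conv\xi$ with $\xi(x)=\tfrac{2}{\pi(1+4x^2)}$ carrying only a quadratic tail; one finds $R_{-1}(\lambda)\sim \tfrac{1}{4\pi\lambda^2}$. (This is consistent with the $|\Delta|<1$ case: the decay rate $\tfrac{2\pi}{\pi-\zeta}$ of $R$ there becomes, after the rescaling by $\zeta$ implicit in the $\Delta\to-1$ limit, a rate tending to $0$.) Consequently $\sup_{|\lambda|\ge 2q}R_{-1}(\lambda)=O(q^{-2})$ and $\int_{2q}^{\infty}R_{-1}=O(q^{-1})$, not $O(\mathrm e^{-2\pi q})$, so your bounds on the mixed blocks $\mathcal U_+\mathcal U_-$ and on $\int_{-\infty}^{-2q}T$ are wrong as stated.

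The statement itself survives, but the bookkeeping must be redone: in (ii) the polynomial factor always multiplies the prefactor $\mathrm e^{-\pi q}$ coming from the first action on $\rho$, and $\mathrm e^{-\pi q}q^{-2}=o(\mathrm e^{-2q})$ because $\pi>2$; in (iii) the offending terms enter as $\mathrm e^{-\pi q}\cdot o(1)$ against a main term of order $\mathrm e^{-\pi q}$, so only $o(1)$ is needed. You should say this explicitly rather than assert exponential decay. A second, smaller issue is that at $\Delta=-1$ the contraction constant is exactly $\varkappa=1/2$, so the geometric series $\sum_k(2\varkappa)^k$ that the paper uses to sum the $2^k$ mixed strings in $\delta T_{\mathrm{pert}}$ diverges; the resummation has to be organised differently (for instance via resolvent identities isolating a single $\mathcal U_+\mathcal U_-$ block between bounded resolvents $(\idmap-\mathcal U_\pm)^{-1}$, each of norm at most $2$), which your appeal to ``summing the geometric series against the contraction constant $1/2$'' glosses over.
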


Note that, above, $R_{-1}$ stands for the resolvent kernel to the operator ${\rm Id} + \mathcal{K} $ at $\Delta=-1$ and acting on $L^2(\mathbb{R})$. 

\section{Computations for $\Delta < -1$}\label{sec:appendix2}

\begin{proposition}\label{prop:existence rho Delta>1}
For every $\Delta < -1$ and $q\in[0,\pi/2]$, there exists a unique solution $x\mapsto \rho(x|q)$ to~\eqref{eq:cBE}. Furthermore, there exists $Q(m)$ satisfying~\eqref{eq:cBE2}.
\end{proposition}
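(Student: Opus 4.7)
The plan is to establish existence and uniqueness of $\rho(\cdot|q)$ by the Fredholm alternative, exploiting positive-definiteness of the kernel $K$, and then to extract $Q(m)$ by applying the intermediate value theorem to $q\mapsto m(q):=\int_{-q}^q \rho(\mu|q)\,d\mu$ on $[0,\pi/2]$.

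For existence and uniqueness, I would first restrict \eqref{eq:cBE} to $\lambda\in[-q,q]$ and regard $\mathcal{K}_q$ as the integral operator on $L^2([-q,q])$ with kernel $K(\lambda-\mu)$. Since $K$ is smooth, $\pi$-periodic, and has strictly positive Fourier coefficients $\widehat K(n)=\mathrm{e}^{-2|n|\zeta}$ (see Appendix~\ref{sec:appendix0}), it is positive definite; denoting by $f_0$ the extension of $f\in L^2([-q,q])$ by $0$ to one full period, Parseval gives
\begin{equation*}
\langle f,\mathcal{K}_q f\rangle \;=\; \tfrac{1}{\pi}\sum_{n\in\mathbb Z} \widehat K(n)\,|\widehat{f_0}(n)|^2 \;\ge\; 0.
\end{equation*}
Combined with compactness of $\mathcal{K}_q$ (continuous kernel on a compact interval), this forces $\mathrm{Id}+\mathcal{K}_q$ to be invertible on $L^2([-q,q])$, yielding a unique $\rho_q$ solving the restriction of \eqref{eq:cBE} to $[-q,q]$. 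I would then extend to $\lambda\in\mathbb R$ by setting
\begin{equation*}
\rho(\lambda|q)\;:=\;\xi(\lambda)\,-\,\int_{-q}^q K(\lambda-\mu)\,\rho_q(\mu)\,d\mu,
\end{equation*}
so that smoothness of $K,\xi$ transfers to $\rho(\cdot|q)$, and the extended function automatically satisfies \eqref{eq:cBE} on all of~$\mathbb R$.

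To obtain $Q(m)$, I would evaluate $m$ at the endpoints of $[0,\pi/2]$. Trivially $m(0)=0$. At $q=\pi/2$ one integrates over a full period of the $\pi$-periodic functions $K$ and $\xi$, so $\rho(\cdot|\pi/2)$ is itself $\pi$-periodic and \eqref{eq:cBE} is diagonalised by the periodic Fourier series of Appendix~\ref{sec:appendix0}: one gets $\widehat{\rho(\cdot|\pi/2)}(n)=\widehat\xi(n)/(1+\widehat K(n))=1/(2\cosh(n\zeta))=\widehat\rho(n)$, hence $\rho(\cdot|\pi/2)=\rho$ and $m(\pi/2)=\widehat\rho(0)=1/2$. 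Continuity of $q\mapsto m(q)$ on $[0,\pi/2]$ follows from continuity of $q\mapsto(\mathrm{Id}+\mathcal{K}_q)^{-1}$ in operator norm, which I obtain by embedding all the $\mathcal{K}_q$ into the single space $L^2([-\pi/2,\pi/2])$ via truncated kernels $\mathbf{1}_{[-q,q]}(\mu)K(\lambda-\mu)$. These vary continuously in $q$ in the Hilbert--Schmidt norm, and the Fredholm series for the resolvent transfers this continuity to the inverses. Applying the intermediate value theorem on $[0,\pi/2]$ then yields $Q(m)$ for every $m\in[0,1/2]$.

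The only mildly technical point is the continuity of the family of resolvents; the rest is routine. Uniqueness of $Q(m)$, which is not claimed in the proposition, can be deduced afterwards from strict positivity of $\rho(\cdot|q)$---established as in Appendix~\ref{sec:appendix1} via a Neumann-series representation $\rho(\cdot|q)=\sum_{k\ge 0}\mathcal U^k[\rho]$ with $\mathcal U$ having nonnegative kernel $R\cdot\mathbf 1_{I\setminus[-q,q]}$---which in turn gives strict monotonicity of $m$.
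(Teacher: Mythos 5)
Your proposal is correct, but the key step is handled by a genuinely different mechanism than in the paper. For the invertibility, the paper deliberately avoids working with $\mathcal K$ directly on $[-q,q]$ (since $\|K\|_{L^1([-\pi/2,\pi/2])}=\widehat K(0)=1$, the naive contraction bound fails) and instead passes to the resolvent form~\eqref{eq:second continuum equation}: there the relevant operator has kernel built from $R$, with $\|R\|_{L^1([-\pi/2,\pi/2])}=\widehat R(0)=\tfrac12<1$, so existence and uniqueness follow from a Neumann series exactly as in the $\Delta=-1$ case. You instead stay with~\eqref{eq:cBE} and exploit that $\widehat K(n)=\mathrm e^{-2|n|\zeta}>0$ makes $\mathcal K_q$ a positive semi-definite, self-adjoint compact operator on $L^2([-q,q])$, so that $\mathrm{Id}+\mathcal K_q\ge\mathrm{Id}$ is invertible; this is a clean and equally valid route (and it sidesteps the $L^1$ issue the paper flags explicitly), at the cost of phrasing uniqueness in $L^2$ rather than $L^1\cap L^\infty$ — harmless here, since the restriction--extension argument you give identifies the two settings. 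For $Q(m)$ both arguments are the same in substance (endpoint values $m(0)=0$, $m(\pi/2)=\widehat\rho(0)=\tfrac12$ via the periodic Fourier diagonalisation, plus continuity in $q$ and the intermediate value theorem); you simply spell out the continuity of $q\mapsto(\mathrm{Id}+\mathcal K_q)^{-1}$ via Hilbert--Schmidt perturbation, where the paper refers back to the smoothness of $(x,q)\mapsto\rho(x|q)$ established in the $|\Delta|<1$ case. One small remark: positive semi-definiteness of the self-adjoint $\mathcal K_q$ already bounds $\mathrm{Id}+\mathcal K_q$ below by the identity, so compactness is not actually needed for invertibility, only for regularity statements.
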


 \begin{proof}
 
 In order to circumvent the problems with estimating $\|\mathcal K\|_{ L^1([-q,q]) \rightarrow L^1([-q,q])}$, we again work with the alternative representation of the integral equation~\eqref{eq:second continuum equation}
in which 
$$
R(\lambda) \, =\, \frac{1}{2\pi} \sum\limits_{ n \in \mathbb{Z} }{}   \frac{ \mathrm{e}^{2i \lambda - |n|\zeta} }{ \cosh(n \zeta)  } . 
$$
Then, it remains to use that $\|R \|_{L^{1}([-\pi/2,\pi/2]}=1/2$ so as to conclude as in the $\Delta=-1$. Finally, the existence of $Q(m)$ follows from the same argument as in the $|\Delta|<1$ case.
\end{proof}

\begin{lemma}\label{lemme borne sur density}
It holds  $\rho(x|q)\ge \rho(x)\ge \frac{1}{2\zeta}$
\end{lemma}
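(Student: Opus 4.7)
My plan is to handle the two inequalities separately, each via a different mechanism.

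For the first inequality $\rho(x|q) \geq \rho(x)$, I would start from the resolvent-form integral equation~\eqref{eq:second continuum equation}, which for $\Delta<-1$ takes the shape
\[
\rho(x|q) \,=\, \rho(x) \,+\, \int_{I\setminus[-q,q]}R(x-\mu)\,\rho(\mu|q)\,d\mu, \qquad I=[-\tfrac{\pi}{2},\tfrac{\pi}{2}],
\]
and reduce everything to positivity of the resolvent kernel $R$. From the Fourier data in Appendix~\ref{sec:appendix0} one reads off $\widehat R(n)/\widehat\rho(n) = \widehat K(n)/\widehat\xi(n) = e^{-|n|\zeta} = \widehat\xi(n)$, exhibiting $R$ as the circular convolution $R=\xi\star\rho$. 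Both $\xi$ and $\rho$ are manifestly non-negative from their closed-form expressions, so $R\geq 0$ on $I$. Iterating the equation as a Neumann series,
\[
\rho(x|q)\,=\,\sum_{k\geq 0}\mathcal U^{k}[\rho](x),\qquad \mathcal U[f](x)\,:=\,\int_{I\setminus[-q,q]}\!R(x-\mu)\,f(\mu)\,d\mu,
\]
every summand is non-negative and the $k=0$ term is exactly $\rho(x)$, which yields the desired inequality. The contractivity $\|\mathcal U\|_{L^2\to L^2}<1$ needed to justify the Neumann expansion is precisely the estimate $\|R\|_{L^1([-\pi/2,\pi/2])}=\widehat R(0)<1/2$ already used in Proposition~\ref{prop:existence rho Delta>1}.

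For the second inequality $\rho(x)\geq \tfrac{1}{2\zeta}$, I would use the Poisson-resummed representation
\[
\rho(x)\,=\,\frac{1}{2\zeta}\sum_{n\in\mathbb Z}\frac{1}{\cosh\!\big(\pi(\pi n-x)/\zeta\big)}
\]
from Appendix~\ref{sec:appendix0}. Since every summand is strictly positive, the claim reduces to showing that the theta-like sum $S(x):=\sum_{n}\cosh^{-1}(\pi(\pi n-x)/\zeta)$ is at least $1$ for $x\in[-\pi/2,\pi/2]$. Near $x=0$ the $n=0$ term already almost saturates this, and the remaining terms strengthen the bound; the essential task is to handle $x$ near the endpoints $\pm \pi/2$, where by symmetry the two nearest summands ($n=0$ and $n=\pm 1$) have arguments summing to $\pi^2/\zeta$ and must be treated jointly together with the remaining tail.

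The main obstacle is precisely this uniform control of $S(x)$ at the endpoints $x=\pm\pi/2$: no single summand dominates there, and one must combine the two nearest contributions with the positive tails of the theta series to force the sum above~$1$. Once that uniform lower bound is in place, the two inequalities combine into the statement of the lemma.
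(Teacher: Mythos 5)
Your proof of the first inequality $\rho(x|q)\ge\rho(x)$ is correct and coincides with the paper's argument: from $\widehat R(n)=\widehat\rho(n)\,\widehat\xi(n)$ one sees that $R$ is the periodic convolution of the two positive functions $\rho$ and $\xi$, hence $R\ge 0$, and the Neumann series for~\eqref{eq:second continuum equation} has non-negative terms whose zeroth term is exactly $\rho$. (One small slip: $\widehat R(0)=\widehat K(0)/(1+\widehat K(0))$ equals $\tfrac12$ exactly rather than being $<\tfrac12$; the series still converges because $\|\mathcal U\|\le\|R\|_{L^1([-\pi/2,\pi/2])}=\tfrac12<1$.)

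The second inequality $\rho(x)\ge\tfrac1{2\zeta}$ is where your proposal has a genuine gap: you reduce it to showing $S(x):=\sum_{n}1/\cosh\big(\pi(\pi n-x)/\zeta\big)\ge 1$ and then leave the behaviour near $x=\pm\pi/2$ as an unresolved ``obstacle''. Your unease is well founded, and no combination of the two nearest summands with the positive tail can close the gap, because the inequality fails there for a whole range of $\zeta$. Indeed $\int_{-\pi/2}^{\pi/2}\rho(x)\,dx=\widehat\rho(0)=\tfrac12$, so a pointwise bound $\rho\ge\tfrac1{2\zeta}$ on all of $[-\pi/2,\pi/2]$ would force $\tfrac{\pi}{2\zeta}\le\tfrac12$, i.e.\ $\zeta\ge\pi$; concretely, for $\zeta=1$ (so $\Delta=-\cosh 1$) one computes $S(\pi/2)=2\sum_{k\ge1}1/\cosh\big(\pi^2(k-\tfrac12)\big)\approx 0.03$, whence $\rho(\pi/2)\approx 0.014\ll\tfrac1{2\zeta}=\tfrac12$. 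For comparison, the paper itself dispatches this half of the lemma with the single assertion that it ``follows from the explicit expression'', so you have not overlooked a hidden argument; but as written your proposal does not establish the second inequality, and what the positivity of the theta series actually yields is only the $n=0$ term, namely $\rho(x)\ge\tfrac{1}{2\zeta\cosh(\pi x/\zeta)}$, which degenerates precisely at the endpoints where the bound is invoked in~\eqref{bornes espacement des Lambda bornants}.
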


\begin{proof}  
The explicit expression for $\rho$ implies that  $\rho(x)\ge \frac{1}{2\zeta}$. It  is thus enough to  establish the upper bound. 
This follows from an expression similar to~\eqref{eq:infinite sum} for $\Delta<-1$ and the fact that $R\ge0$. To check the latter, note that
\[
\widehat R(n)=\widehat \rho(n)\frac{\widehat K(n)}{\widehat \xi(n)}=\widehat\rho(n)\widehat\xi(n)=\frac{e^{-2|n|\zeta}}{1+e^{-2|n|\zeta}}\]
so that $R$ is the convolution of $\rho$ and $\xi$ which are both positive.
\end{proof}

\begin{lemma}\label{lem:aa}
For $\Delta<-1$, we get
\[
\lim_{m\rightarrow 1/2}\frac{1-2m}{\pi-2 Q (m) } = \rho( \tfrac\pi2 ).
\] 
\end{lemma}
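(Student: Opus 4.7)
\medskip

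The plan is to derive an exact identity for $\tfrac{1}{2}-m$ and then extract the limit by a continuity argument as $q = Q(m)$ approaches $\pi/2$. Concretely, I will start from the resolvent form \eqref{eq:second continuum equation} of the continuum Bethe equation, namely $\rho(\lambda|q) = \rho(\lambda) + \int_{I\setminus[-q,q]} R(\lambda-\mu)\rho(\mu|q)\,d\mu$ for $\lambda \in I := [-\pi/2,\pi/2]$, and integrate over $I$. Using $\int_{I}\rho(\lambda)\,d\lambda = \tfrac12$ (since $\rho = \rho(\cdot|Q(\tfrac12))$ and $Q(\tfrac12)=\pi/2$) and the Fubini rearrangement
\[
\int_I d\lambda \int_{I\setminus[-q,q]} R(\lambda-\mu)\rho(\mu|q)\,d\mu
= \int_{I\setminus[-q,q]} \rho(\mu|q)\Big(\int_I R(\lambda-\mu)\,d\lambda\Big)\,d\mu,
\]
combined with $\int_I R(\lambda-\mu)d\lambda = \widehat R(0) = \widehat K(0)/(1+\widehat K(0)) = 1/2$ (from the Fourier coefficients listed in Appendix~\ref{sec:appendix0} and $\pi$-periodicity of $R$), one obtains
\[
\int_I \rho(\lambda|q)\,d\lambda \;=\; \tfrac12 + \tfrac12\!\!\int_{I\setminus[-q,q]}\!\!\rho(\mu|q)\,d\mu.
\]
Subtracting $m = \int_{-q}^{q}\rho(\lambda|q)\,d\lambda$ yields the clean identity
\[
\tfrac12 - m \;=\; \tfrac12\!\!\int_{I\setminus[-q,q]}\!\!\rho(\mu|q)\,d\mu, \qquad q = Q(m).
\]

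The second step is to pass to the limit $m \nearrow \tfrac12$. Since $Q$ is continuous and $Q(\tfrac12)=\pi/2$, we have $q \to \pi/2$, and the domain $I\setminus[-q,q]$ has Lebesgue measure $\pi - 2q \to 0$. I will therefore show that $\rho(\cdot|q) \to \rho(\cdot)$ uniformly on $I$ as $q \to \pi/2$. This follows directly from \eqref{eq:second continuum equation}: the pointwise bound
\[
|\rho(\lambda|q) - \rho(\lambda)| \;\leq\; \|R\|_{L^\infty(I)}\cdot\|\rho(\cdot|q)\|_{L^\infty(I)}\cdot(\pi - 2q),
\]
together with the uniform bound $\|\rho(\cdot|q)\|_{L^\infty(I)} \leq \|\rho\|_{L^\infty(I)} + \rho(\pi/2)$ already available (an easy analogue of Proposition~\ref{lem:properties rho}(i) for $\Delta<-1$, obtained by iterating the resolvent equation as in \eqref{eq:infinite sum} and using $R\ge 0$, the latter being verified in Lemma~\ref{lemme borne sur density}), shows that the error vanishes as $q\to \pi/2$.

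With uniform convergence in hand, the integrand $\rho(\mu|q)$ on the shrinking annulus $I\setminus[-q,q]$ is $(1+o(1))\rho(\pi/2)$ as $q \to \pi/2$ (invoking also the continuity of $\rho$ at $\pi/2$). Therefore
\[
\tfrac12 - m \;=\; \tfrac12(\pi - 2q)\rho(\tfrac{\pi}{2})\bigl(1+o(1)\bigr),
\]
which upon dividing by $\tfrac12(\pi - 2 Q(m))$ and letting $m \to \tfrac12$ gives exactly $\lim_{m\to 1/2}\frac{1-2m}{\pi - 2Q(m)} = \rho(\tfrac{\pi}{2})$.

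The only mildly delicate step is the uniform-in-$q$ $L^\infty$ bound on $\rho(\cdot|q)$, but this is a direct consequence of the Neumann series representation together with the positivity of $R$ and the fact that the relevant operator has operator norm $\widehat R(0) = 1/2 < 1$, as was already used in Proposition~\ref{prop:existence rho Delta>1}. Everything else is elementary manipulation of the integral equation.
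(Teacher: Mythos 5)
Your proof is correct and follows essentially the same route as the paper: both rest on the resolvent form \eqref{eq:second continuum equation}, the value $\|R\|_{L^1([-\pi/2,\pi/2])}=\widehat R(0)=\tfrac12$, and the continuity of $\rho$ near $\pm\pi/2$, the only difference being that you first extract the exact identity $\tfrac12-m=\tfrac12\int_{[-q,q]^{\mathrm c}}\rho(\mu|q)\,d\mu$ (as in the $|\Delta|<1$ analogue \eqref{eq:expression 1/2-m}) before doing asymptotics, whereas the paper expands both terms asymptotically at once. Your explicit uniform bound on $\rho(\cdot|q)-\rho$ is a welcome tightening of the paper's appeal to "continuity of $\rho$", but it does not change the substance of the argument.
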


\begin{proof}

Equation~\eqref{eq:second continuum equation} and the observation that $\|R\|_{L^1([-\frac{\pi}{2}, \frac{\pi}{2}])}=\tfrac12$ along with the continuity of $\rho$ give

\begin{align*}
\tfrac12-m&=\int\limits_{-\frac{\pi}{2} }^{ \frac{\pi}{2} }\rho(x)dx-\int\limits_{-Q (m) }^{ Q (m) }\rho(x| Q (m) )dx  \\
&= \big( \pi-2Q (m) \big)\rho(\tfrac\pi2)[1+o(1)] - \int\limits_{-Q (m)}^{Q (m)}  \int\limits_{ [-Q (m), Q (m)]^{\mathrm{c}}  }  \hspace{-3mm}
\rho(y|Q (m))R(x-y)\, dy\, dx.  \\
&= \big( \tfrac\pi2 - Q (m) \big) \rho(\tfrac\pi2)[1+o(1)].
 \end{align*}
where we set $[-Q (m), Q (m)]^{\mathrm{c}}:
= [-\pi/2, \pi/2]\setminus [-Q (m), Q (m)]$.

\end{proof}

\bibliographystyle{abbrv}
\def\cprime{$'$}

\end{document}